\pdfoutput=1
\RequirePackage{ifpdf}
\ifpdf % We are running pdfTeX in pdf mode
\documentclass[pdftex]{sigma}%,draft
\else
\documentclass{sigma}
\fi

\numberwithin{equation}{section}

\newtheorem{Theorem}{Theorem}[section]
\newtheorem{Corollary}[Theorem]{Corollary}
\newtheorem{Lemma}[Theorem]{Lemma}
\newtheorem{Proposition}[Theorem]{Proposition}
 { \theoremstyle{definition}

 }

\usepackage{mathrsfs}
\usepackage{euscript}
\usepackage{mathtools} % extension of amsmath that also fixes two bugs

\usepackage{silence} %to use warningfilter
\WarningFilter{latex}{Command} % silence the warning from everything that starts with Command

\usepackage{tikz} %bilder
	\usetikzlibrary{arrows} % needed for big arrow tips, "-implies"
	\usetikzlibrary{decorations.markings} % placing arrows halfway lines
	\usetikzlibrary{patterns} % hatch for wall
	\usetikzlibrary{calc} % for calculations

\usepackage[caption=false, font=small]{subfig}

 %Reella tal
\def\Z{\mathbb{Z}} %Heltal
 %Rationella tal
\def\C{\mathbb{C}} %Komplexa tal
 %Naturliga tal
\def\Im{\textup{Im}}

\begin{document}

\allowdisplaybreaks

\newcommand{\arXivNumber}{2004.09924}

\renewcommand{\thefootnote}{}

\renewcommand{\PaperNumber}{101}

\FirstPageHeading

\ShortArticleName{A Combinatorial Description of Certain Polynomials Related to the XYZ Spin Chain}

\ArticleName{A Combinatorial Description of Certain\\ Polynomials Related to the XYZ Spin Chain\footnote{This paper is a contribution to the Special Issue on Elliptic Integrable Systems, Special Functions and Quantum Field Theory. The full collection is available at \href{https://www.emis.de/journals/SIGMA/elliptic-integrable-systems.html}{https://www.emis.de/journals/SIGMA/elliptic-integrable-systems.html}}}

\Author{Linnea HIETALA}

\AuthorNameForHeading{L.~Hietala}

\Address{Department of Mathematics, Chalmers University of Technology\\ and University of Gothenburg, 412 96 Gothenburg, Sweden}
\Email{\href{mailto:linnea.hietala@gu.se}{linnea.hietala@gu.se}}

\ArticleDates{Received April 22, 2020, in final form September 24, 2020; Published online October 07, 2020}

\Abstract{We study the connection between the three-color model and the polynomials~$q_n(z)$ of Bazhanov and Mangazeev, which appear in the eigenvectors of the Hamiltonian of the XYZ spin chain. By specializing the parameters in the partition function of the 8VSOS model with DWBC and reflecting end, we find an explicit combinatorial expression for $q_n(z)$ in terms of the partition function of the three-color model with the same boundary conditions. Bazhanov and Mangazeev conjectured that $q_n(z)$ has positive integer coefficients. We prove the weaker statement that $q_n(z+1)$ and $(z+1)^{n(n+1)}q_n(1/(z+1))$ have positive integer coefficients. Furthermore, for the three-color model, we find some results on the number of states with a given number of faces of each color, and we compute strict bounds for the possible number of faces of each color.}

\Keywords{eight-vertex SOS model; domain wall boundary conditions; reflecting end; three-color model; partition function; XYZ spin chain; polynomials; positive coefficients}

\Classification{82B23; 05A15; 33E17}

\tikzset{midarrow/.style={
 decoration={markings,
 mark= at position 0.5 with {\arrow{#1}} ,
 },
 postaction={decorate}
		}
}

\renewcommand{\thefootnote}{\arabic{footnote}}
\setcounter{footnote}{0}

\section{Introduction}
The first example of a six-vertex (6V) model was introduced to describe ice. In this original ice-model, all vertex types, and thus all states, have the same weight. This and some other special cases of the 6V model were solved in 1967 by Lieb~\cite{Lieb1967}. The same year, Sutherland~\cite{Sutherland1967} solved the general 6V model. Lenard~\cite{Lieb1967} (note added in proof) found a bijection from the states of the 6V model to three-colorings of a square lattice such that no adjacent squares have the same color and with the color in one corner fixed. Baxter~\cite{Baxter1970} introduced the three-color model by assigning a weight to each color.

One of the first nontrivial examples of fixed boundaries were the domain wall boundary conditions (DWBC)~\cite{Korepin1982}. There are several ways to describe the 6V model with DWBC, for example with alternating sign matrices (ASMs) or height matrices (see, e.g.,~\cite{Propp2002}).
In 1996, Zeilberger~\cite{Zeilberger1996} proved the alternating sign matrix conjecture of Mills, Robbins and Rumsey~\cite{MillsRobbinsRumsey1983}, which gives a formula for the number of ASMs. Izergin~\cite{Izergin1987, IzerginCokerKorepin1992} showed that the partition function of the 6V model with DWBC can be expressed as a determinant, which Kuperberg~\cite{Kuperberg1996} used to give another proof of the alternating sign matrix conjecture.

The eight-vertex (8V) model is a generalization of the 6V model. To solve the 8V model, Baxter~\cite{Baxter1973} introduced the eight-vertex solid-on-solid (8VSOS) model, which is a two parameter generalization of the 6V model. The name is a bit misleading, since it has only six different local states, and therefore the 8VSOS model is also called the elliptic SOS model.

Tsuchiya \cite{Tsuchiya1998} obtained a determinant formula for the partition function of the 6V model with one reflecting end and DWBC on the three other sides. Kuperberg~\cite{Kuperberg2002} used this to enumerate the corresponding UASMs, which are alternating sign matrices, with U-turns on one side. The UASMs generalize the vertically symmetric alternating sign matrices (VSASMs). In~2011, Filali~\cite{Filali2011} found a single determinant formula for the partition function of the 8VSOS model with DWBC and one reflecting end. For the 8VSOS model with DWBC, but without the reflecting end, no simple determinant formula has been found.

\looseness=1 Razumov and Stroganov \cite{RazumovStroganov2001} found connections between the supersymmetric XXZ spin chain and ASMs. This has developed into a large area of research, see, e.g.,~\cite{Zinn-Justin2008}. Similar problems for the supersymmetric XYZ spin chain were studied by Bazhanov and Mangazeev.
In~\cite{BazhanovMangazeev2005}, they investigated the eigenvalues of Baxter's $Q$-operator \cite{Baxter1972} for the 8V model, and in~\cite{BazhanovMangazeev2010} (see also~\cite{RazumovStroganov2010}) they studied the Hamiltonian of the XYZ spin chain of odd length. The ground state eigenvalues of the $Q$-operator as well as the components of the ground state eigenvectors of the XYZ-Hamiltonian can be expressed in terms of certain polynomials. These polynomials seem to have positive integer coefficients \cite{BazhanovMangazeev2006, BazhanovMangazeev2010}, which suggests that the polynomials could have a~combinatorial interpretation. Up till now, no such interpretation has been presented.

\looseness=1 In \cite{Rosengren2011}, Rosengren extended Kuperberg's work from the 6V model to the 8VSOS model. Kuper\-berg's specialization of the parameters in the 6V model gives the ice model, and the same specialization in the 8VSOS model gives the three-color model. Again polynomials with positive coefficients showed up. Rosengren~\cite{Rosengren2015} introduced certain polynomials $T(x_1, \dots, x_{2n})$, which are generalizations of the polynomials in~\cite{Rosengren2011}. Zinn-Justin~\cite{Zinn-Justin2013} introduced polynomials equivalent to Rosengren's, and observed that Bazhanov's and Mangazeev's polynomials seem to be specializations of these polynomials. This indicates that the combinatorial interpretation of the polynomials with positive coefficients could be connected to three-colorings.

\looseness=1 In this paper, we study the link between the three-color model and the polynomials $q_n(z)$ of Bazhanov and Mangazeev, which appear in the eigenvectors of the XYZ-Hamiltonian~\cite{BazhanovMangazeev2010}. By specializing the parameters in the partition function of the 8VSOS model with DWBC and reflecting end in Kuperberg's way and then using Filali's determinant formula and Rosengren's polynomials $T(x_1, \dots, x_n)$, we can find an explicit combinatorial expression for $q_n(z)$ in terms of the partition function of the three-color model with the same boundary conditions.

\looseness=1 The outline of this paper is as follows. First of all, in Section~\ref{secprel}, we describe the 8VSOS model and the three-color model with DWBC and reflecting end, and in particular we define their partition functions.
Following Rosengren, in Section~\ref{sec3}, we specialize the parameters in the partition function of the 8VSOS model with DWBC and reflecting end to obtain the partition function of the three-color model. In Section~\ref{sec4}, we rewrite Filali's determinant formula to depend on the polynomials $q_{n-1}(z)$, going via Rosengren's polynomials $T(x_1, \dots, x_{2n})$. Then we compare the determinant formula with the expression from Section~\ref{sec3}, and get an expression for Bazhanov's and Mangazeev's polynomials $q_n(z)$ in terms of the three-colorings.

In this paper, we consider the three-color model on a square lattice with $(2n+1) \times (n+1)$ faces. The faces are filled with three different colors (which we call color $0$, $1$, and $2$), such that adjacent faces have different colors. We~consider the following boundary conditions. In~the upper left corner, we fix color $0$.
On three of the boundaries the colors alternate cyclically,
whereas on the left boundary of the lattice, each second face has color $0$. The remaining faces on the left boundary each has one of the other two colors (see Fig.~\ref{fig:3colormodel}). Details can be found in~Sec\-tion~\ref{3cmodel}.

In Section~\ref{secmainresult}, we simplify the expression from Section~\ref{sec4}. We~get the following theorem, which is our main result.

\begin{Theorem}\label{maintheorem}Let $t_i$ be the weight assigned to a face with color $i$, and let $m$ be the number of faces on the left boundary with color $2$. For a given $m$, it holds that
\begin{gather*}
 \sum_{(\textup{states with $m$ specified})}\prod_{\textup{faces}}t_i\\
\qquad =\begin{cases}
\displaystyle \binom{n}{m} \frac{t_2^{m-n}}{t_1^m}\frac{t_0(t_0t_1t_2)^{(2n^2+4n)/3}}{(z(z^2-1))^{(n^2-n)/3}}q_{n-1}(z)
, & \textup{$n\equiv 0, 1 \mod 3$},\\
\displaystyle \binom{n}{m} \frac{t_2^{m-n}}{t_1^m}\frac{(t_0t_1+t_0t_2+t_1t_2)(t_0t_1t_2)^{(2n^2+4n-1)/3}}{(3z^2+1)(z(z^2-1))^{(n^2-n-2)/3}}q_{n-1}(z),
& \textup{$n\equiv 2 \mod 3$},
\end{cases}
\end{gather*}
where $z$ is defined such that
\begin{gather*}
\frac{(t_0t_1+t_0t_2+t_1t_2)^3}{(t_0t_1t_2)^2}=\frac{(3z^2+1)^3}{(z(z^2-1))^2}.
\end{gather*}
\end{Theorem}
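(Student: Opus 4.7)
The plan is to compute the 8VSOS partition function, after the Kuperberg--Rosengren specialization that makes it collapse to the three-color partition function, in two different ways and match the results. Section~\ref{sec3} provides the combinatorial side as a sum over three-colorings, and Section~\ref{sec4} provides the determinantal side in the form $(\text{explicit prefactor})\cdot q_{n-1}(z)$, starting from Filali's determinant rewritten via Rosengren's polynomials $T(x_1,\dots,x_{2n})$. Extracting the contribution of states with a prescribed number $m$ of color-$2$ faces on the left boundary then yields the displayed identity.

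On the combinatorial side, I would organize the three-color partition function according to the subset $S\subset\{1,\dots,n\}$ of free left-boundary positions carrying color~$2$. For fixed $|S|=m$ the left boundary contributes $t_0^{c}t_1^{n-m}t_2^{m}$ for a fixed integer~$c$, and a symmetry between colors~$1$ and~$2$, realized by a color-swap bijection between the bulk states compatible with different choices of $S$ of the same size, should show that each $m$-subset contributes the same bulk sum. This is the origin of the factor $\binom{n}{m}$: the $\binom{n}{m}$ subsets of size~$m$ each contribute equally, so their sum reduces to $\binom{n}{m}$ copies of a single reference sum.

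On the determinantal side, applying the same specialization to Filali's determinant in the form derived in Section~\ref{sec4}, and using Zinn-Justin's identification of $q_{n-1}(z)$ as a specialization of~$T$, yields an expression of the form $(\text{explicit prefactor in } t_0,t_1,t_2,z)\cdot q_{n-1}(z)$. Equating this with the combinatorial sum, solving for the sum over states with prescribed~$m$, and simplifying the prefactor yields the formula in the theorem.

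The main obstacle is the prefactor bookkeeping, and in particular the case split modulo~$3$. The total $t$-degree of the right-hand side must equal the number of faces $(2n+1)(n+1)=2n^2+3n+1$, but the natural candidate exponent of $t_0t_1t_2$, namely $(2n^2+4n)/3$, is an integer exactly when $n\equiv 0,1\pmod 3$. When $n\equiv 2\pmod 3$ one must shift this exponent down to $(2n^2+4n-1)/3$ and compensate with an extra factor of $t_0t_1+t_0t_2+t_1t_2$, which in the $z$-variable corresponds to the additional $(3z^2+1)$ in the denominator and the shift in the exponent of $z(z^2-1)$ from $(n^2-n)/3$ to $(n^2-n-2)/3$. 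Verifying that these compensating factors match exactly via the cubic relation $\frac{(t_0t_1+t_0t_2+t_1t_2)^3}{(t_0t_1t_2)^2}=\frac{(3z^2+1)^3}{(z(z^2-1))^2}$, and tracking all the accumulated Vandermonde-type factors that arise from the coincidence of spectral parameters in the specialization of Filali's determinant, is the most delicate part of the proof.
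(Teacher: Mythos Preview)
Your global strategy---specialize the 8VSOS partition function to the three-color model on one side, specialize Filali's determinant to $q_{n-1}$ via $T(2\psi+1,\dots,2\psi+1)$ on the other, and match---is exactly the paper's route. But two of your steps are different from the paper's, and in their current form they are gaps rather than alternatives.

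First, the origin of the $\binom{n}{m}$. You propose a combinatorial ``color-swap bijection'' showing that every size-$m$ subset of left-boundary color-$2$ positions contributes the same bulk sum. No such bijection is constructed in the paper, and I do not see an elementary one: which turns are positive genuinely constrains the bulk filling, and the equality of contributions across subsets of the same size is precisely the content of Corollary~\ref{numberofstateswith3colors}, which the paper derives \emph{from} the theorem, not as an input to it. In the paper the $\binom{n}{m}$ appears on the determinantal side: the product of turn contributions $\big(\frac{\vartheta(\zeta)\vartheta(\rho\zeta)}{\vartheta(\zeta\omega)\vartheta(\rho\zeta\omega)}\big)^n$ is rewritten, via the addition rule, as the $n$-th power of a two-term sum in the basis $\frac{\vartheta(\rho\zeta\omega^{-1})}{\vartheta(\rho\zeta\omega)}$, $\frac{\vartheta(\zeta\omega^{-1})}{\vartheta(\zeta\omega)}$, and the binomial expansion produces the $\binom{n}{m}$.

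Second, and relatedly, you do not say how to extract the term with a fixed $m$ from the determinantal side, which a priori only gives the total partition function summed over all $m$. The paper's mechanism is to keep the boundary parameter $\zeta$ free throughout: on the combinatorial side each $m$ carries a distinct monomial in $\frac{\vartheta(\rho\zeta\omega^{-1})}{\vartheta(\rho\zeta\omega)}$ and $\frac{\vartheta(\zeta\omega^{-1})}{\vartheta(\zeta\omega)}$ (coming from the turn weights $k_\pm$), and these functions of $\zeta$ are linearly independent (the $m$-th term has a zero of order $n-m$ at $\zeta=\omega$). Matching against the binomially expanded determinantal side then isolates each $m$. Without either this $\zeta$-argument or an independently proved bijection, you cannot separate the $m$-contributions. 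Your remarks on the modulo-$3$ bookkeeping and the role of the cubic relation between $T$ and $z$ are accurate, but they sit downstream of these two missing steps.
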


Consequences of Theorem~\ref{maintheorem} are discussed in Section~\ref{secconsequences}. The theorem yields an explicit expression for $q_n(z)$ in terms of three-colorings. Unfortunately it is not directly clear that the expression has positive coefficients, but we can prove the weaker result that $q_n(z+1)$ and $(z+1)^{n(n+1)}q_n(1/(z+1))$ have positive integer coefficients (Corollary~\ref{cor:poscoeffs}). Let $N^{(m)}(k_0, k_1, k_2)$ denote the number of states with exactly $m$ faces on the left boundary with color $2$, and $k_i$ entries of color~$i$. We~find that
\begin{gather*}
N^{(m)}(k_0, k_1, k_2)=\binom{n}{m}N^{(0)}(k_0, k_1+m, k_2-m)
\end{gather*}
(Corollary~\ref{numberofstateswith3colors}), and
we also find symmetries in the number of states with a given number of faces of each color (Corollary~\ref{symmetriesofcolors}). Furthermore we compute strict bounds for the possible number of faces of each color (Corollary~\ref{maxminfaces}).

\section{Preliminaries}\label{secprel}

Let $p={\rm e}^{2\pi{\rm i}\tau}$ and $q={\rm e}^{2\pi{\rm i} \eta}$, where $\tau$ and $\eta$ are fixed parameters with $\Im(\tau)>0$ and $\eta\notin \Z+\tau\Z$. By $q^x$ we will always mean ${\rm e}^{2\pi{\rm i} \eta x}$, and when we write $p^{1/2}$, we will mean $p^{1/2}={\rm e}^{\pi{\rm i} \tau}$. We~define the theta function
\begin{gather*}
\vartheta(x,p)= \prod_{j=0}^{\infty} \big(1-p^j x\big)\big(1-p^{j+1}/x\big).
\end{gather*}
Then we define $[x]=q^{-x/2}\vartheta(q^x, p)$. Sometimes we will write $\vartheta(x^{\pm a},p):=\vartheta(x^a, p)\vartheta(x^{-a}, p)$, or~we will suppress the $p$ and write $\vartheta(x):=\vartheta(x, p)$, and write out the second parameter only when it is not just~$p$.

Observe that $\vartheta(1)=0$. The most important properties of the theta function are
\begin{gather*}
\vartheta(px)=\vartheta(1/x)=-\frac{1}{x}\vartheta(x)
\end{gather*}
and the addition rule
\begin{gather}
\vartheta(x_1x_3)\vartheta(x_1/x_3)\vartheta(x_2x_4)\vartheta(x_2/x_4) -\vartheta(x_1x_4)\vartheta(x_1/x_4)\vartheta(x_2x_3)\vartheta(x_2/x_3)\nonumber\\
\qquad{} =\frac{x_2}{x_3}\vartheta(x_1x_2)\vartheta(x_1/x_2)\vartheta(x_3x_4)\vartheta(x_3/x_4).\label{additionrule}
\end{gather}

\subsection{The 8VSOS model with DWBC and reflecting end}

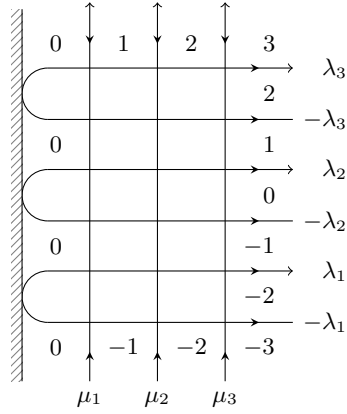
\begin{figure}[t]
\vspace{3mm}
\centering
\begin{tikzpicture}[scale=0.9,font=\footnotesize]
	\foreach \y in {1,...,3} {
		\draw (.38,1.5*\y-.25-.38) -- +(2.3+0.32,0);
		\draw[midarrow={stealth}] (3,1.5*\y-.25-.38) -- +(1,0) node[right]{$-\lambda_{\y}$};
		\draw (.38,1.5*\y-.25+.38) -- +(2.3+0.32,0);
		\draw[midarrow={stealth},->] (3,1.5*\y-.25+.38) -- +(1,0) node[right]{$\phantom{-}\lambda_{\y}$};
		\draw (0.38,1.5*\y-.25+.38) arc (90:270:0.38);		
	}
	
	\foreach \x in {1,...,3} {
		\draw[midarrow={stealth}] (\x,0) node[below]{$\mu_{\x}$} -- +(0,.87); % dotted lines for \ket{0}
		\draw (\x,.87) -- +(0,3.76); % vertical lines inside the lattice
		
		\draw[midarrow={stealth reversed}, ->] (\x,4.63) -- +(0,.97);	
	}
	
	\fill[preaction={fill,white},pattern=north east lines, pattern color=gray] (0,0) rectangle (-.15,5.5) ; \draw (0,0) -- (0,5.5);
	
		\node at (0.5, 5) {$0$};
 \node at (1.5, 5) {$1$};
 \node at (2.5, 5) {$2$};
 \node at (3.65, 5) {$3$};

 \node at (3.65, 4.25) {$2$};
 \node at (3.65, 3.5) {$1$};
 \node at (3.65, 2.75) {$0$};
 \node at (3.5, 2) {$-1$};
 \node at (3.5, 1.25) {$-2$};

 \node at (3.5, 0.5) {$-3$};
 \node at (2.5, 0.5) {$-2$};
 \node at (1.5, 0.5) {$-1$};
 \node at (0.5, 0.5) {$0$};
		
		\node at (0.5, 2) {$0$};
		\node at (0.5, 3.5) {$0$};
\end{tikzpicture}
\caption{The 8VSOS model with DWBC and reflecting end in the case $n=3$. The parameters~$\mu_i$ and~$\lambda_i$ are the spectral parameters.}\label{fig:arrowpicmodel}
\end{figure}

Consider a $2n\times n$ square lattice, where the horizontal lines are connected pairwise at the left edge. Each such pair of horizontal lines can be thought of as one single line turning at a wall on the left side, see Fig.~\ref{fig:arrowpicmodel}.
Define $V$ as a two-dimensional complex vector space with basis vectors~$e_+$ and~$e_-$. To each line we associate a copy of~$V$, and we assign a spin $\pm 1$ to each edge. A~lattice with a spin assigned to each edge is called a state.

Graphically a state can be represented by giving each line a positive direction, which goes upwards for the vertical lines, to the left for the lower part of the horizontal double line, and to the right for the upper part. The positive direction is indicated by an arrow at the end of a~line.
Spin~$+1$ corresponds to an arrow pointing in the positive direction of the line, and spin~$-1$ corresponds to an arrow pointing in the opposite direction. This graphical notation follows~\cite{Lamers2016}.

At each vertex, the spins of the four surrounding edges need to obey the ice rule, that is, at~each vertex with spins $\alpha$, $\beta$, $\alpha^\prime$ and $\beta^\prime$ as in Fig.~\ref{fig:vertex}, the equation
\begin{gather*}
\alpha+\beta=\alpha^\prime+\beta^\prime
\end{gather*}
must hold. This yields six possible types of vertices, see Fig.~\ref{fig:6vmodel}. Because of the reflecting end, for every second row in the square lattice, we need to rotate the possible vertices (in Fig.~\ref{fig:6vmodel}) 90~degrees counterclockwise.

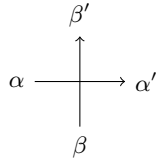
\begin{figure}[tb]
\centering
	\begin{tikzpicture}[baseline={([yshift=-.5*10pt*0.6]current bounding box.center)}, scale=0.6, font=\footnotesize]
		\draw[->] (0,1) node[left]{$\alpha$} -- (2,1) node[right]{$\alpha^\prime$};
		\draw[->] (1,0) node[below]{$\beta$} -- (1,2) node[above]{$\beta^\prime$};
	\end{tikzpicture}
\caption{A vertex with spins $\alpha$, $\beta$, $\alpha^\prime$, $\beta^\prime$ on the surrounding edges.}
\label{fig:vertex}
\end{figure}

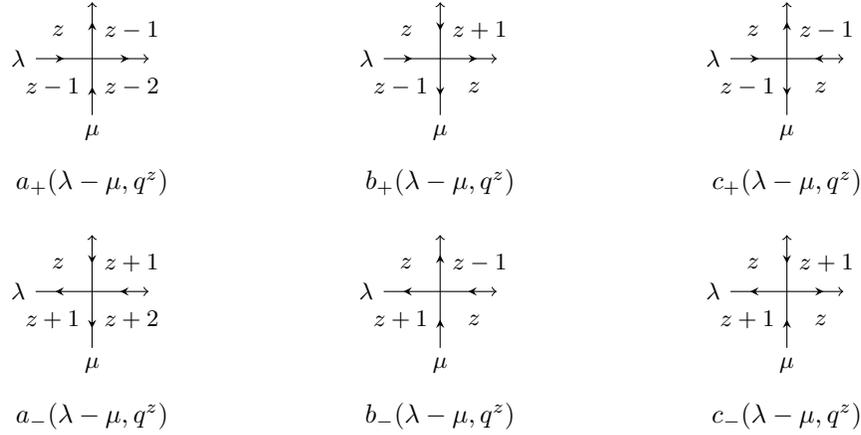
\begin{figure}[t]
\centering
 \subfloat{%	
 	\begin{tikzpicture}[scale=0.75, font=\footnotesize]
 	\draw[midarrow={stealth}] (0,1) node[left] {$\lambda$} -- (1,1); %left
		\draw[midarrow={stealth reversed}, <-] (2,1) -- (1,1); %right
		\draw[midarrow={stealth}] (1,0) node[below] {$\mu$} -- (1,1); %down
		\draw[midarrow={stealth reversed}, <-] (1,2) -- (1,1); %up
		\draw (1,-1.2) node{\small{$%w\w{+}{+}{+}{+} =
		a_+(\lambda-\mu, q^z)$}};
		\node at (0.4, 1.5) {$z$};
		\node at (1.7, 1.5) {$z-1$};
		\node at (0.3, 0.5) {$z-1$};
		\node at (1.7, 0.5) {$z-2$};
		\end{tikzpicture}
	}\hfil
	\subfloat{%	
		\begin{tikzpicture}[scale=0.75, font=\footnotesize]
		\draw[midarrow={stealth}] (0,1) node[left] {$\lambda$} -- (1,1); %left
		\draw[midarrow={stealth reversed}, <-] (2,1) -- (1,1); %right
		\draw[midarrow={stealth reversed}] (1,0) node[below] {$\mu$} -- (1,1); %down
		\draw[midarrow={stealth}, <-] (1,2) -- (1,1); %up
		\draw (1,-1.2) node{\small{$%w\w{+}{-}{+}{-} =
		b_+(\lambda-\mu, q^z)$}};
		\node at (0.4, 1.5) {$z$};
		\node at (1.7, 1.5) {$z+1$};
		\node at (0.3, 0.5) {$z-1$};
		\node at (1.6, 0.5) {$z$};
		\end{tikzpicture}
	}\hfil
	\subfloat{%	
		\begin{tikzpicture}[scale=0.75, font=\footnotesize]
		\draw[midarrow={stealth}] (0,1) node[left] {$\lambda$} -- (1,1); %left
		\draw[midarrow={stealth}, <-] (2,1) -- (1,1); %right
		\draw[midarrow={stealth reversed}] (1,0) node[below] {$\mu$} -- (1,1); %down
		\draw[midarrow={stealth reversed}, <-] (1,2) -- (1,1); %up
		\draw (1,-1.2) node{\small{$%w\w{+}{-}{-}{+} =
		c_+(\lambda-\mu, q^z)$}};
		\node at (0.4, 1.5) {$z$};
		\node at (1.7, 1.5) {$z-1$};
		\node at (0.3, 0.5) {$z-1$};
		\node at (1.6, 0.5) {$z$};
		\end{tikzpicture}
	}\\
	\vspace{0mm}
 \subfloat{%	
		\begin{tikzpicture}[scale=0.75, font=\footnotesize]
		\draw[midarrow={stealth reversed}] (0,1) node[left] {$\lambda$} -- (1,1); %left
		\draw[midarrow={stealth}, <-] (2,1) -- (1,1); %right
		\draw[midarrow={stealth reversed}] (1,0) node[below] {$\mu$} -- (1,1); %down
		\draw[midarrow={stealth}, <-] (1,2) -- (1,1); %up
		\draw (1,-1.2) node{\small{$%w\w{-}{-}{-}{-} =
		a_-(\lambda-\mu, q^z)$}};
		\node at (0.4, 1.5) {$z$};
		\node at (1.7, 1.5) {$z+1$};
		\node at (0.3, 0.5) {$z+1$};
		\node at (1.7, 0.5) {$z+2$};
		\end{tikzpicture}
	}\hfil
	\subfloat{%	
		\begin{tikzpicture}[scale=0.75, font=\footnotesize]
		\draw[midarrow={stealth reversed}] (0,1) node[left] {$\lambda$} -- (1,1); %left
		\draw[midarrow={stealth}, <-] (2,1) -- (1,1); %right
		\draw[midarrow={stealth}] (1,0) node[below] {$\mu$} -- (1,1); %down
		\draw[midarrow={stealth reversed}, <-] (1,2) -- (1,1); %up
		\draw (1,-1.2) node{\small{$%w\w{-}{+}{-}{+} =
		b_-(\lambda-\mu, q^z)$}};
		\node at (0.4, 1.5) {$z$};
		\node at (1.7, 1.5) {$z-1$};
		\node at (0.3, 0.5) {$z+1$};
		\node at (1.6, 0.5) {$z$};
		\end{tikzpicture}
	}\hfil
	\subfloat{%	
		\begin{tikzpicture}[scale=0.75, font=\footnotesize]
		\draw[midarrow={stealth reversed}] (0,1) node[left] {$\lambda$} -- (1,1); %left
		\draw[midarrow={stealth reversed}, <-] (2,1) -- (1,1); %right
		\draw[midarrow={stealth}] (1,0) node[below] {$\mu$} -- (1,1); %down
		\draw[midarrow={stealth}, <-] (1,2) -- (1,1); %up
		\draw (1,-1.2) node{\small{$%w\w{-}{+}{+}{-} =
		c_-(\lambda-\mu, q^z)$}};
		\node at (0.4, 1.5) {$z$};
		\node at (1.7, 1.5) {$z+1$};
		\node at (0.3, 0.5) {$z+1$};
		\node at (1.6, 0.5) {$z$};
		\end{tikzpicture}
	}\\
	\vspace{-1mm}
	\caption{The possible vertex weights for the 8VSOS model. The spins are indicated with an arrow halfway the edge, where right and up are positive spins, and left and down are negative spins. The vertex weights also depend on the spectral parameters $\lambda$ and $\mu$, as well as the height $z$ in the upper left face.}
	\label{fig:6vmodel}
	\label{fig:arrowpic}
\end{figure}

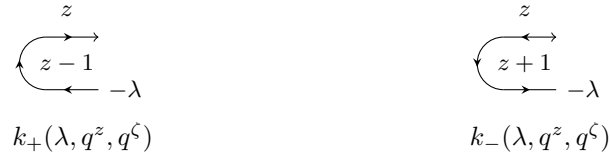
\begin{figure}[t]
\centering
\subfloat{%			
 \begin{tikzpicture}[baseline={([yshift=-.5*10pt*0.6]current bounding box.center)}, scale=0.7, font=\footnotesize]
		\draw (0.5,0.5) arc (90:270:0.5);
		\draw[midarrow={stealth reversed}] (.5,-.5) -- (1.5,-.5) node[right] {$-\lambda$};
		%\draw (1.5, -.5) -- (2.5, -.5) node[right] {\scriptsize $-\lambda$};
		\draw[midarrow={stealth}, ->] (.5,+.5) -- (1.5,+.5);
		%\draw[->] (1.5, +.5) -- (2.5, +.5);
		%\draw[->] (1.5,-1.5) node[below]{\scriptsize $\mu$} -- (1.5,1.5);
		\draw[-stealth] (0,0.05) -- (0,0.06);
		
	\node at (0.9, 1) {$z$};
	\node at (0.9, 0) {$z-1$};
	
	\draw (1.2,-1.4) node{\small{$%w(+)=
	k_+(\lambda, q^z, q^\zeta)$}};
 \end{tikzpicture}
}\hfil
\subfloat{%			
 \begin{tikzpicture}[baseline={([yshift=-.5*10pt*0.6]current bounding box.center)}, scale=0.7, font=\footnotesize]
		\draw (0.5,0.5) arc (90:270:0.5);
		\draw[midarrow={stealth}] (.5,-.5) -- (1.5,-.5) node[right] {$-\lambda$};
		\draw[midarrow={stealth reversed},->] (.5,+.5) -- (1.5,+.5);
		\draw[-stealth] (0,-0.05) -- (0,-0.06);
		
	\node at (0.9, 1) {$z$};
	\node at (0.9, 0) {$z+1$};
	
	\draw (1.2,-1.4) node{\small{$	k_-(\lambda, q^z, q^\zeta)$}};
 \end{tikzpicture}
}
\vspace{-1mm}
\caption{The possible boundary weights for the reflecting ends that we consider in this model. The weights depend on the spectral parameter $\lambda$ and the height $z$ outside the turn, as well as on a boundary parameter $\zeta$.}
 \label{fig:reflectingends}
		\label{fig:arrowpicreflend}
\end{figure}

Fix a dynamical parameter $\rho\in\C$. To each face we assign a height $\rho+a$, $a\in\Z$. Heights of adjacent faces should always differ by $\pm 1$. Given a face with height $z$, crossing an edge of spin $s=\pm 1$ from the left to the right (looking in the positive direction of the edge) yields the height $z-s$ in the adjacent face (see Fig.~\ref{fig:6vmodel}). Hence, for a given state, it is enough to specify the height in one place, which we choose to be in the upper left corner. Defining $\rho$ to be the height in the upper left corner, we can write the heights minus $\rho$, as in Fig.~\ref{fig:arrowpicmodel}. Throughout this section, the height will refer to $z=\rho+a$, and in Section~\ref{sec3} we will, with a slight abuse of terminology, refer to $a$ as the height.

Assign spectral parameters $\mu_i$ to each vertical line, and $\pm\lambda_i$ to each horizontal double line. The value is $-\lambda_i$ on the lower part of the double line and shifts to $\lambda_i$ on the upper part. In Fig.~\ref{fig:arrowpicmodel}, we write these parameters at the lines. Also define a fixed boundary parameter $\zeta$, associated to~the~reflecting wall at the turns.
To each vertex and each turn we assign a local weight
\begin{gather*}
\begin{aligned}
&a_+(\lambda, q^z)=a_-(\lambda, q^z)=\frac{[\lambda+1]}{[1]},\qquad
\\
&b_+(\lambda, q^z)=\frac{[\lambda][z-1]}{[z][1]}, &\quad&b_-(\lambda, q^z)=\frac{[\lambda][z+1]}{[z][1]},
\\
&c_+(\lambda, q^z)=\frac{[z+\lambda]}{[z]}, &&c_-(\lambda, q^z)=\frac{[z-\lambda]}{[z]},
\\
&k_+\big(\lambda, q^z, q^\zeta\big)=\frac{[z+\zeta-\lambda]}{[z+\zeta+\lambda]}, && k_-\big(\lambda, q^z, q^\zeta\big)=\frac{[\zeta-\lambda]}{[\zeta+\lambda]}.
\end{aligned}
\end{gather*}
These functions correspond to the local states as in Fig.~\ref{fig:6vmodel} and in Fig.~\ref{fig:reflectingends}.
The functions are well-defined: it is clear that $[z+1/\eta]=-[z]$ and $[\zeta+1/\eta]=-[\zeta]$, but when translating $z$ or~$\zeta$ by $1/\eta$, the numerator and denominator of the weights change simultaneously, so the minus signs cancel.

Sometimes, when we are only interested in the spin configurations around a vertex, or when~$\lambda$ and $z$ are clear, we will refer to a $w$ vertex, meaning a vertex with weight $w\big(\lambda, q^z\big)$, where $w$ is one of~$a_\pm$, $b_\pm$ or~$c_\pm$. Similarly a~$k_\pm$ turn will refer to a turn with weight $k_\pm\big(\lambda, q^z, q^\zeta\big)$, when~$\lambda$,~$z$ and~$\zeta$ are clear, or when it is the direction of the spin on the turning edge that is of importance. We~will also use the term positive (negative) turn for a~$k_+$ $(k_-)$ turn.

The local weight at a vertex with the positive directions up and to the right depends on the spins of the surrounding edges, but also on the height $z$ on the face to the upper left, as well as on the difference between the spectral parameters on the incoming lines from the left and the bottom. Because of the reflecting ends, we need to differentiate between the vertices on~the left oriented and the right oriented horizontal lines. The vertices in the right oriented rows are depicted in Fig.~\ref{fig:arrowpic}, and the vertices in the left oriented rows are the same, tilted 90 degrees counterclockwise, as in Fig.~\ref{fig:nodeupdown}. The (local) weight of the vertex in Fig.~\ref{fig:upperrow} is $w\big(\lambda_i-\mu_j, q^z\big)$, and for the vertex in Fig.~\ref{fig:lowerrow}, the weight is $w\big(\lambda_i+\mu_j, q^z\big)$, where $w$ is one of~$a_\pm$, $b_\pm$ or $c_\pm$.

\looseness=1 The boundary weight at each turn depends on the spin on the turning edge, but also on the spectral parameter $\lambda_i$ of the line going through the turn, and the height on the face outside the turn, as in Fig.~\ref{fig:reflectingends}. The weight also depends on the boundary parameter $\zeta$ which is fixed. The~height outside the turn is the same for all turns in the 8VSOS model with reflecting end.

Defining the height in the upper left corner to be $\rho$, the weight at a vertex is always \mbox{$w\big(\lambda_i\pm\mu_j, q^{\rho+a}\big)$}, for some $a\in \Z$, and the weight at a turn is always $k_\pm\big(\lambda_i, q^\rho, q^\zeta\big)$. The weight of a state is the product of all local weights of the vertices and the turns.

On the left side of the model we have the reflecting wall. It~remains to impose boundary conditions to the remaining three sides of the model. For these sides, we take the domain wall boundary conditions (DWBC), which in this case means that the ingoing edges at the bottom and the outgoing edges at the right have spin $1$, and the ingoing edges at the right and the outgoing edges at the top have spin~$-1$. This means that the lattice has arrows pointing inwards on the top and the bottom edges, and arrows pointing outwards on the edges to the right, as in Fig.~\ref{fig:arrowpicmodel}. If the height in the upper left corner is defined to be $\rho$, all the heights of the~faces at the boundaries are determined by the boundary conditions, except for the heights of the faces inside the loops.

\subsection{The partition function}
Let $w(\text{vertex})$ be one of the local weights $a_\pm\big(\lambda_i\pm \mu_j, q^{z}\big)$, $b_\pm\big(\lambda_i\pm\mu_j, q^{z}\big)$, $c_\pm\big(\lambda_i\pm\mu_j, q^{z}\big)$
at a~vertex with height $z=\rho+a_{\text{vertex}}$ in the upper left face, and let $w(\text{turn})$ be the local weight at~one of the turns, given by one of the weights $k_\pm\big(\lambda_i, q^\rho, q^\zeta\big)$.
The partition function of the 8VSOS model with DWBC and reflecting end is
\begin{gather*}
Z_n\big(q^{\lambda_1}, \dots, q^{\lambda_n}, q^{\mu_1}, \dots, q^{\mu_n}, q^\rho, q^\zeta\big)= \sum_{\text{states}} \prod_{\text{vertices}} w(\text{vertex}) \prod_{\text{turns}} w(\text{turn})
\\\hphantom{Z_n(q^{\lambda_1}, \dots, q^{\lambda_n}, q^{\mu_1}, \dots, q^{\mu_n}, q^\rho, q^\zeta)}
{}=\sum_{\text{states}} \prod_{\text{vertices}} w\big(\lambda_i\pm \mu_j, q^{\rho+a_\text{vertex}}\big)\prod_{\text{turns}} w\big(\lambda_i, q^\rho, q^\zeta\big).
\end{gather*}
The partition function also depends on $\tau$ and $\eta$.

\begin{figure}[t]
\centering
	\subfloat[$w(\lambda_i-\mu_j, q^z)$\label{fig:upperrow}]{%	
	\begin{tikzpicture}[scale=0.75, font=\footnotesize]
		\draw[->] (0,1) node[left]{$\lambda_i$} -- (2,1);
		\draw[->] (1,0) node[below]{$\mu_j$} -- (1,2);
		\node at (0.5, 1.5) {$z$};
		\node at (-1,0) {\phantom{$\bullet$}};
		\node at (3,0) {\phantom{$\bullet$}};
	\end{tikzpicture}
	}\hfil
	\subfloat[$w(\lambda_i+\mu_j, q^z)$\label{fig:lowerrow}]{%	
	\begin{tikzpicture}[scale=0.7, font=\footnotesize]
		\draw[<-] (0,1) -- (2,1) node[right]{$-\lambda_i$};
		\draw[->] (1,0) node[below]{$\mu_j$} -- (1,2);
		\node at (0.5, 0.5) {$z$};
		\node at (-1,0) {\phantom{$\bullet$}};
		\node at (3,0) {\phantom{$\bullet$}};
	\end{tikzpicture}
	}\\
	\caption{The different vertex weights depending on the direction of the row in the 8VSOS model with reflecting end, with spectral parameters $\lambda_i$ and $\mu_j$ and height $z$.}
	\label{fig:nodeupdown}
\end{figure}
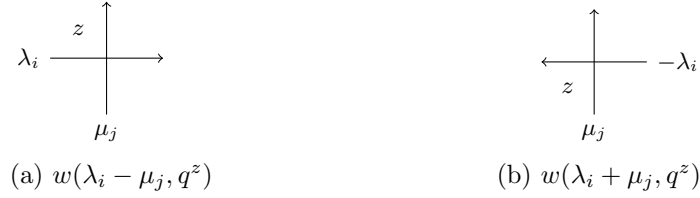

{\samepage
To see that the partition function is well-defined, we need to make sure that it is invariant under translations of~$\lambda_i$, $\mu_j$, $\rho$ and $\zeta$ with $1/\eta$. For $\rho$ and $\zeta$, this is clear, since the weights are well-defined. It~holds that $[x+1/\eta]=-[x]$. In the vertex weights, $\lambda_i$ and $\mu_j$ show up only in the numerators. Luckily, in each state, there are always two weights $w\big(\lambda_i\pm \mu_j, q^z\big)$ given by~each pair $\lambda_i$ and $\mu_j$. Thus, in the partition function, a translation of~$\lambda_i$ or $\mu_j$ will affect an even number of factors, so the minus signs will cancel each other. Hence the partition function is also invariant under translations of any $\lambda_i$ and $\mu_j$ by $1/\eta$.

}

In \cite{Filali2011}, Filali obtained a determinant formula for the partition function of the 8VSOS model with DWBC and reflecting end, namely,
\begin{gather}
Z_n\big(q^{\lambda_1}, \dots, q^{\lambda_n}, q^{\mu_1}, \dots, q^{\mu_n}, q^\rho, q^\zeta\big)\nonumber
\\ \qquad{}
=[1]^{n-2n^2} \prod_{i=1}^n \frac{[2\lambda_i][\zeta-\mu_i][\rho+\zeta+\mu_i][\rho+(2i-n-2)]}
{[\zeta+\lambda_i][\rho+\zeta+\lambda_i][\rho+(n-i)]}\nonumber
\\ \qquad\phantom{=}{}
\times \frac{\prod\limits_{i,j=1}^n [\lambda_i+\mu_j+1][\lambda_i-\mu_j+1][\lambda_i+\mu_j][\lambda_i-\mu_j]}{\prod\limits_{1\leq i<j\leq n} [\lambda_i+\lambda_j+1][\lambda_i-\lambda_j][\mu_j+\mu_i][\mu_j-\mu_i]} \det_{1\leq i,j\leq n} K_{ij},\label{Filalisdeterminantformula}
\end{gather}
where
\begin{gather*}
K_{ij}=\frac{1}{[\lambda_i+\mu_j+1][\lambda_i-\mu_j+1][\lambda_i+\mu_j][\lambda_i-\mu_j]}.
\end{gather*}

\subsection{The three-color model}
\label{3cmodel}

The three-color model is a model on a square lattice, with the faces filled with three different colors, which we call color $0$, $1$, and $2$, such that adjacent faces have different colors. A weight~$t_i$ is assigned to each face of color $i$. A state of the three-color model is called a three-coloring.

We study the three-color model on the $2n\times n$ lattice (i.e. a lattice with $(2n+1)\times(n+1)$ faces).
If we reduce the heights $\rho+a$ of the faces in the 8VSOS model to $a \operatorname{mod} 3$, the states of the 8VSOS model can be identified with the states of the three-color model (see Fig.~\ref{fig:3colormodel}).
The DWBC and the reflecting end in the 8VSOS model correspond to the following rules for the colors in the three-color model. In the upper left corner, we fix color $0$. On three of the boundaries, the colors alternate cyclically. Starting from the upper left corner, going to the right, the colors increase in the order $0<1<2<0$, to reach $n \operatorname{mod} 3$ in the upper right corner. From there, going down, the colors decrease down to $-n\operatorname{mod} 3$ in the lower right corner. Continuing to the left, the colors increase again, up to $0$ in the lower left corner.
On the left side, at the reflecting wall, every second face has color $0$. Inside the turns, the colors differ depending on the type of turn in the corresponding state of the 8VSOS model. A negative turn corresponds to color $1$, and a positive turn corresponds to color $2$. We~will henceforth assume these boundary conditions, even if we do not mention them explicitly.
The partition function of the three-color model, with DWBC and reflecting end, and with color $0$ fixed in the upper left corner is
\begin{gather*}
Z_n^{3C}(t_0, t_1, t_2)=\sum_{\text{states}} \prod_{\text{faces}} t_i.
\end{gather*}

Let $m$ be the number of positive turns in a state of the 8VSOS model with DWBC and reflecting end. Specifying $m$ means that we have specified the number of faces with color $2$ on the left side. If we specify $m=0$, the colors on the left side alternate between color $0$ and $1$. There is a bijection between the three-colorings with $m=0$ and the VSASMs of size $(2n+1)\times (2n+1)$ \cite{Kuperberg2002}.

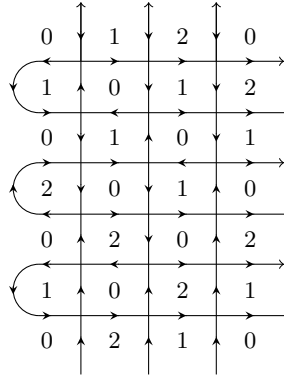
\begin{figure}[t]
\vspace{3mm}
\centering
\begin{tikzpicture}[scale=0.9, font=\footnotesize]
	% first the horizontal border lines:
	\foreach \y in {1,...,3} {
		\draw[midarrow={stealth}] (3.55,1.5*\y-.25-.38) -- +(0.01,0);
		\draw[midarrow={stealth}] (3.55,1.5*\y-.25+.38) -- +(0.01,0);
		
		\draw (1, 1.5*\y-.25-.38) -- +(3, 0);
		\draw[->] (1, 1.5*\y-.25+.38) -- +(3, 0);
		
		\draw (0.38,1.5*\y-.25+.38) arc (90:270:0.38);
		}
		
	% then the vertical border lines:
	\foreach \x in {1,...,3} {
		\draw (\x,0) -- +(0,.87); % dotted lines for \ket{0}
			\draw[midarrow={stealth}] (\x,0.55) -- +(0,0.01);
		\draw[->] (\x,4.63) -- +(0,.87);	
			\draw[midarrow={stealth reversed}] (\x,4.63+.43) -- +(0,0.01);	
			
		\draw (\x,.87) -- +(0,4.63);
	}
	
		%left border
		\draw[-stealth reversed] (0,1.55-0.25) -- (0,1.56-0.25);
		\draw(.38,1.25-.38) -- (1,1.25-.38);
			\draw[midarrow={stealth}] (.55,1.25-.38) -- +(0.01,0);
		\draw(.38,1.25+.38) -- (1,1.25+.38);
			\draw[midarrow={stealth reversed}] (.55,1.25+.38) -- +(0.01,0);
		\node at (0.5, 1.25) {$1$};
		
		\draw[-stealth] (0,3.05-0.25) -- (0,3.06-0.25);
		\draw (.38,2.75-.38) -- (1,2.75-.38);
			\draw[midarrow={stealth reversed}] (.55,2.75-.38) -- +(0.01,0);
		\draw(.38,2.75+.38) -- (1,2.75+.38);
			\draw[midarrow={stealth}] (.55,2.75+.38) -- +(0.01,0);
		\node at (0.5, 2.75) {$2$};
		
		\draw[-stealth reversed] (0,4.45-0.25) -- (0,4.56-0.25);
		\draw(.38,4.25-.38) -- (1,4.25-.38);
			\draw[midarrow={stealth}] (.55,4.25-.38) -- +(0.01,0);
		\draw (.38,4.25+.38) -- (1,4.25+.38);
			\draw[midarrow={stealth reversed}] (.55,4.25+.38) -- +(0.01,0);
		\node at (0.5, 4.25) {$1$};
		
		%bulk horizontal lines
		\draw [midarrow={stealth}] (1.55, 1.5-.25-.38) -- +(0.01, 0);
		\draw [midarrow={stealth}] (2.55, 1.5-.25-.38) -- +(0.01, 0);
		\draw [midarrow={stealth reversed}](1.55, 1.5-.25+.38) -- +(0.01, 0);
		\draw [midarrow={stealth}] (2.55, 1.5-.25+.38) -- +(0.01, 0);
		
		\draw [midarrow={stealth}] (1.55, 3-.25-.38) -- +(0.01, 0);
		\draw [midarrow={stealth}] (2.55, 3-.25-.38) -- +(0.01, 0);
		\draw [midarrow={stealth}] (1.55, 3-.25+.38) -- +(0.01, 0);
		\draw [midarrow={stealth reversed}](2.55, 3-.25+.38) -- +(0.01, 0);
		
		\draw [midarrow={stealth reversed}] (1.55, 4.5-.25-.38) -- +(0.01, 0);
		\draw [midarrow={stealth}] (2.55, 4.5-.25-.38) -- +(0.01, 0);
		\draw [midarrow={stealth}](1.55, 4.5-.25+.38) -- +(0.01, 0);
		\draw [midarrow={stealth}] (2.55, 4.5-.25+.38) -- +(0.01, 0);
		
	%bulk vertical lines
			\draw [midarrow={stealth}] (1,0.87+.43) -- +(0,0.01);
			\draw [midarrow={stealth}] (1,1.63+.43) -- +(0,0.01);
			\draw [midarrow={stealth reversed}] (1,2.37+.43) -- +(0,0.01);
			\draw [midarrow={stealth reversed}] (1,3.13+.43) -- +(0,0.01);
			\draw [midarrow={stealth}] (1,3.87+.43) -- +(0,0.01);
			
			\draw [midarrow={stealth}] (2,0.87+.43) -- +(0,0.01);
			\draw [midarrow={stealth reversed}] (2,1.63+.43) -- +(0,0.01);
			\draw [midarrow={stealth reversed}] (2,2.37+.43) -- +(0,0.01);
			\draw [midarrow={stealth}] (2,3.13+.43) -- +(0,0.01);
			\draw [midarrow={stealth reversed}] (2,3.87+.43) -- +(0,0.01);
			
			\draw [midarrow={stealth}] (3,0.87+.43) -- +(0,0.01);
			\draw [midarrow={stealth}] (3,1.63+.43) -- +(0,0.01);
			\draw [midarrow={stealth}] (3,2.37+.43) -- +(0,0.01);
			\draw [midarrow={stealth reversed}] (3,3.13+.43) -- +(0,0.01);
			\draw [midarrow={stealth reversed}] (3,3.87+.43) -- +(0,0.01);
			
 %weights at the faces of the border
		\node at (0.5, 5) {$0$};
 \node at (1.5, 5) {$1$};
 \node at (2.5, 5) {$2$};
 \node at (3.5, 5) {$0$};

 \node at (3.5, 4.25) {$2$};
 \node at (3.5, 3.5) {$1$};
 \node at (3.5, 2.75) {$0$};
 \node at (3.5, 2) {$2$};
 \node at (3.5, 1.25) {$1$};

 \node at (3.5, 0.5) {$0$};
 \node at (2.5, 0.5) {$1$};
 \node at (1.5, 0.5) {$2$};
 \node at (0.5, 0.5) {$0$};
		
		\node at (0.5, 2) {$0$};
		\node at (0.5, 3.5) {$0$};
		
		%Nodes inside
		\node at (1.5, 4.25) {$0$};
 \node at (1.5, 3.5) {$1$};
 \node at (1.5, 2.75) {$0$};
 \node at (1.5, 2) {$2$};
 \node at (1.5, 1.25) {$0$};
		
		\node at (2.5, 4.25) {$1$};
 \node at (2.5, 3.5) {$0$};
 \node at (2.5, 2.75) {$1$};
 \node at (2.5, 2) {$0$};
 \node at (2.5, 1.25) {$2$};
\end{tikzpicture}
\vspace{-1mm}
\caption{A state of the three-color model, for $n=3$, with colors $0$, $1$ and $2$. The arrows on the edges show the corresponding state in the 8VSOS model.}
\label{fig:3colormodel}
\end{figure}

\section{Rewriting of the partition function}\label{sec3}
In his proof of the alternating sign matrix conjecture, Kuperberg studied the partition function of the 6V model with DWBC with $\lambda_i=-1/2$ and $\mu_j=0$, for all $i$, $j$, and $\eta=-2/3$, so that $q$ becomes a cubic root of unity. In this section, we specialize to these values. Following the proof of Lemma~7.1 in~\cite{Rosengren2009}, we simplify the expression for the partition function of the 8VSOS model with reflecting end. We~find a way to express the partition function in terms of the heights of the faces, rather than in terms of the vertex weights. In this way, it corresponds to the partition function of the three-color model. Finally we write the partition function as a sum over the number of positive turns, to be able to compare factors term by term in Section~\ref{sec4}.
	
We will need the following result on the number of different vertex types.
\begin{Lemma}
\label{numberofnodesoftypew}
For any given state of the 8VSOS model with DWBC and reflecting end, let $\nu(w)$ be the number of vertices or turns of type $w$. Then we have
\begin{gather*}
\nu(b_+)=\nu(b_-)+\binom{n+1}{2} \qquad \text{and} \qquad \nu(c_+)+2\nu(k_-)=\nu(c_-)+n.
\end{gather*}
\end{Lemma}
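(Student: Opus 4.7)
My plan is to translate both identities into statements about face heights and use the boundary conditions. A key preliminary point is that in left-facing rows (i.e., the lower half of each double line) the vertex pictures are rotated $90^\circ$ counterclockwise, so the weight type of a vertex is not determined by the four surrounding heights alone but also by the row orientation; for instance, the height pattern $(h_{NW},h_{NE},h_{SW},h_{SE})=(\rho+1,\rho,\rho,\rho-1)$ corresponds to type $a_+$ in a right-facing row but to type $b_+$ in a left-facing row. With this dependency settled by a direct case check, I would verify that the local quantity
\[
M(v)=\begin{cases}\tfrac{1}{2}\bigl(h_{NE}(v)-h_{SW}(v)\bigr)&\text{if $v$ lies in a right-facing row,}\\[2pt]\tfrac{1}{2}\bigl(h_{NW}(v)-h_{SE}(v)\bigr)&\text{if $v$ lies in a left-facing row,}\end{cases}
\]
takes the value $+1$ on $b_+$ vertices, $-1$ on $b_-$ vertices, and $0$ on the other four types.

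Summing $M(v)$ over all vertices in a fixed column $k$, the contributions from the two rows of each double line telescope to $h(2k'-2,k)-h(2k',k)$, and a further telescoping over the $n$ double lines leaves $h(0,k)-h(2n,k)$. The DWBC fix $h(0,k)=\rho+k$ and $h(2n,k)=\rho-k$, so
\[
2\sum_v M(v)=\sum_{k=1}^n\bigl(h(0,k)-h(2n,k)\bigr)=\sum_{k=1}^n 2k=n(n+1),
\]
establishing the first identity $\nu(b_+)=\nu(b_-)+\binom{n+1}{2}$.

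For the second identity I would argue row by row, since the reflecting-end data must enter. A $c$-type vertex is, in either orientation, one whose N and S edge spins differ; comparing the numbers of descents in the top and bottom height sequences of row $r$ yields $\nu_r(c_+)-\nu_r(c_-)=\pm D_r/2$ with $D_r:=h(r-1,0)-h(r-1,n)-h(r,0)+h(r,n)$ and sign $+$ for right-facing, $-$ for left-facing rows. Pairing the two rows of each double line, $D_{2k-1}-D_{2k}$ simplifies: the right-column heights $h(i,n)=\rho+n-i$ have vanishing second differences, the outside-turn heights $h(2k-2,0)=h(2k,0)=\rho$ cancel, and the inside-turn height $h(2k-1,0)=\rho+\varepsilon_k$ (with $\varepsilon_k=-1$ for a $k_+$ turn and $+1$ for a $k_-$ turn) contributes $-2\varepsilon_k$. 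Summing over $k$ and using $\nu(k_+)+\nu(k_-)=n$ gives $\nu(c_+)-\nu(c_-)=\nu(k_+)-\nu(k_-)=n-2\nu(k_-)$, which is the claimed identity.

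The main obstacle is the first step: carefully matching weight types to height patterns in left-facing rows, where the $90^\circ$ rotation in the vertex picture alters which of the labels $a_\pm,b_\pm,c_\pm$ applies to a given height configuration. Once that bookkeeping is done, both statements reduce to elementary telescoping and linear arithmetic at the boundary.
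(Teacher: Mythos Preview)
Your argument is correct, modulo a harmless factor-of-$2$ slip in the prose preceding the display: the contribution of the two rows of a double line to $\sum_v M(v)$ is $\tfrac12\bigl(h(2k'-2,k)-h(2k',k)\bigr)$, not $h(2k'-2,k)-h(2k',k)$; your displayed equation $2\sum_v M(v)=\sum_k(h(0,k)-h(2n,k))$ is nonetheless right.

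For the second identity your row-by-row descent count is essentially the paper's own proof in different language. The paper looks at the sequence of horizontal arrows along each row, notes that $c$-vertices are precisely where the arrow direction flips, and reads off $\nu_r(c_+)-\nu_r(c_-)$ from the endpoint arrows of each row---which is exactly your $D_r$ expressed in spins instead of heights.

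Your first-identity argument, however, is genuinely different. The paper counts, over the whole lattice, how many edge-arrows point up, down, left, and right, writes each total once from the boundary data and once as a linear combination of the $\nu\bigl(a_\pm^{N,S}\bigr)$, $\nu\bigl(b_\pm^{N,S}\bigr)$, $\nu\bigl(c_\pm^{N,S}\bigr)$, $\nu(k_\pm)$, and then takes a suitable linear combination of the resulting six equations to isolate $\nu(b_+)-\nu(b_-)$. Your height-telescoping is more direct: you exhibit a single local observable $M(v)$ that is $\pm1$ precisely on $b_\pm$ vertices and sum it column by column, so the $a_\pm$, $c_\pm$, and $k_\pm$ counts never enter. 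The paper's method has the merit that the same system of arrow equations could in principle yield other linear relations among the $\nu(w)$; yours is tailored to the specific identity but cleaner, and the ``main obstacle'' you flag---matching the six weight labels to height patterns in the rotated left-facing rows---is, as you say, just a finite case check.
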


As in the proof of the corresponding result for the 6V model without reflecting end (see,~e.g., Section~7.1 of \cite{Bressoud1999}), we will count arrows. In our case we need to differentiate between the vertices of the ingoing and outgoing rows.
For the proof, define $a_+^N$ and $a_+^S$ ($N$ for north and $S$ for south) for the upper and lower parts of the double rows respectively (as in Fig.~\ref{fig:aplussandn}), such that $\nu(a_+)=\nu\big(a_+^N\big)+\nu\big(a_+^S\big)$. Define the weights similarly for all other vertex types. For instance, we see that $a_+^S$ has two left and two up pointing arrows, whereas $a_+^N$ has two right and two up pointing arrows. We~interpret the $k_+$ turn as one left arrow, one up arrow and one right arrow, and reversed for the $k_-$ turn (see Fig.~\ref{fig:arrowpicreflend}).

\begin{figure}[t]
\centering
 \subfloat{%	
 \begin{tikzpicture}[baseline={([yshift=-.5*10pt*0.6]current bounding box.center)}, scale=0.7, font=\footnotesize]
		\draw[midarrow={stealth}] (0,1) -- (1,1); %left
		\draw[midarrow={stealth reversed}, <-] (2,1) -- (1,1); %right
		\draw[midarrow={stealth}] (1,0) -- (1,1); %down
		\draw[midarrow={stealth reversed}, <-] (1,2) -- (1,1); %up
		\draw (1,-1) node{\small{$a_+^N$}};
		\end{tikzpicture}
	}\hfil
	 \subfloat{%	
 \begin{tikzpicture}[baseline={([yshift=-.5*10pt*0.6]current bounding box.center)}, scale=0.7, font=\footnotesize]
		\draw[midarrow={stealth reversed}, <-] (0,1) -- (1,1); %left
		\draw[midarrow={stealth}] (2,1) -- (1,1); %right
		\draw[midarrow={stealth}] (1,0) -- (1,1); %down
		\draw[midarrow={stealth reversed}, <-] (1,2) -- (1,1); %up
			\draw (1,-1) node{\small{$a_+^S$}};
		\end{tikzpicture}
		}
\vspace{-1mm}
\caption{The vertex $a_+$ in the upper and lower part of a double row respectively.}
\label{fig:aplussandn}
\end{figure}
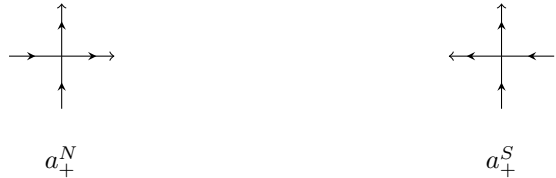

\begin{proof}[Proof of Lemma~\ref{numberofnodesoftypew}.]
Since the number of arrows pointing down on the upper boundary is the same as the number of arrows pointing up on the lower boundary, and all arrows have to ``travel through the lattice'' according to the ice rule,
 and go out to the right, the total number of arrows pointing upwards must be the same as the number of arrows pointing downwards. Therefore the number of arrows pointing upwards in a state is always $n(n+1)$. The same holds for the number of arrows pointing downwards.
A similar reasoning yields that the total number of arrows pointing to the left is $\sum_{i=1}^n i=n(n+1)/2$ and the total number of arrows pointing to~the right is $\sum_{i=n}^{2n} i=3n(n+1)/2$.

On the other hand, we get the number of up arrows by, on every second row, counting the number of vertex types with up arrows (vertex types with two up arrows counted twice) plus the~number of~$k_+$ turns and possibly compensate for all the arrows on the lower boundary, depending on which rows we counted.
Hence we get that the number of up arrows for any state~is
\begin{gather}
\label{equ2}
2\nu\big(a_+^S\big) + 2\nu\big(b_+^S\big) + \big(c_+^S\big) + \big(c_-^S\big) + \nu(k_+)=n(n+1)
\end{gather}
if we count arrows at every ingoing (lower) row, and
\begin{gather}
\label{equ1}
2\nu\big(a_+^N\big) + 2\big(b_-^N\big) + \nu\big(c_+^N\big) + \nu\big(c_-^N\big) + \nu(k_+) + n=n(n+1)
\end{gather}
if we instead count arrows at the outgoing (upper) rows.
Similarly for the down arrows, we get
\begin{gather}
\label{equ4}
2\nu\big(a_-^S\big) + 2\big(b_-^S\big) + \big(c_+^S\big) + \big(c_-^S\big) + \nu(k_-)+n=n(n+1)
\end{gather}
and
\begin{gather}
\label{equ3}
2\nu\big(a_-^N\big) + 2\big(b_+^N\big) + \nu\big(c_+^N\big) + \nu\big(c_-^N\big) + \nu(k_-)=n(n+1).
\end{gather}
To get the number of left arrows, we count the number of the different vertices with left arrows, plus the number of all turns. In this way, we count every left arrow twice. Hence the number of left arrows is
\begin{gather}
\label{equ5}
\frac{2\big[\nu\big(a_-^N\big) + \big(b_-^N\big) + \nu\big(a_+^S\big) + \big(b_-^S\big)\big] + \nu(c_+) + \nu(c_-) +n}{2}
=\frac{n(n+1)}{2}.
\end{gather}
Similarly we get the number of right arrows by counting the number of different vertices of right arrows, plus the number of turns, plus the number of arrows on the right boundary. Now we have counted all arrows twice. The number of right arrows is
\begin{gather}
\label{equ6}
\frac{2\big[\nu\big(a_+^N\big) + \big(b_+^N\big) + \nu\big(a_-^S\big) + \nu\big(b_+^S\big)\big] + \nu(c_+) + \nu(c_-)+3n}{2}
=\frac{3n(n+1)}{2}.
\end{gather}
Now we add the equations \eqref{equ2}, \eqref{equ3} and two times \eqref{equ6}, and subtract \eqref{equ4}, \eqref{equ1} and two times \eqref{equ5}, to get
\begin{gather*}
2[\nu(b_+)-\nu(b_-)]=n(n+1),
\end{gather*}
which is the first part of the lemma.

To obtain the second result, we consider the left and right arrows in each row. For each pair of rows connected by a $k_+$ turn, the lower row has a left arrow $<$ on the leftmost edge and a~right arrow $>$ on the rightmost edge, as in the left picture of Fig.~\ref{fig:onedoublerow}. In between we can have any combination of arrows $<\cdots >$. Every two consecutive arrows $<<$ or $>>$ correspond to $a_\pm^S$ or $b_\pm^S$ vertices. Every $<>$ corresponds to a $c_+^S$ vertex and every $><$ is a $c_-^S$ vertex, so considering only the $c$ vertices, the first and last vertices are $c_+^S$ vertices. The upper row starts and ends with right arrows, so we have $>\cdots >$. Here $<>$ is a $c_-^N$ vertex and $><$ is a $c_+^N$ vertex. This means that the upper row starts with a $c_+^N$ vertex and ends with a $c_-^N$ vertex. Hence
$\big(c_+^S\big)=\big(c_-^S\big)+1$ and $\nu\big(c_+^N\big)=\nu\big(c_-^N\big)$ at the rows with a $k_+$ turn. Similarly $\big(c_+^S\big)=\big(c_-^S\big)$ and $\nu\big(c_+^N\big)=\nu\big(c_-^N\big)-1$ at the rows with a $k_-$ turn.
Hence for the whole lattice,
\begin{gather*}
\nu(c_+)=\nu(c_-)+n-2\nu(k_-),
\end{gather*}
which is the second part of the lemma.
\end{proof}

\begin{figure}[t]
\centering
\subfloat{%	
\begin{tikzpicture}[scale=0.7]

	% first the horizontal lines:
		\draw (0.4,0.4) arc (90:270:0.4);
		\draw[midarrow={stealth reversed}] (.4,-.4) -- (1,-.4);
		\draw (1, -.4) -- +(2, 0);
		\draw[midarrow={stealth}] (.4,+.4) -- (1,+.4);
		\draw[-] (1, +.4) -- +(2, 0);
		\draw[-stealth] (0,0.05) -- (0,0.06);
		\draw[midarrow={stealth}] (3,-.4) -- +(1,0);
		\draw[midarrow={stealth},->] (3,+.4) -- +(1,0);

	% finally the vertical lines:
	\foreach \x in {1,...,3} {
 \draw (\x,-1.25) -- +(0,0.85);
		\draw (\x,-0.4) -- +(0,0.8); % vertical lines inside the lattice
		\draw[->] (\x,0.4) -- +(0,.85);			
	}
\end{tikzpicture}
	}\hfil
	\subfloat{%	
	\begin{tikzpicture}[scale=0.7]

		% first the horizontal lines:
		\draw (0.4,0.4) arc (90:270:0.4);
		\draw[midarrow={stealth }] (.4,-.4) -- (1,-.4);
		\draw (1, -.4) -- +(2, 0);
		\draw[midarrow={stealth reversed}] (.4,+.4) -- (1,+.4);
		\draw[-] (1, +.4) -- +(2, 0);
		\draw[-stealth] (0,-0.05) -- (0,-0.06);
		\draw[midarrow={stealth}] (3,-.4) -- +(1,0);
		\draw[midarrow={stealth},->] (3,+.4) -- +(1,0);

	% finally the vertical lines:
	\foreach \x in {1,...,3} {
 \draw (\x,-1.25) -- +(0,0.85);
		\draw (\x,-0.4) -- +(0,0.8); % vertical lines inside the lattice
		\draw[->] (\x,0.4) -- +(0,.85);			
	}
\end{tikzpicture}
}\\
\vspace{-1mm}
\caption{On the left, a double row with a $k_+$ turn, and on the right, a double row with a $k_-$ turn, for $n=3$.}
\label{fig:onedoublerow}
\end{figure}
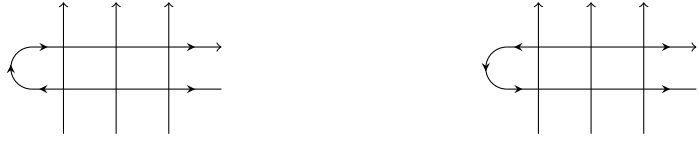

\subsection{The partition function in terms of the local heights}
\label{sectionchangepartitionfunction}

To simplify writing, we start by doing the variable change $q^{\rho} \rightarrow \rho$ and $q^\zeta \rightarrow \zeta$. Then we specialize $\lambda_i=-1/2$ and $\mu_i=0$, following Kuperberg.
	
\begin{Proposition}\label{prop3.2}
Let $\lambda_i=-1/2$, and $\mu_i=0$. For each state, let $N$ be the number of~$c_-$ vertices and $M$ the number of~$k_-$ turns. For each vertex, let $a$, $b$, $c$, $d$ denote the heights on the adjacent faces as in Fig.~{\rm \ref{fig:faceheights}}, and for each turn, let $a$ be the height inside the turn. Then the partition function of the 8VSOS model with DWBC and reflecting end is
\begin{gather*}
Z_n\big(q^{-1/2}, \dots, q^{-1/2}, 1,\dots, 1, \rho, \zeta\big)=C\sum_{\textup{states}} D\prod_{\textup{vertices}} \frac{\vartheta\big(\rho q^{(3a-b+3c-d)/4}\big)}{\vartheta(\rho q^a)}
\\ \hphantom{Z_n\big(q^{-1/2}, \dots, q^{-1/2}, 1,\dots, 1, \rho, \zeta\big)=}
{}\times\prod_{\textup{turns}}\left(\!\left( \frac{\vartheta\big(q^{1/2}\big)}{\vartheta(q)}\right)^{a+1} \frac{\vartheta\big(\rho^{(1-a)/2} \zeta q^{1/2}\big)}{\vartheta\big(\rho^{(1-a)/2} \zeta q^{-1/2}\big)}\right)\!,
\end{gather*}
where
\begin{gather*}
C=(-1)^{\binom{n+1}{2}}q^{(3n^2-n)/4}\left(\frac{\vartheta(q^{1/2})}{\vartheta(q)}\right)^{2n^2-n},
\qquad %\text{and} \qquad
D=\left(\frac{q^{-1/2}\vartheta(q)^2}{\vartheta\big(q^{1/2}\big)^2}\right)^N
\left(\frac{\vartheta\big(q^{1/2}\big)}{\vartheta(q)}\right)^{2M}.
\end{gather*}
\end{Proposition}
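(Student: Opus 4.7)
My plan follows the approach indicated in the paper: factor each specialized local weight into a scalar times the proposed height factor, and then use Lemma~\ref{numberofnodesoftypew} to show that the product of all scalars over a state equals $C\cdot D$. First I would substitute $\lambda_i=-1/2$ and $\mu_j=0$ directly into the six vertex weights of Fig.~\ref{fig:arrowpic} and the two turn weights of Fig.~\ref{fig:arrowpicreflend}. Writing $[x] = q^{-x/2}\vartheta(q^x)$ and renaming $q^\rho\mapsto\rho$, $q^\zeta\mapsto\zeta$, the $q^{-\rho/2}$ prefactors cancel between numerator and denominator of each local weight (since $\rho$ appears the same number of times in each), leaving only a scalar in $q$ multiplying a $\vartheta$-ratio in $\rho$. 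For each of the six vertex types I would check the factorization $w = S_w\cdot \vartheta(\rho q^{(3a-b+3c-d)/4})/\vartheta(\rho q^a)$ by direct calculation: the exponent $(3a-b+3c-d)/4$ evaluates to $a$ for $a_\pm$, to $a\mp 1$ for $b_\pm$, and to $a\mp 1/2$ for $c_\pm$, which exactly matches the shifts in the original weights. Using $\vartheta(1/x)=-x^{-1}\vartheta(x)$ to expand $[-1/2]$, one obtains $S_{a_\pm}=q^{1/4}\vartheta(q^{1/2})/\vartheta(q)$, $S_{b_+}=-q^{7/4}\vartheta(q^{1/2})/\vartheta(q)$, $S_{b_-}=-q^{3/4}\vartheta(q^{1/2})/\vartheta(q)$, and $S_{c_\pm}=q^{\pm 1/4}$. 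The analogous factorization for turns, with inner height $a=-1$ for $k_+$ and $a=1$ for $k_-$, completes the local analysis.

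The scalar prefactor of a state is then $\prod_w S_w^{\nu(w)}$ together with the analogous product of turn scalars. The overall sign is $(-1)^{\nu(b_+)+\nu(b_-)}=(-1)^{2\nu(b_-)+\binom{n+1}{2}}=(-1)^{\binom{n+1}{2}}$ by the first identity of Lemma~\ref{numberofnodesoftypew}, matching the sign in $C$. For the $q$-power and the exponent of $\vartheta(q^{1/2})/\vartheta(q)$, I would combine the total count $\sum_w\nu(w)=2n^2$ with the two identities $\nu(b_+)-\nu(b_-)=\binom{n+1}{2}$ and $\nu(c_+)-\nu(c_-)=n-2M$ of Lemma~\ref{numberofnodesoftypew} to express both as a state-independent piece (going into $C$) plus a correction depending only on $N=\nu(c_-)$ and $M=\nu(k_-)$ (going into $D$). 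This should reproduce $(\vartheta(q^{1/2})/\vartheta(q))^{2n^2-n}q^{(3n^2-n)/4}$ in $C$ and both the $(q^{-1/2}\vartheta(q)^2/\vartheta(q^{1/2})^2)^N$ and $(\vartheta(q^{1/2})/\vartheta(q))^{2M}$ factors in $D$.

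The main obstacle is the bookkeeping: three distinct $q^{k/4}$ contributions appear among the $a_\pm$, $b_+$, $b_-$ scalars (with further $q^{\pm 1/4}$ and $q^{-1/2}$ terms from $c_\pm$ and $k_\pm$), so one must apply each identity of Lemma~\ref{numberofnodesoftypew} in just the right combination to kill the dependence on $\nu(b_-)$, $\nu(a_\pm)$, and $\nu(c_+)$ while leaving only $N$ and $M$. A secondary subtlety, visible only from Fig.~\ref{fig:faceheights}, is that the convention for $(a,b,c,d)$ around a vertex in a left-oriented row is rotated $90^\circ$ relative to Fig.~\ref{fig:arrowpic}, so the formula $(3a-b+3c-d)/4$ must be interpreted with respect to that rotated labelling; once this is handled correctly, the local factorization holds uniformly across upper and lower rows.
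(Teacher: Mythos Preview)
Your approach is exactly the one the paper uses: specialize the weights, peel off a scalar from each, verify that the remaining $\vartheta$-ratio is $\vartheta(\rho q^{(3a-b+3c-d)/4})/\vartheta(\rho q^a)$, and then use Lemma~\ref{numberofnodesoftypew} to collapse the product of scalars into $C\cdot D$.

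There is, however, a computational slip in your scalars for $b_\pm$ that would actually block the bookkeeping step. From $[-1/2]=q^{1/4}\vartheta(q^{-1/2})$ and $\vartheta(q^{-1/2})=-q^{-1/2}\vartheta(q^{1/2})$ one gets $[-1/2]=-q^{-1/4}\vartheta(q^{1/2})$, and hence
\[
S_{b_+}=-q^{3/4}\,\frac{\vartheta(q^{1/2})}{\vartheta(q)},\qquad
S_{b_-}=-q^{-1/4}\,\frac{\vartheta(q^{1/2})}{\vartheta(q)},
\]
not $-q^{7/4}$ and $-q^{3/4}$ as you wrote (these are the values the paper also obtains). The point is not cosmetic: with the correct scalars one has $S_{b_+}S_{b_-}=S_{a_\pm}^2$, so that the vertex scalar product becomes $S_{a_\pm}^{2n^2}\cdot(-1)^{\nu(b_+)+\nu(b_-)}q^{(\nu(b_+)-\nu(b_-))/2}\cdot(\text{$c$-part})$, and only the \emph{difference} $\nu(b_+)-\nu(b_-)=\binom{n+1}{2}$ from Lemma~\ref{numberofnodesoftypew} is needed. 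With your values one picks up an extra factor $q^{\nu(b_+)+\nu(b_-)}$, and the \emph{sum} $\nu(b_+)+\nu(b_-)$ is not determined by $N$ and $M$ (it varies from state to state), so you would be unable to ``kill the dependence on $\nu(b_-)$, $\nu(a_\pm)$'' as claimed. Once the scalars are corrected, the rest of your outline goes through and reproduces the stated $C$ and $D$; the $q^{-n/2}$ from the $n$ turn weights is what turns $q^{(3n^2+n)/4}$ into the $q^{(3n^2-n)/4}$ in $C$.
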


The proof is similar to the proof of Theorem 7.1 in~\cite{Rosengren2009}.

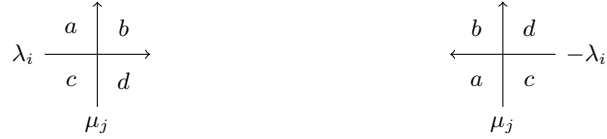
\begin{figure}[t]
\centering
	\subfloat{%	
	\begin{tikzpicture}[scale=0.7, font=\footnotesize]
		\draw[->] (0,1) node[left]{$\lambda_i$} -- (2,1);
		\draw[->] (1,0) node[below]{$\mu_j$} -- (1,2);
		\node at (0.5, 1.5) {$a$};
		\node at (1.5, 1.5) {$b$};
		\node at (0.5, 0.5) {$c$};
		\node at (1.5, 0.5) {$d$};
	\end{tikzpicture}
	}\hfil
	\subfloat{%	
\begin{tikzpicture}[scale=0.7, font=\footnotesize]
		\draw[<-] (0,1) -- (2,1) node[right]{$-\lambda_i$};
		\draw[->] (1,0) node[below]{$\mu_j$} -- (1,2);
		\node at (0.5, 1.5) {$b$};
		\node at (1.5, 1.5) {$d$};
		\node at (0.5, 0.5) {$a$};
		\node at (1.5, 0.5) {$c$};
	\end{tikzpicture}
}
\vspace{-2mm}
\caption{Vertices with heights $a$, $b$, $c$ and $d$ on the adjacent faces.}
\label{fig:faceheights}
\end{figure}

\begin{proof}
Each vertex is one of the vertices in Fig.~\ref{fig:faceheights}. Hence each weight is always $w(\lambda_i\pm \mu_j, \rho q^a)$.
Putting $\lambda_i=-1/2$ and $\mu_i=0$ yields that the weights at the vertices are always $w(-1/2, \rho q^a)$, and the partition function will be
\begin{gather*}
Z_n\big(q^{-1/2}, \dots, q^{-1/2}, 1,\dots, 1, \rho, \zeta\big)=\sum_{\text{states}} \prod_{\text{vertices}} w(-1/2, \rho q^a)\prod_{\text{turns}} k_\pm(-1/2, \rho,\zeta).
\end{gather*}
The local weights become
\begin{alignat*}{3}
&a_+(-1/2, \rho q^a)=a_-(-1/2, \rho q^a)\!=\!\frac{q^{1/4} \vartheta\big(q^{1/2}\big)}{\vartheta(q)},&&&
\\
&b_+(-1/2, \rho q^a)=
\frac{-q^{3/4}\vartheta\big(q^{1/2}\big)\vartheta(\rho q^{a-1})}{\vartheta(\rho q^a)\vartheta(q)},
&&b_-(-1/2, \rho q^a)=
\frac{-q^{-1/4}\vartheta\big(q^{1/2}\big)\vartheta\big(\rho q^{a+1}\big)}{\vartheta(\rho q^a)\vartheta(q)},&
\\
&c_+(-1/2, \rho q^a)=
\frac{q^{1/4}\vartheta\big(\rho q^{a-1/2}\big)}{\vartheta(\rho q^a)},
&&c_-(-1/2, \rho q^a)=
\frac{q^{-1/4}\vartheta\big(\rho q^{a+1/2}\big)}{\vartheta(\rho q^a)},&
\\
&k_+(-1/2, \rho,\zeta)=
\frac{q^{-1/2}\vartheta\big(\rho \zeta q^{1/2}\big)}{\vartheta\big(\rho \zeta q^{-1/2}\big)},
&&k_-(-1/2, \rho, \zeta)=
\frac{q^{-1/2}\vartheta\big(\zeta q^{1/2}\big)}{\vartheta\big(\zeta q^{-1/2}\big)},&
\end{alignat*}
where we used $q^{1/4}\vartheta\big(q^{-1/2}\big)=-q^{-1/4}\vartheta\big(q^{1/2}\big)$ to get $b_\pm(-1/2, \rho q^a)$.

Each term of the partition function consists of~$2n^2$ factors of weights of the vertices and $n$ factors of weights of the turns. From each vertex weight we take out a factor $q^{1/4} \vartheta\big(q^{1/2}\big)/\vartheta(q)$, and from each $k_\pm(-1/2, \rho, \zeta)
$ we take out the factor $q^{-1/2}$ and put in a prefactor.
Then we factor out $-q^{1/2}$ from each $b_+$ vertex, and $-q^{-1/2}$ from each $b_-$ vertex. By Lemma~\ref{numberofnodesoftypew}, there are always $\binom{n+1}{2}$ more $b_+$ vertices than $b_-$ vertices in each state. Hence some of these factors cancel each other, and $\big({-}q^{1/2}\big)^{n(n+1)/2}$ goes to the prefactor.
Let $N$ be the number of~$c_-$ vertices and $M$ the number of~$k_-$ turns in a given state. Lemma~\ref{numberofnodesoftypew} yields that the number of~$c_+$ vertices is $N+n-2M$. We~factor out $\frac{\vartheta(q)}{\vartheta(q^{1/2})}$ from each $c_+$ and $\frac{q^{-1/2}\vartheta(q)}{\vartheta(q^{1/2})}$ from each $c_-$,
so~that $\left(\frac{\vartheta(q)}{\vartheta(q^{1/2})}\right)^{n-2M}\left(\frac{q^{-1/2}\vartheta(q)^2}{\vartheta(q^{1/2})^2}\right)^N$ becomes a part of the prefactor.

Our new weights are
\begin{gather*}
\begin{aligned}
&\tilde a_+(-1/2, \rho q^a)=\tilde a_-(-1/2, \rho q^a)=1,
\\
&\tilde b_+(-1/2, \rho q^a)=\frac{\vartheta\big(\rho q^{a-1}\big)}{\vartheta(\rho q^a)},
&\qquad& \tilde b_-(-1/2, \rho q^a)=\frac{\vartheta\big(\rho q^{a+1}\big)}{\vartheta(\rho q^a)},
\\
&\tilde c_+(-1/2, \rho q^a)=
\frac{\vartheta\big(\rho q^{a-1/2}\big)}{\vartheta(\rho q^a)},
&&\tilde c_-(-1/2, \rho q^a)=\frac{\vartheta\big(\rho q^{a+1/2}\big)}{\vartheta(\rho q^a)},
\\
&\tilde k_+(-1/2, \rho,\zeta)=
\frac{\vartheta\big(\rho \zeta q^{1/2}\big)}{\vartheta\big(\rho \zeta q^{-1/2}\big)},
&& \tilde k_-(-1/2, \rho, \zeta)=\frac{\vartheta\big(\zeta q^{1/2}\big)}{\vartheta\big(\zeta q^{-1/2}\big)}.
\end{aligned}
\end{gather*}
One can check that for each type of vertex $\tilde a_\pm$, $\tilde b_\pm$ and $\tilde c_\pm$ with heights $a$, $b$, $c$, $d$ on the adjacent faces, as in Fig.~\ref{fig:faceheights},
we have
\begin{gather*}
\tilde w(-1/2, \rho q^a)=\frac{\vartheta\big(\rho q^{(3a-b+3c-d)/4}\big)}{\vartheta(\rho q^a)}.
\end{gather*}
Furthermore
\begin{gather*}
\tilde k_\pm(-1/2,\rho,\zeta)=\frac{\vartheta\big(\rho^{(1-a)/2} \zeta q^{1/2}\big)}{\vartheta\big(\rho^{(1-a)/2} \zeta q^{-1/2}\big)},
\end{gather*}
where $a$ is the height of the face inside the turn.
The proposition follows.
\end{proof}

\subsection{The partition function in terms of three-colorings}\label{partfcnfaces}
Put $\eta = -2/3$ and define $\omega={\rm e}^{2\pi{\rm i} /3}$. Then $\omega=q=q^{-1/2}$. For any $a\in \Z$ and arbitrary $x$, we~have
\begin{gather}
\label{mod3}
\omega^a=\omega^{a+3}, \qquad 1+\omega+\omega^2=0,
\\
\label{omegaomega2}
\vartheta\big(p^{1/2}\omega\big)=\vartheta\big(p^{1/2}\omega^2\big),
\end{gather}
and
\begin{gather}\vartheta(x\omega^a)\vartheta\big(x\omega^{a+1}\big)\vartheta\big(x\omega^{a+2}\big)=\vartheta\big(x^3, p^3\big).
\label{threeproduct}
\end{gather}
Other identities we will use are \cite{Rosengren2011}
\begin{gather}\label{Rosengren4.2a}
\vartheta(-1)\vartheta\big(p^{1/2}\big)\vartheta\big({-}p^{1/2}\big)=2,
\end{gather}
and
\begin{gather}
\label{Rosengren4.2b}
\vartheta(-\omega)\vartheta\big(p^{1/2}\omega\big)\vartheta\big({-}p^{1/2}\omega\big)=-\omega^2.
\end{gather}

Now the constants from Proposition~\ref{prop3.2} become
$C=(-1)^{\binom{n}{2}}\omega^{n^2}$ and $D=\omega^M$.
Observe that if $x$ is an even number, then $q^{x/4}=q^x$. Inserting the possible values of the heights on the faces, we see that $3a-b+3c-d$ is always even. Since $b$ and $d$ are noncongruent modulo $3$, then $b$, $d$ and $-b-d$ are noncongruent modulo~$3$. Thus,
using \eqref{threeproduct},
\begin{gather*}
\frac{\vartheta\big(\rho q^{(3a-b+3c-d)/4}\big)}{\vartheta(\rho q^a)}
=\frac{\vartheta\big(\rho \omega^{-b-d}\big)}{\vartheta(\rho \omega^a)}
=\frac{\vartheta\big(\rho^3, p^3\big)}{\vartheta(\rho \omega^a)\vartheta(\rho \omega^b)\vartheta(\rho \omega^d)}.
\end{gather*}
Hence the partition function becomes
\begin{gather}
Z_n(\omega, \dots, \omega, 1,\dots, 1, \rho,\zeta)\nonumber
\\ \qquad{}
=C^\prime \sum_{\text{states}} \omega^M\prod_{\text{vertices}}\frac{1}{\vartheta(\rho \omega^a)\vartheta(\rho \omega^b)\vartheta(\rho \omega^d)}
\prod_{\text{turns}} \frac{\vartheta\big(\rho^{(1-a)/2} \zeta \omega^{-1}\big)}{\vartheta\big(\rho^{(1-a)/2} \zeta \omega\big)} ,
\label{partfcnabd}
\end{gather}
where $C^\prime=(-1)^{\binom{n}{2}}\omega^{n^2}\vartheta\big(\rho^3, p^3\big)^{2n^2}$.
This means that for each vertex in the lattice, we just need to know the heights of three of the adjacent faces, that is, the faces $a$, $b$ and $d$ in Fig.~\ref{fig:faceheights}.
\begin{figure}[t]
\centering
\subfloat{
\begin{tikzpicture}[scale=0.9, baseline={([yshift=-.5*10pt*0.6]current bounding box.center)}, font=\footnotesize]
	% first the horizontal lines:
	\foreach \y in {1,...,2} {
		\draw (.38,1.5*\y-.25-.38) -- +(1.3+0.32,0);
		\draw[midarrow={stealth}] (2,1.5*\y-.25-.38) -- +(1,0);
		\draw (.38,1.5*\y-.25+.38) -- +(1.3+0.32,0);
		\draw[midarrow={stealth},->] (2,1.5*\y-.25+.38) -- +(1,0);
		\draw (0.38,1.5*\y-.25+.38) arc (90:270:0.38);			
	}

	% then the wall
	\fill[preaction={fill,white},pattern=north east lines, pattern color=gray] (0,0) rectangle (-.15,4) ; \draw (0,0) -- (0,4);
	
	% finally the vertical lines:
	\foreach \x in {1,...,2} {%
		\draw[midarrow={stealth}] (\x,0) -- +(0,.87); % dotted lines for \ket{0}
		\draw (\x,.87) -- +(0,2.26); % vertical lines inside the lattice
		
		\draw[midarrow={stealth reversed}, ->] (\x,3.13) -- +(0,.97);	
	}

 %heights at the faces of the border
	\node at (0.5, 3.9) {$0$};
 \node at (1.5, 3.9) {$1$};
 \node at (2.5, 3.9) {$2$};

 \node at (2.85, 2.75) {$1$};
 \node at (2.85, 2) {$0$};
 \node at (2.7, 1.25) {$-1$};

 \node at (2.5, 0.1) {$-2$};
 \node at (1.5, 0.1) {$-1$};
 \node at (0.5, 0.1) {$0$};

		\node at (0.3, 2) {$0$};
		
 %bullets at the border
	\node at (0.7, 3.4) {\huge $\cdot$};
	\node at (1.3, 3.4) {\huge $\cdot$};
	\node at (1.7, 3.4) {\huge $\cdot$};
	\node at (2.3, 3.4) {\huge $\cdot$};
	
	\node at (2.3, 2.9) {\huge $\cdot$};
	\node at (2.3, 2.6) {\huge $\cdot$};
	\node at (2.3, 2) {\huge $\cdot$};
	\node at (2.3, 1.4) {\huge $\cdot$};
	\node at (2.3, 1.1) {\huge $\cdot$};
	
	\node at (1.5, 0.6) {\huge $\cdot$};
 	\node at (0.7, 0.6) {\huge $\cdot$};
 	
	\node at (0.7, 1.25) {\huge $\cdot$};
	\node at (0.7, 1.85) {\huge $\cdot$};
	\node at (0.7, 2.15) {\huge $\cdot$};
	\node at (0.7, 2.75) {\huge $\cdot$};
	
 %bullets in the bulk
	\node at (1.3, 1.10) {\huge $\cdot$};
	\node at (1.5, 1.40) {\huge $\cdot$};
	\node at (1.7, 1.10) {\huge $\cdot$};

	\node at (1.3, 1.85) {\huge $\cdot$};
	\node at (1.5, 2.15) {\huge $\cdot$};
	\node at (1.7, 1.85) {\huge $\cdot$};
	
	\node at (1.3, 2.60) {\huge $\cdot$};
	\node at (1.5, 2.90) {\huge $\cdot$};
	\node at (1.7, 2.60) {\huge $\cdot$};

\end{tikzpicture}
}\hfil
\subfloat{%	
\begin{tikzpicture}[scale=0.9, baseline={([yshift=-.5*10pt*0.6]current bounding box.center)}, font=\footnotesize]
	% first the horizontal lines:
	\foreach \y in {1,...,3} {
		\draw (.38,1.5*\y-.25-.38) -- +(2.3+0.32,0);
		\draw[midarrow={stealth}] (3,1.5*\y-.25-.38) -- +(1,0);
		\draw (.38,1.5*\y-.25+.38) -- +(2.3+0.32,0);
		\draw[midarrow={stealth},->] (3,1.5*\y-.25+.38) -- +(1,0);
		\draw (0.38,1.5*\y-.25+.38) arc (90:270:0.38);
				
	}

	% then the wall
	\fill[preaction={fill,white},pattern=north east lines, pattern color=gray] (0,0) rectangle (-.15,5.5) ; \draw (0,0) -- (0,5.5);
	
		% then the vertical lines:
	\foreach \x in {1,...,3} {
		\draw[midarrow={stealth}] (\x,0) -- +(0,.87); % dotted lines for \ket{0}
		\draw (\x,.87) -- +(0,3.76); % vertical lines inside the lattice
		
		\draw[midarrow={stealth reversed}, ->] (\x,4.63) -- +(0,.97);	
	}

 %heights at the faces of the border
	\node at (0.5, 5.4) {$0$};
 \node at (1.5, 5.4) {$1$};
 \node at (2.5, 5.4) {$2$};
 \node at (3.5, 5.4) {$3$};

 \node at (3.85, 4.25) {$2$};
 \node at (3.85, 3.5) {$1$};
 \node at (3.85, 2.75) {$0$};
 \node at (3.7, 2) {$-1$};
 \node at (3.7, 1.25) {$-2$};

 \node at (3.5, 0.1) {$-3$};
 \node at (2.5, 0.1) {$-2$};
 \node at (1.5, 0.1) {$-1$};
 \node at (0.5, 0.1) {$0$};
		
		\node at (0.3, 2) {$0$};
		\node at (0.3, 3.5) {$0$};

 %bullets at the border
	\node at (0.7, 4.9) {\huge $\cdot$};
	\node at (1.3, 4.9) {\huge $\cdot$};
	\node at (1.7, 4.9) {\huge $\cdot$};
	\node at (2.3, 4.9) {\huge $\cdot$};
	\node at (2.7, 4.9) {\huge $\cdot$};
	\node at (3.3, 4.9) {\huge $\cdot$};
	
	\node at (3.3, 4.4) {\huge $\cdot$};
	\node at (3.3, 4.1) {\huge $\cdot$};
	\node at (3.3, 3.5) {\huge $\cdot$};
	\node at (3.3, 2.9) {\huge $\cdot$};
	\node at (3.3, 2.6) {\huge $\cdot$};
	\node at (3.3, 2) {\huge $\cdot$};
	\node at (3.3, 1.4) {\huge $\cdot$};
	\node at (3.3, 1.1) {\huge $\cdot$};
	
	\node at (2.5, 0.6) {\huge $\cdot$};
	\node at (1.5, 0.6) {\huge $\cdot$};
 	\node at (0.7, 0.6) {\huge $\cdot$};
 	
	\node at (0.7, 1.25) {\huge $\cdot$};
	\node at (0.7, 1.85) {\huge $\cdot$};
	\node at (0.7, 2.15) {\huge $\cdot$};
	\node at (0.7, 2.75) {\huge $\cdot$};
	\node at (0.7, 3.35) {\huge $\cdot$};
	\node at (0.7, 3.65) {\huge $\cdot$};
	\node at (0.7, 4.25) {\huge $\cdot$};
	
 %bullets in the bulk
	\node at (1.3, 1.10) {\huge $\cdot$};
	\node at (1.5, 1.40) {\huge $\cdot$};
	\node at (1.7, 1.10) {\huge $\cdot$};
	
	\node at (2.3, 1.10) {\huge $\cdot$};
	\node at (2.5, 1.40) {\huge $\cdot$};
	\node at (2.7, 1.10) {\huge $\cdot$};
	
	\node at (1.3, 1.85) {\huge $\cdot$};
	\node at (1.5, 2.15) {\huge $\cdot$};
	\node at (1.7, 1.85) {\huge $\cdot$};
	
	\node at (2.3, 1.85) {\huge $\cdot$};
	\node at (2.5, 2.15) {\huge $\cdot$};
	\node at (2.7, 1.85) {\huge $\cdot$};
	
	\node at (1.3, 2.60) {\huge $\cdot$};
	\node at (1.5, 2.90) {\huge $\cdot$};
	\node at (1.7, 2.60) {\huge $\cdot$};
	
	\node at (2.3, 2.60) {\huge $\cdot$};
	\node at (2.5, 2.90) {\huge $\cdot$};
	\node at (2.7, 2.60) {\huge $\cdot$};
	
	\node at (1.3, 3.35) {\huge $\cdot$};
	\node at (1.5, 3.65) {\huge $\cdot$};
	\node at (1.7, 3.35) {\huge $\cdot$};
	
	\node at (2.3, 3.35) {\huge $\cdot$};
	\node at (2.5, 3.65) {\huge $\cdot$};
	\node at (2.7, 3.35) {\huge $\cdot$};
	
	\node at (1.3, 4.10) {\huge $\cdot$};
	\node at (1.5, 4.40) {\huge $\cdot$};
	\node at (1.7, 4.10) {\huge $\cdot$};
	
	\node at (2.3, 4.10) {\huge $\cdot$};
	\node at (2.5, 4.40) {\huge $\cdot$};
	\node at (2.7, 4.10) {\huge $\cdot$};
\end{tikzpicture}
}
\\
\caption{The faces that we have to keep track of are marked with a dot, for states with $n=2$ and $n=3$. The number of dots on a face corresponds to the number of factors $1/\vartheta(\rho\omega^a)$ originating from that face, where $a$ is the height of the face.}
\label{fig:dotpicture}
\end{figure}
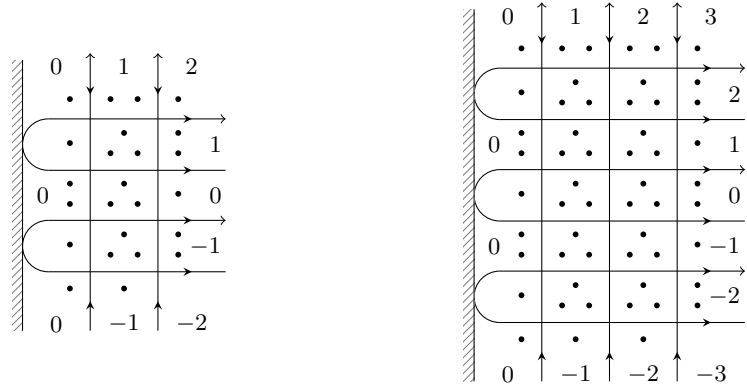

In the next proposition, we will show that we can rewrite the 8VSOS partition function in terms of the partition function of the three-color model.
Define
\begin{gather*}
Z_{n, m}^{3C}(t_0, t_1, t_2)=\sum_{\substack{\text{states with}\\ \text{$m$ positive turns}}} \prod_{\text{faces}} t_i
\end{gather*}
to be the partition function of all three-colorings (with color $0$ fixed in the upper left corner) with a given number of positive turns (i.e., a given number of turns with color $2$), denoted $m$, and where $t_i$ is the weight assigned to color $i$.

\begin{Proposition}\label{myformula}
Let $\eta = -2/3$. Then the partition function is
\begin{gather}
Z_n(\omega, \dots, \omega, 1,\dots, 1, \rho,\zeta) =(-1)^{\binom{n}{2}}\sum_{m=0}^n
\left(\frac{\vartheta\big(\rho \zeta \omega^{-1}\big)}{\vartheta(\rho \zeta \omega)}\right)^m\left(\frac{\vartheta\big(\zeta \omega^{-1}\big)}{\vartheta( \zeta \omega)}\right)^{n-m} \omega^{n^2+n-m}\nonumber
\\ \hphantom{Z_n(\omega, \dots, \omega, 1,\dots, 1, \rho,\zeta)=}
{}\times \vartheta\big(\rho^3, p^3\big)^{2n^2+2n}\vartheta(\rho)^{n+3}\vartheta\big(\rho\omega^{-1}\big)^{2m} \vartheta(\rho\omega)^{2(n-m)}B\nonumber
\\ \hphantom{Z_n(\omega, \dots, \omega, 1,\dots, 1, \rho,\zeta)=}
{}\times Z_{n, m}^{3C}\left(\frac{1}{\vartheta(\rho)^3}, \frac{1}{\vartheta(\rho \omega)^3}, \frac{1}{\vartheta(\rho \omega^2)^3}\right)\!,\label{mposstates}
\end{gather}
with
\begin{gather*}
B=\begin{cases}
1, &\text{for}\quad n\equiv 0 \mod 3,
\\
\dfrac{\vartheta\big(\rho\omega^{-1}\big)}{\vartheta(\rho)},
&\text{for}\quad n\equiv 1 \mod 3,
\\
\dfrac{\vartheta(\rho\omega)\vartheta\big(\rho\omega^{-1}\big)}{\vartheta(\rho)^2},
&\text{for}\quad n\equiv 2 \mod 3.
\end{cases}
\end{gather*}
\end{Proposition}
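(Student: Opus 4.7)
My plan is to start from equation~\eqref{partfcnabd} and split the sum over states by $m$, the number of positive turns, identifying which state-dependent factors assemble into $Z^{3C}_{n,m}(t_0,t_1,t_2)$ and which produce a common prefactor. For a state with $m$ positive and $n-m$ negative turns, $\omega^M=\omega^{n-m}$, and since the height inside a $k_+$ turn has displacement $-1$ while that inside a $k_-$ turn has displacement $+1$, the product of turn weights in~\eqref{partfcnabd} collapses to $\bigl(\vartheta(\rho\zeta\omega^{-1})/\vartheta(\rho\zeta\omega)\bigr)^m\bigl(\vartheta(\zeta\omega^{-1})/\vartheta(\zeta\omega)\bigr)^{n-m}$, which reproduces the first two factors of~\eqref{mposstates}.

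The next step is to rewrite the vertex product as a product over faces. By Figure~\ref{fig:faceheights}, each vertex weight $1/\bigl(\vartheta(\rho\omega^a)\vartheta(\rho\omega^b)\vartheta(\rho\omega^d)\bigr)$ omits the face labelled $c$, which is the lower-left face in a right-oriented row (odd vertex row) and the lower-right face in a left-oriented one (even vertex row). A direct case analysis, summarised by the dots in Figure~\ref{fig:dotpicture}, shows that every interior face is counted with multiplicity $k_f=3$; among boundary faces, $k_f=2$ for non-corner top faces, for non-corner left-column odd rows, and for non-corner right-column even rows; $k_f=1$ for non-corner bottom faces, left-column even rows (inside the turns), non-corner right-column odd rows, and the three corners other than the bottom-right; finally $k_f=0$ at the bottom-right corner.

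Writing $1/\vartheta(\rho\omega^{h_f})^{k_f}=t_{h_f}\,\vartheta(\rho\omega^{h_f})^{3-k_f}$ with $t_i=1/\vartheta(\rho\omega^i)^3$, the product $\prod_{\textup{faces}}t_{h_f}$ is exactly the weight of a three-coloring, and summing over states with fixed $m$ yields $Z^{3C}_{n,m}(t_0,t_1,t_2)$. Since $3-k_f=0$ on interior faces, the correction $\prod_{\textup{faces}}\vartheta(\rho\omega^{h_f})^{3-k_f}$ is supported on the boundary. Its only state-dependent piece comes from the $n$ inside-turn faces on the left, giving $\vartheta(\rho\omega^{-1})^{2m}\vartheta(\rho\omega)^{2(n-m)}$; the color-$0$ part of the fixed-color piece comes from the top-left and bottom-left corners (deficit $2$ each) together with the $n-1$ remaining left-column odd rows (deficit $1$ each), contributing $\vartheta(\rho)^{2+2+(n-1)}=\vartheta(\rho)^{n+3}$.

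The crux of the calculation is to show that the remaining fixed-color contribution from the top, right and bottom boundaries equals $\vartheta(\rho^3,p^3)^{2n}B$. For each color $i\in\{0,1,2\}$ I would collect the exponent of $\vartheta(\rho\omega^i)$ produced by the cyclic color pattern along those three sides together with the corresponding deficits $3-k_f$, and apply \eqref{threeproduct} in the form $\vartheta(\rho)\vartheta(\rho\omega)\vartheta(\rho\omega^2)=\vartheta(\rho^3,p^3)$ to extract the balanced power $\vartheta(\rho^3,p^3)^{2n}$; the residual imbalance is determined by $n\bmod 3$ and reproduces exactly the three cases of $B$. Combining this with the prefactor $C^\prime=(-1)^{\binom{n}{2}}\omega^{n^2}\vartheta(\rho^3,p^3)^{2n^2}$ in~\eqref{partfcnabd} and absorbing $\omega^M=\omega^{n-m}$ into $\omega^{n^2+n-m}=\omega^{n^2}\cdot\omega^{n-m}$ then produces the identity~\eqref{mposstates}.
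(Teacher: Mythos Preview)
Your proposal is correct and follows essentially the same route as the paper's proof: both start from~\eqref{partfcnabd}, convert the vertex product into a product over faces by counting multiplicities (your explicit case analysis is exactly what the dot picture in Figure~\ref{fig:dotpicture} encodes), separate the turn-face correction $\vartheta(\rho\omega^{-1})^{2m}\vartheta(\rho\omega)^{2(n-m)}$ and the left-wall $0$-face correction $\vartheta(\rho)^{n+3}$, and then evaluate the remaining top/right/bottom boundary correction as $\vartheta(\rho^3,p^3)^{2n}B$ by a case distinction on $n\bmod 3$. The only difference is cosmetic: the paper actually writes out the three cases of that last correction (its $B''$), whereas you leave that step as a routine bookkeeping exercise; carrying it out is straightforward using~\eqref{threeproduct} and gives exactly the stated $B$.
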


\begin{proof}
In the partition function \eqref{partfcnabd}, we need to know the heights of three of the adjacent faces to each vertex. Marking the faces that we need to keep track of, as in Fig.~\ref{fig:dotpicture}, we see that each face in the interior of the lattice with height $a$ gives rise to three factors $1/\vartheta(\rho\omega^a)$ in each state in the partition function. The face of each turn generates only one such factor in each state. Elsewhere on the boundary, the number of such factors differs on each face, but there the heights are known (see Fig.~\ref{fig:dotpicture}), so this contribution can be computed explicitly.

Hence the weight of a state can be written in terms of products of $1/\vartheta(\rho\omega^a)^3$ where $a$ is the height of each face, and the correction for the boundaries can be computed explicitly. We~rewrite the partition function \eqref{partfcnabd} as
\begin{gather*}
Z_n(\omega, \dots, \omega, 1,\dots, 1, \rho,\zeta)=C^\prime\sum_{\text{states}} B^\prime \omega^{M}\prod_{\text{faces}}\frac{1}{\vartheta(\rho \omega^a)^3}
\prod_{\text{turns}} \frac{\vartheta\big(\rho^{(1-a)/2} \zeta \omega^{-1}\big)}{\vartheta\big(\rho^{(1-a)/2} \zeta \omega\big)} ,
\end{gather*}
where $C^\prime=(-1)^{\binom{n}{2}}\omega^{n^2}\vartheta\big(\rho^3, p^3\big)^{2n^2}$, and $B^\prime$ is the correction for the boundaries. To compute~$B'$, first realize that the faces of the turns are accounted for two times too much, which yields the correction
\begin{gather*}
\prod_{\text{turns}}\vartheta(\rho\omega^a)^2.
\end{gather*} Along the rest of the boundary at the reflecting wall, the $0$-faces are counted $n+3$ times too much, which gives the factor $\vartheta(\rho)^{n+3}$ to the correction. Let $B^{\prime\prime}$ be the joint correction on the remaining three boundaries. This part depends on the value of~$n$ modulo $3$. The correction is
\begin{gather*}
B^{\prime\prime}=
\begin{cases}
\vartheta\big(\rho^3, p^3\big)^{2n}, &\text{for}\quad n\equiv 0\mod 3,
\\
\vartheta(\rho)^{-1}\vartheta\big(\rho\omega^{-1}\big)\vartheta\big(\rho^3, p^3\big)^{2n},
&\text{for}\quad n\equiv 1\mod 3,
\\
\vartheta(\rho)^{-3}\vartheta\big(\rho^3, p^3\big)^{2n+1}, &\text{for}\quad n\equiv 2\mod 3,
\end{cases}
\end{gather*}
where we used \eqref{mod3} and \eqref{threeproduct} to simplify the expressions.
Putting everything together yields
\begin{gather*}
Z_n(\omega, \dots, \omega, 1,\dots, 1, \rho,\zeta)
=(-1)^{\binom{n}{2}}\omega^{n^2} \vartheta\big(\rho^3, p^3\big)^{2n^2+2n}\vartheta(\rho)^{n+3}B
\\ \hphantom{Z_n(\omega, \dots, \omega, 1,\dots, 1, \rho,\zeta)=}
{}\times\sum_{\text{states}} \omega^{M} \prod_{\text{faces}}\frac{1}{\vartheta(\rho \omega^a)^3}
\prod_{\text{turns}}\left(\vartheta(\rho\omega^a)^2\frac{\vartheta\big(\rho^{(1-a)/2} \zeta \omega^{-1}\big)}{\vartheta\big(\rho^{(1-a)/2} \zeta \omega\big)}\right)\!,
\end{gather*}
with
\begin{gather*}
B=\begin{cases}
1, &\text{for}\quad n\equiv 0 \mod 3,
\\
\dfrac{\vartheta\big(\rho\omega^{-1}\big)}{\vartheta(\rho)}, &\text{for}\quad n\equiv 1 \mod 3,\vspace{.5ex}
\\
\dfrac{\vartheta(\rho\omega)\vartheta\big(\rho\omega^{-1}\big)}{\vartheta(\rho)^2},
&\text{for}\quad n\equiv 2 \mod 3.
\end{cases}
\end{gather*}
We rewrite the partition function as a sum over the number of positive turns in each state,
which gives~\eqref{mposstates}.
\end{proof}

\section{Rewriting of Filali's determinant formula}
\label{sec4}
In this section, we rewrite Filali's determinant formula as we did with the partition function in Section~\ref{sec3}. We~do the same variable changes as in Section~\ref{sectionchangepartitionfunction} and specify the parameter $\eta=-2/3$. Before we specialize the values $\lambda_i=-1/2$ and $\mu_j=0$, we rewrite the determinant in~terms of Bazhanov's and Mangazeev's polynomials $q_n$. Then we rewrite the partition function as a sum to be able to compare the terms pairwise with the terms in \eqref{mposstates}. In this way, we get an expression for the partition function of the three-color model in terms of~$q_n$.

\subsection[Filali's determinant in terms of T(2psi+1, ..., 2psi+1)]{Filali's determinant in terms of $\boldsymbol{T(2\psi+1, \dots, 2\psi+1)}$}
\label{secD}
Before we can specialize $\lambda_i=-1/2$ and $\mu_j=0$, we need to rewrite Filali's determinant.
To be able to do this, we define \cite{Rosengren2015}
\begin{gather*}
\psi:=\psi(\tau)=\frac{\omega^2\vartheta(-1)\vartheta\big({-}p^{1/2}\omega\big)}{\vartheta\big({-}p^{1/2}\big)\vartheta(-\omega)},
\qquad
x(z)=\frac{\vartheta\big({-}p^{1/2}\omega\big)^2\vartheta\big(\omega {\rm e}^{\pm 2\pi {\rm i} z}\big)}{\vartheta(-\omega)^2 \vartheta\big(p^{1/2} \omega {\rm e}^{\pm 2\pi {\rm i} z}\big)}
\end{gather*}
(in~\cite{Rosengren2015} $\psi$ is denoted $\zeta$), and
\begin{gather}
\label{T}
T(x_1, \dots, x_{2n})=\frac{\prod\limits_{i,j=1}^n G(x_j,x_{n+i})}{\Delta(x_1, \dots, x_n)\Delta(x_{n+1}, \dots, x_{2n})}\det_{1\leq i,j\leq n}\left(\frac{1}{G(x_j, x_{n+i})}\right)\!,
\end{gather}
where $\Delta(x_1, \dots, x_n)=\prod_{1\leq i<j\leq n} (x_j-x_i)$, and
\begin{gather*}
G(x,y)=(\psi+2)xy(x+y)+\psi(2\psi+1)(x+y)-2\big(\psi^2+3\psi+1\big)xy-\psi\big(x^2+y^2\big).
\end{gather*}
$T$ is a symmetric polynomial \cite{Rosengren2014-1}.
For $\psi$, the following identities hold \cite[Lemma~9.1]{Rosengren2011}:
\begin{gather}
2\psi+1=\frac{\vartheta\big({-}p^{1/2}\omega\big)^2\vartheta(\omega)^2}{\vartheta(-\omega)^2\vartheta\big(p^{1/2}\omega\big)^2}, \label{lemma9.1-2psi+1}
\\
\psi+1=-\frac{\vartheta\big(p^{1/2}\big)\vartheta\big({-}p^{1/2}\omega\big)}{\vartheta\big({-}p^{1/2}\big)\vartheta\big(p^{1/2}\omega\big)},
\label{lemma9.1-psi+1}
\\
\psi-1=\frac{\vartheta\big(p^{1/2}\big)\vartheta\big(p^{1/2}\omega\big)\vartheta(\omega)^2}
{\vartheta(-p^{1/2})\vartheta(-p^{1/2}\omega)\vartheta(-\omega)^2}.\label{lemma9.1-psi-1}
\end{gather}
Another useful identity, which follows from the addition rule
\eqref{additionrule} and \eqref{omegaomega2}, is
\begin{gather}
\label{xzminusxw}
x(z)-x(w)
=\frac{\vartheta\big({-}p^{1/2}\omega\big)^2\vartheta\big(p^{1/2}\omega\big)\vartheta\big(p^{1/2}\big)\omega}{\vartheta(-\omega)^2}
\frac{{\rm e}^{-2\pi {\rm i} w}\vartheta\big({\rm e}^{2\pi{\rm i}(w\pm z)}\big)}{\vartheta\big(p^{1/2} \omega {\rm e}^{\pm 2\pi {\rm i} z}\big)\vartheta\big(p^{1/2} \omega {\rm e}^{\pm 2\pi {\rm i} w}\big)}.
\end{gather}

Now consider
\begin{gather*}
\tilde D=\frac{\prod\limits_{i,j=1}^n \vartheta\big(q^{\lambda_i+\mu_j+1}\big)\vartheta\big(q^{\lambda_i-\mu_j+1}\big) \vartheta\big(q^{\lambda_i+\mu_j}\big)\vartheta\big(q^{\lambda_i-\mu_j}\big)}{\prod\limits_{1\leq i<j\leq n} q^{-\lambda_i-\mu_j}\vartheta\big(q^{\lambda_i+\lambda_j+1}\big)\vartheta\big(q^{\lambda_i-\lambda_j}\big)
\vartheta\big(q^{\mu_j+\mu_i}\big)\vartheta\big(q^{\mu_j-\mu_i}\big)} \det_{1\leq i,j \leq n} K_{ij}.
\end{gather*}
Put $z_{n+i}=-2(\lambda_i+1/2)/3$ and $z_j=\mu_j/3$ for $1\leq i,j\leq n$. Then $\lambda_i=-1/2$ and $\mu_j=0$ correspond to $z_i=0$, for all $i$. Using \eqref{threeproduct}, we can write $K_{ij}$ as
\begin{gather*}
K_{ij}=\frac{\vartheta\big({\rm e}^{2\pi{\rm i}(z_{n+i}\pm z_j)}\big)}
{\vartheta\big({\rm e}^{6\pi {\rm i}(z_{n+i}\pm z_j)}, p^3\big)}.
\end{gather*}
We want to rewrite $K_{ij}$ using the following lemma. The equation can be found in~\cite{Rosengren2014-1}, although the constant is not written out explicitly there.
\begin{Lemma}
We have
\begin{gather*}
\frac{\vartheta\big({\rm e}^{2\pi{\rm i} (w\pm z)}\big)}{\vartheta\big({\rm e}^{6\pi {\rm i} (w\pm z)}, p^3\big)}
=\frac{\tilde C {\rm e}^{-4\pi {\rm i} w}}{\vartheta\big(p^{1/2}\omega {\rm e}^{\pm 2\pi {\rm i} w}\big)^2\vartheta\big(p^{1/2}\omega {\rm e}^{\pm 2\pi {\rm i} z}\big)^2} \frac{1}{G(x(z),x(w))},
\end{gather*}
with
\begin{gather*}
\tilde C=\frac{\omega^2\vartheta(-1)\vartheta\big(p^{1/2}\big)^3\vartheta\big(p^{1/2}\omega\big)^2\vartheta\big({-}p^{1/2}\omega\big)^6} {\vartheta(-\omega)^4\vartheta\big({-}p^{1/2}\big)}.
\end{gather*}
\end{Lemma}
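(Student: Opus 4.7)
The plan is to split the argument into two pieces: (i) the structural identity that the two sides agree up to a scalar multiple of the explicit prefactor $\mathrm{e}^{-4\pi\mathrm{i} w}$, and (ii) the explicit determination of the scalar $\tilde C$. Piece (i) is essentially the content of the identity quoted from \cite{Rosengren2014-1}, so the bulk of the work for me is step (ii), which I would carry out by specializing to $w=0$.

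For (i), I would first apply \eqref{threeproduct} with $X={\rm e}^{2\pi{\rm i}(w\pm z)}$ to write $\vartheta\big({\rm e}^{6\pi{\rm i}(w\pm z)},p^3\big)=\vartheta({\rm e}^{2\pi{\rm i}(w\pm z)})\vartheta(\omega{\rm e}^{2\pi{\rm i}(w\pm z)})\vartheta(\omega^2{\rm e}^{2\pi{\rm i}(w\pm z)})$, so the left-hand side collapses to $\big[\vartheta(\omega{\rm e}^{2\pi{\rm i}(w\pm z)})\vartheta(\omega^2{\rm e}^{2\pi{\rm i}(w\pm z)})\big]^{-1}$. Regarded as functions of $z$ on the elliptic curve $\C/(\Z+\tau\Z)$, both sides have four simple poles per fundamental domain (on the right, these come from the two roots of the quadratic polynomial $G(\,\cdot\,,x(w))$ pulled back along the degree-two map $z\mapsto x(z)$; the potential extra poles at the poles of $x(z)$ are cancelled by the factor $\vartheta(p^{1/2}\omega{\rm e}^{\pm 2\pi{\rm i} z})^2$, since $x(z)^2\sim 1/\vartheta(p^{1/2}\omega{\rm e}^{\pm 2\pi{\rm i} z})^2$) and matching quasi-periodicity under $z\mapsto z+\tau$. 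Hence the ratio is constant in $z$; by the symmetry of $G$ and the evenness of $x$, the only residual $w$-dependence is the factor $\mathrm{e}^{-4\pi\mathrm{i} w}$ already displayed.

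For (ii), I would specialize to $w=0$. From \eqref{lemma9.1-2psi+1} one reads off $x(0)=2\psi+1$, and a short expansion of $G$ shows that the $x$- and constant-terms of $G(x,2\psi+1)$ cancel, leaving the clean identity $G(x,2\psi+1)=2(\psi+1)^2 x^2$. Substituting this and the squared formula for $x(z)$ into the right-hand side at $w=0$, every $z$-dependent theta factor cancels except the ratio $\vartheta(\omega^2{\rm e}^{\pm 2\pi{\rm i} z})/\vartheta(\omega{\rm e}^{\pm 2\pi{\rm i} z})$, which equals $\omega$ by the transformation rule $\vartheta(1/x)=-x^{-1}\vartheta(x)$ applied to each factor. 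Solving the resulting scalar equation for $\tilde C$, then substituting $(\psi+1)^2$ via \eqref{lemma9.1-psi+1} and the factor $2=\vartheta(-1)\vartheta(p^{1/2})\vartheta(-p^{1/2})$ via \eqref{Rosengren4.2a}, yields the displayed expression for $\tilde C$. The main obstacle is step (i): the careful verification that the divisors and quasi-periodicities indeed match on both sides, which is essentially what is done in \cite{Rosengren2014-1}. Once that structural identity is in hand, the determination of $\tilde C$ is a routine manipulation of the theta identities already collected in this section.
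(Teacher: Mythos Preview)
Your proposal is correct and follows the same two-step template as the paper: cite \cite{Rosengren2014-1} for the structural identity, then specialize to pin down $\tilde C$. The only genuine difference is in step~(ii). The paper sets \emph{both} variables to $1/2$, so that $x(1/2)=1$ and $G(1,1)=2(\psi-1)^2$; the constant then drops out immediately via \eqref{lemma9.1-psi-1} and \eqref{Rosengren4.2a}. You instead set only $w=0$, use $x(0)=2\psi+1$ and the nice factorization $G(x,2\psi+1)=2(\psi+1)^2x^2$, and then invoke \eqref{lemma9.1-psi+1} and \eqref{Rosengren4.2a}. This works equally well; the price is the small extra step of checking that the residual $z$-dependent ratio $\vartheta(\omega^2{\rm e}^{\pm 2\pi{\rm i} z})/\vartheta(\omega{\rm e}^{\pm 2\pi{\rm i} z})$ collapses to the constant $\omega$, which the paper's double specialization sidesteps. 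Your route has the minor advantage of using the identity for $\psi+1$ rather than $\psi-1$, which is the one actually needed later in Section~\ref{secmainresult}, but in substance the two computations are interchangeable.
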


\begin{proof}
Putting equations (2.17) (observe the misprint, a factor ${\rm e}^{-2\pi {\rm i} z}$ is missing in the numerator on the right hand side) and (2.23) of~\cite{Rosengren2014-1} together, yields
\begin{gather*}
\frac{\vartheta\big({\rm e}^{2\pi{\rm i} (w\pm z)}\big)}{\vartheta\big({\rm e}^{6\pi {\rm i} (w\pm z)}, p^3\big)}
=\frac{\tilde C {\rm e}^{-4\pi {\rm i} w}}{\vartheta\big(p^{1/2}\omega {\rm e}^{\pm 2\pi {\rm i} w}\big)^2\vartheta\big(p^{1/2}\omega {\rm e}^{\pm 2\pi {\rm i} z}\big)^2} \frac{1}{G(x(z),x(w))}.
\end{gather*}
To get the constant, put $z=w=1/2$. It~is easy to see that $x(1/2)=1$. Now
\begin{gather*}
\tilde C =\frac{\vartheta\big({-}p^{1/2}\omega\big)^8}{\omega^4\vartheta(\omega)^4}G(1, 1),
\end{gather*}
and, using \eqref{Rosengren4.2a} and \eqref{lemma9.1-psi-1},
\begin{gather*}
G(1, 1)=2(\psi-1)^2=\frac{\vartheta(-1)\vartheta\big(p^{1/2}\big)^3\vartheta\big(p^{1/2}\omega\big)^2 \vartheta(\omega)^4}{\vartheta\big({-}p^{1/2}\big)\vartheta\big({-}p^{1/2}\omega\big)^2\vartheta(-\omega)^4},
\end{gather*}
which together yield $\tilde C$ as stated in the lemma.
\end{proof}

Using the above lemma and \eqref{xzminusxw} yields
\begin{gather*}
\tilde D
=(-1)^{\binom{n}{2}}\left(\frac{\vartheta(-\omega)^2\vartheta\big({-}p^{1/2}\big)}
{\omega^2\vartheta(-1)\vartheta\big(p^{1/2}\big)^2\vartheta\big(p^{1/2}\omega\big)\vartheta\big({-}p^{1/2}\omega\big)^4}\right)^{n(n-1)}
\nonumber\\ \hphantom{\tilde D=}
{}\times\prod_{i=1}^n \big(\big({\rm e}^{4\pi {\rm i} z_{n+i}}\big)^{n-1} \vartheta\big(p^{1/2}\omega {\rm e}^{\pm 2\pi {\rm i} z_i}\big)^{n-1}\vartheta\big(p^{1/2}\omega {\rm e}^{\pm 2\pi {\rm i} z_{n+i}}\big)^{n-1}\big) T(x(z_1), \dots, x(z_{2n})).
\end{gather*}
Now we can put $z_i=0$ in $\tilde D$. It~is easy to see that $x(0)=2\psi+1$, by using \eqref{lemma9.1-2psi+1}. We~get
\begin{gather*}
\tilde D
=(-1)^{\binom{n}{2}}\left(\frac{\vartheta(-\omega)^2\vartheta\big({-}p^{1/2}\big)\vartheta\big(p^{1/2}\omega\big)^3}
{\omega^2\vartheta(-1)\vartheta\big(p^{1/2}\big)^2\vartheta\big({-}p^{1/2}\omega\big)^4}\right)^{n(n-1)} T(2\psi+1, \dots, 2\psi+1).
\end{gather*}
In \cite{Rosengren2015}, $T(2\psi+1, \dots, 2\psi+1)$ is denoted $t^{(2n,0,0,0)}(\psi)$.

\subsection[Filali's determinant in terms of the polynomials qn]{Filali's determinant in terms of the polynomials $\boldsymbol{q_n}$}
\label{secqn}
In \cite{BazhanovMangazeev2005}, Bazhanov and Mangazeev found certain polynomials
\begin{gather*}
\mathcal{P}_n(x, z)=\sum_{k=0}^n r_k^{(n)}(z) x^k,
\end{gather*}
normalized by $r_n^{(n)}(0)=1$, which describe the ground state eigenvalue of Baxter's $Q$-operator \cite{Baxter1972} for the 8V model in the case with $\eta=-2/3$. They also introduced polynomials $s_n(z)=r_n^{(n)}(z)$ and $\overline{s}_n(z)=r_0^{(n)}(z)$. In~\cite{BazhanovMangazeev2010} they connect these polynomials to the ground state eigenvectors of the supersymmetric XYZ-Hamiltonian for spin chains of odd length $2n+1$. They state several conjectures about these polynomials, among them that $s_n(z)$ can be factorized into polynomials which seem to have positive coefficients. Here certain polynomials $q_{n-1}(z)$, with $\deg q_n(z)=n(n+1)$ and $q_n(0)=1$, show up as factors of~$s_{2n}\big(z^2\big)$.
Other conjectures include that some components of the ground state eigenvectors for the XYZ spin chain can be written in terms of~$s_n(z)$, $\overline{s}_n(z)$ and $q_n(z)$. The polynomials $q_n(z)$ have the symmetries \cite{BazhanovMangazeev2010}
\begin{gather}
\label{symmetries}
q_n(z)=q_n(-z) \qquad \text{and} \qquad q_n(z)=\left(\frac{1+3z}{2}\right)^{n(n+1)} q_n\left(\frac{1-z}{1+3z}\right)\!.
\end{gather}
Zinn-Justin~\cite{Zinn-Justin2013} observed that the polynomials $q_n(z)$ seem to be given by specializing the variables in a determinant equivalent to \eqref{T}. Using \cite[equation~(5.5) and Proposition 2.2]{Rosengren2015}, this identity takes the form
\begin{gather}
q_{n-1}\left(\frac{1}{2\psi+1}\right)
=\left(\frac{1}{(\psi+1)(2\psi+1)^2}\right)^{n(n-1)}T(2\psi+1, \dots, 2\psi+1).\label{qnequalst2n}
\end{gather}
In the present work we take \eqref{qnequalst2n} as the definition of~$q_{n-1}$. The identification of these polynomials with the ones introduced in~\cite{BazhanovMangazeev2010} should still be viewed as a conjecture. We~put \eqref{qnequalst2n} into $\tilde D$ and~get
\begin{gather*}
\tilde D=(-1)^{\binom{n}{2}}\!\left(\!\frac{\vartheta(-\omega)^2\vartheta\big({-}p^{1/2}\big)\vartheta\big(p^{1/2}\omega\big)^3}
{\omega^2\vartheta(-1)\vartheta\big(p^{1/2}\big)^2\vartheta\big({-}p^{1/2}\omega\big)^4}\!\right)^{n(n-1)} \!\!\!\!\!
\big((\psi+1)(2\psi+1)^2\big)^{n(n-1)} q_{n-1}\!\left(\frac{1}{2\psi+1}\right)\!.
\end{gather*}

\subsection{Filali's determinant formula as a sum}
Now we specify $\lambda_i=-1/2$, $\mu_i=0$ and $\eta=-2/3$ in Filali's determinant formula \eqref{Filalisdeterminantformula}.
Recall that $\omega=q^{-1/2}=q$. We~get
\begin{gather*}
Z_n(\omega, \dots, \omega, 1,\dots, 1, \rho,\zeta)=(-1)^n\vartheta(\omega)^{2(n-n^2)}\left(\frac{\vartheta(\zeta)\vartheta(\rho \zeta)}{\vartheta(\zeta \omega)\vartheta(\rho\zeta \omega)}\right)^n \tilde B\tilde D,
\end{gather*}
where
\begin{gather*}
\tilde B=\prod_{i=1}^n\frac{\vartheta\big(\rho \omega^{2i-n-2}\big)}{\vartheta(\rho \omega^{n-i})}=
\begin{cases}
1, &\text{for}\quad n\equiv 0, 2 \mod 3,
\\
\dfrac{\vartheta\big(\rho \omega^{-1}\big)}{\vartheta(\rho)}, &\text{for}\quad n\equiv 1 \mod 3.
\end{cases}
\end{gather*}
In analogy with \eqref{mposstates}, we want to write Filali's determinant formula as a sum over $m$.
First we write $\left(\frac{\vartheta(\zeta)\vartheta(\rho \zeta)}{\vartheta(\zeta \omega)\vartheta(\rho\zeta \omega)}\right)^n$ in terms of~$\frac{\vartheta(\rho \zeta \omega^{-1})}{\vartheta(\rho \zeta \omega)}$ and $\frac{\vartheta(\zeta \omega^{-1})}{\vartheta( \zeta \omega)}$. For $n=1$, we want to solve
\begin{gather*}
\frac{\vartheta(\zeta)\vartheta(\rho \zeta)}{\vartheta(\zeta \omega)\vartheta(\rho\zeta \omega)}=P_1\frac{\vartheta\big(\rho \zeta \omega^{-1}\big)}{\vartheta(\rho \zeta \omega)}+ P_2 \frac{\vartheta\big(\zeta \omega^{-1}\big)}{\vartheta( \zeta \omega)}.
\end{gather*}
Letting $\zeta=\omega$ and $\zeta=\omega^{-1}$ respectively yields
\begin{gather*}
P_1=-\frac{\omega\vartheta(\rho \omega)}{\vartheta(\rho)}, \qquad \text{and} \qquad
P_2 =-\frac{\omega^2\vartheta\big(\rho \omega^2\big)}{\vartheta(\rho)}.
\end{gather*}
The addition rule \eqref{additionrule} assures that this is a solution. We~put this into the partition function for general $n$, and use the binomial theorem to obtain
\begin{gather*}
Z_n(\omega, \dots, \omega, 1,\dots, 1, \rho,\zeta)
\\ \qquad{}
=(-1)^n\vartheta(\omega)^{2(n-n^2)}\left(\!-\frac{\omega\vartheta(\rho \omega)}{\vartheta(\rho)}\frac{\vartheta\big(\rho \zeta \omega^{-1}\big)}{\vartheta(\rho \zeta \omega)} -\frac{\omega^2\vartheta\big(\rho \omega^2\big)}{\vartheta(\rho)} \frac{\vartheta\big(\zeta \omega^{-1}\big)}{\vartheta( \zeta \omega)}\right)^n \tilde B\tilde D
\\ \qquad{}
= \sum_{m=0}^n \! \binom{n}{m} \omega^{2n-m}\vartheta(\omega)^{2(n-n^2)}\frac{\vartheta(\rho\omega)^m \vartheta\big(\rho\omega^{2}\big)^{n-m}}{\vartheta(\rho)^n}\!\left(\!\frac{\vartheta\big(\rho \zeta \omega^{-1}\big)}{\vartheta(\rho \zeta \omega)}\!\right)^m\!\!\left(\!\frac{\vartheta\big(\zeta \omega^{-1}\big)}{\vartheta(\zeta \omega)}\!\right)^{n-m}\!\!\tilde B\tilde D.
\end{gather*}
Finally, inserting the expression for $\tilde D$ yields
\begin{gather*}
Z_n(\omega, \dots, \omega, 1, \dots, 1, \rho,\zeta)
\\ \qquad{}
=(-1)^{\binom{n}{2}}\sum_{m=0}^n\left(\frac{\vartheta\big(\rho \zeta \omega^{2}\big)}{\vartheta(\rho \zeta \omega)}\right)^m\left(\frac{\vartheta\big(\zeta \omega^{-1}\big)}{\vartheta(\zeta \omega)}\right)^{n-m}\omega^{n^2+n-m}
\\ \qquad\phantom{=}
\times\binom{n}{m} \frac{\vartheta(\rho\omega)^m \vartheta\big(\rho\omega^{-1}\big)^{n-m}}{\vartheta(\rho)^n} \left(\frac{\vartheta(-\omega)^2\vartheta\big({-}p^{1/2}\big)\vartheta\big(p^{1/2}\omega\big)^3}
{\vartheta(\omega)^2\vartheta(-1)\vartheta\big(p^{1/2}\big)^2\vartheta\big({-}p^{1/2}\omega\big)^4}\right)^{n(n-1)}
\\ \qquad\phantom{=}
\times\tilde B ((\psi+1)(2\psi+1)^2)^{n(n-1)} q_{n-1}\left(\frac{1}{2\psi+1}\right)\!,
\end{gather*}
where
\begin{gather*}
\tilde B=
\begin{cases}
1, &\text{for} \quad n\equiv 0, 2 \mod 3,
\\
\dfrac{\vartheta\big(\rho \omega^{-1}\big)}{\vartheta(\rho)}, &\text{for} \quad n\equiv 1 \mod 3.
\end{cases}
\end{gather*}

Now we can compare this with \eqref{mposstates}. The terms with different $m$ are linearly independent as functions of~$\zeta$. This follows since the $m$th term has a zero of degree $n-m$ in $\zeta=\omega$. Therefore we can identify the terms with the same $m$. We~get the following expression for the partition function.
\begin{Lemma}
The partition function of the three-color model for a fixed $m$ is
\begin{gather}
Z_{n, m}^{3C}\left(\frac{1}{\vartheta(\rho)^3}, \frac{1}{\vartheta(\rho \omega)^3}, \frac{1}{\vartheta\big(\rho \omega^2\big)^3}\right) \nonumber
\\ \qquad{}
=\binom{n}{m} \left(\frac{\vartheta(-\omega)^2\vartheta\big({-}p^{1/2}\big)\vartheta\big(p^{1/2}\omega\big)^3}
{\vartheta(\omega)^2\vartheta(-1)\vartheta\big(p^{1/2}\big)^2\vartheta\big({-}p^{1/2}\omega\big)^4}\right)^{n(n-1)}\nonumber
\\ \qquad\phantom{=}
{}\times\frac{\hat B ((\psi+1)(2\psi+1)^2)^{n(n-1)}}{\vartheta(\rho)^{2n^2+4n+3} \vartheta(\rho\omega)^{2n^2+4n-3m} \vartheta\big(\rho\omega^2\big)^{2n^2+n+3m}} q_{n-1}\left(\frac{1}{2\psi+1}\right)\!,
\label{neweqMeFilali}
\end{gather}
where
\begin{gather*}
\hat B=\begin{cases}
1, &\text{for}\quad n\equiv 0, 1 \mod 3,
\\
\dfrac{\vartheta(\rho)^2}{\vartheta(\rho\omega)\vartheta\big(\rho\omega^2\big)},
&\text{for}\quad n\equiv 2 \mod 3.
\end{cases}
\end{gather*}
\end{Lemma}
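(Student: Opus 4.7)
The plan is to equate the two sum expressions for $Z_n(\omega,\dots,\omega,1,\dots,1,\rho,\zeta)$ already obtained, namely the one from Proposition~\ref{myformula} and the one derived in this section from Filali's formula and the reformulation in terms of $q_{n-1}(1/(2\psi+1))$, and to extract the $m$th term from each side. The crucial observation is that on both sides the $\zeta$-dependence of the $m$th summand sits entirely in the factor
\begin{gather*}
F_m(\zeta):=\left(\frac{\vartheta(\rho\zeta\omega^{-1})}{\vartheta(\rho\zeta\omega)}\right)^m\left(\frac{\vartheta(\zeta\omega^{-1})}{\vartheta(\zeta\omega)}\right)^{n-m},
\end{gather*}
where on the Filali side one first replaces $\omega^2$ by $\omega^{-1}$ using $\omega^3=1$. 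Hence after this small rewriting the two sides are both expansions of $Z_n$ in the functions $F_0(\zeta),\dots,F_n(\zeta)$, with the sought $Z_{n,m}^{3C}$ appearing as part of the $m$th coefficient on the Proposition~\ref{myformula} side.

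Next I would establish linear independence of $F_0,\dots,F_n$ as functions of $\zeta$, which is what lets one match coefficients. Since $\vartheta(1)=0$, specializing $\zeta=\omega$ makes $\vartheta(\zeta\omega^{-1})$ vanish to first order while $\vartheta(\rho\zeta\omega^{-1})$ remains generically nonzero; therefore $F_m$ vanishes to order exactly $n-m$ at $\zeta=\omega$. A nontrivial linear combination $\sum c_m F_m$ would then have vanishing order equal to $n-\max\{m\colon c_m\neq 0\}$, so no cancellation is possible, and the coefficients of each $F_m$ on the two sides must agree.

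Once that is in place, the rest is bookkeeping. The prefactor $(-1)^{\binom{n}{2}}\omega^{n^2+n-m}$ is common to both sides and cancels. Using $\vartheta(\rho^3,p^3)=\vartheta(\rho)\vartheta(\rho\omega)\vartheta(\rho\omega^2)$ from \eqref{threeproduct} together with $\omega^{-1}=\omega^2$, I would consolidate the $\rho$-dependent theta factors on the Proposition~\ref{myformula} side: the exponent of $\vartheta(\rho)$ becomes $n+3+(2n^2+2n)=2n^2+3n+3$, and combining with the $\vartheta(\rho)^{-n}$ from the Filali side yields the exponent $2n^2+4n+3$ stated in \eqref{neweqMeFilali}; similar arithmetic produces the exponents $2n^2+4n-3m$ and $2n^2+n+3m$ on $\vartheta(\rho\omega)$ and $\vartheta(\rho\omega^2)$ respectively. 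The ratio $\tilde B/B$ is then computed case by case modulo $3$: for $n\equiv 0,1$ one checks $\tilde B=B$, giving $\hat B=1$, while for $n\equiv 2$ we have $\tilde B=1$ and $B=\vartheta(\rho\omega)\vartheta(\rho\omega^{-1})/\vartheta(\rho)^2$, giving $\hat B=\vartheta(\rho)^2/(\vartheta(\rho\omega)\vartheta(\rho\omega^2))$, which matches the stated formula.

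The main potential pitfall is not conceptual but accounting: it is easy to miscount the powers of the theta functions or the sign $(-1)^{\binom{n}{2}}$ when consolidating $\vartheta(\rho^3,p^3)^{2n^2+2n}$ and the boundary factors. The linear independence step is clean, and no further identities beyond \eqref{mod3}--\eqref{threeproduct} and the definition of $B,\tilde B,\hat B$ are needed to complete the proof.
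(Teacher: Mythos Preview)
Your proposal is correct and takes essentially the same approach as the paper: equate the two sum expressions for $Z_n(\omega,\dots,\omega,1,\dots,1,\rho,\zeta)$, observe that the $m$th term vanishes to order exactly $n-m$ at $\zeta=\omega$ so the $F_m$ are linearly independent, and then match coefficients and consolidate the theta-function powers. The paper states the linear-independence argument in a single sentence and omits the exponent bookkeeping entirely, whereas you spell both out, but there is no substantive difference in method.
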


\section{The main result}\label{secmainresult}
In this section, we rewrite \eqref{neweqMeFilali} in algebraic form. We~will need the following identities:
\begin{gather}
\label{eq:n02}
\frac{\vartheta(-\omega)^3}{\vartheta(-1)^3}=\frac{\psi+1}{2\psi^2},
\\
\label{eq:n01}
\left(\frac{\vartheta(\omega)^2\vartheta(-1)\vartheta\big(p^{1/2}\big)^2\vartheta\big({-}p^{1/2}\omega\big)^4}{\vartheta(-\omega)^2 \vartheta\big({-}p^{1/2}\big)\vartheta\big(p^{1/2}\omega\big)^3}\right)^6
=2^4\psi^{2}(\psi+1)^8(2\psi+1)^6,
\end{gather}
and
\begin{gather}
\label{eq:n03}
\omega^4\left(\frac{\vartheta(-1)}{\vartheta(-\omega)}\right)^2\left(\frac{\vartheta(\omega)^2 \vartheta(-1)\vartheta\big(p^{1/2}\big)^2\vartheta\big({-}p^{1/2}\omega\big)^4}{\vartheta(-\omega)^2\vartheta\big({-}p^{1/2}\big) \vartheta\big(p^{1/2}\omega\big)^3}\right)^2
=\left(2 \psi(\psi+1)(2\psi+1)\right)^2.
\end{gather}
The above identities can be found using \cite[Lemmas~3.1 and~3.5]{Rosengren2014-3}. Once we have the expressions, it is much easier to go in the other direction, from the expressions in terms of~$\psi$ to~the~theta functions, by using the definition of~$\psi$ and equations \eqref{Rosengren4.2a}, \eqref{Rosengren4.2b}, \eqref{lemma9.1-2psi+1} and \eqref{lemma9.1-psi+1}.

Introduce new variables
$t_i=1/\vartheta(\rho \omega^i)^3$ and define
\begin{gather*}
T=\frac{(t_0t_1+t_0t_2+t_1t_2)^3}{(t_0t_1t_2)^2}.
\end{gather*}
We want to rewrite the partition function as a polynomial in $T$. We~will need \cite[Lemmas~5.1 and~5.3]{Rosengren2011}, which we state here without proof.

\begin{Lemma}\label{Lemma5.1Rosengren}
There exists a function $p \mapsto f(p)$, which does not depend on $\rho$, such that
\begin{gather*}
\vartheta\big(\rho^3, p^3\big)f(p)=\frac{1}{t_0}+\frac{1}{t_1}+\frac{1}{t_2}.
\end{gather*}
Moreover $T=f(p)^3$.
\end{Lemma}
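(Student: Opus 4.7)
My plan is to first translate the identity into pure theta-function language using the product formula, and then to pin down $f(p)$ by an elliptic-function argument. Since $t_i = 1/\vartheta(\rho\omega^i)^3$, the identity $\vartheta(x)\vartheta(x\omega)\vartheta(x\omega^2) = \vartheta(x^3, p^3)$ from \eqref{threeproduct} gives
\begin{gather*}
 t_0 t_1 t_2 = \frac{1}{\vartheta(\rho^3, p^3)^3}, \qquad \frac{1}{t_0} + \frac{1}{t_1} + \frac{1}{t_2} = \sum_{i=0}^{2} \vartheta(\rho\omega^i)^3,
\end{gather*}
so the first assertion of the lemma is equivalent to showing that
\begin{gather*}
 g(\rho) := \frac{\vartheta(\rho)^3 + \vartheta(\rho\omega)^3 + \vartheta(\rho\omega^2)^3}{\vartheta(\rho^3, p^3)}
\end{gather*}
is independent of $\rho$.

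To prove this, I would show that $g$ extends to a holomorphic function on $\C^*$ that is invariant under both $\rho \mapsto p\rho$ and $\rho \mapsto \omega\rho$, hence descends to a holomorphic function on the compact quotient $\C^*/\langle p, \omega\rangle$, and so is constant in $\rho$. For the invariances: from $\vartheta(px) = -x^{-1}\vartheta(x)$ together with $\omega^{3i} = 1$, each summand in the numerator transforms as $\vartheta(p\rho\omega^i)^3 = -\rho^{-3}\vartheta(\rho\omega^i)^3$, while the denominator transforms as $\vartheta(p^3\rho^3, p^3) = -\rho^{-3}\vartheta(\rho^3, p^3)$, giving $g(p\rho) = g(\rho)$; and $\rho \mapsto \omega\rho$ merely permutes the three summands and leaves the denominator fixed. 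For holomorphy: the potential poles of $g$ sit at the simple zeros $\rho = p^k\omega^l$ of the denominator, and by the two invariances it suffices to inspect $\rho = 1$. There $\vartheta(1) = 0$, while $\vartheta(\omega^2) = \vartheta(\omega^{-1}) = -\omega\vartheta(\omega)$ (from $\vartheta(1/x) = -x^{-1}\vartheta(x)$) gives $\vartheta(\omega)^3 + \vartheta(\omega^2)^3 = (1-\omega^3)\vartheta(\omega)^3 = 0$, so the numerator vanishes to at least the first order and the singularity is removable.

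Once $g(\rho) = f(p)$ is established, the ``moreover'' clause is immediate: the symmetric-function identity $t_0 t_1 + t_0 t_2 + t_1 t_2 = t_0 t_1 t_2 \bigl(1/t_0 + 1/t_1 + 1/t_2\bigr)$ yields
\begin{gather*}
 T = \frac{(t_0 t_1 + t_0 t_2 + t_1 t_2)^3}{(t_0 t_1 t_2)^2} = t_0 t_1 t_2 \left(\frac{1}{t_0} + \frac{1}{t_1} + \frac{1}{t_2}\right)^{3} = \frac{1}{\vartheta(\rho^3, p^3)^3}\,\bigl(\vartheta(\rho^3, p^3)\,f(p)\bigr)^3 = f(p)^3.
\end{gather*}
I expect the only real subtlety to be the removability check at the zeros of $\vartheta(\rho^3,p^3)$, but reducing to $\rho = 1$ via the two invariances collapses it to a one-line computation using the inversion rule, so the argument is essentially a package of quasi-periodicity plus Liouville.
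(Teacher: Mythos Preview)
Your argument is correct; the paper itself does not prove this lemma at all but simply quotes it from \cite[Lemma~5.1]{Rosengren2011}, so your self-contained treatment actually supplies more than the paper does. The structure you use---quasi-periodicity under $\rho\mapsto p\rho$ and $\rho\mapsto\omega\rho$, removability of the apparent poles, and Liouville on the compact quotient---is exactly the standard way such identities are established and is presumably how the cited reference proceeds as well. The derivation of $T=f(p)^3$ from the elementary symmetric-function identity and \eqref{threeproduct} is clean and matches how the paper uses the lemma downstream.

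One small slip worth fixing: from $\vartheta(1/x)=-x^{-1}\vartheta(x)$ with $x=\omega$ you get $\vartheta(\omega^{-1})=-\omega^{-1}\vartheta(\omega)=-\omega^{2}\vartheta(\omega)$, not $-\omega\vartheta(\omega)$. This is harmless for your purpose since cubing either expression gives $-\omega^{3}\vartheta(\omega)^3=-\vartheta(\omega)^3$ (respectively $-\omega^{6}\vartheta(\omega)^3=-\vartheta(\omega)^3$), so the numerator still vanishes at $\rho=1$ and the pole is removable; but you should correct the constant in the displayed intermediate step.
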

Since $f(p)$ is independent of~$\rho$, we can put $\rho=-1$ to get the expression
\begin{gather}\label{fp}
f(p)=\frac{2\vartheta(-\omega)^3/\vartheta(-1)^3+1}{\omega^2\vartheta(-\omega)^2/\vartheta(-1)^2}.
\end{gather}
It follows that $T$ is also independent of~$\rho$, and \eqref{eq:n02} yields that
\begin{gather*}
T=\frac{4\big(\psi^2+\psi+1\big)^3}{\psi^2(\psi+1)^2}.
\end{gather*}

\begin{Lemma}\label{Lemma5.3Rosengren}Let $f$ be a Laurent polynomial in three variables $t_0$, $t_1$ and $t_2$, homogeneous of degree $0$. Suppose that under the parametrization $t_i=1/\vartheta(\rho \omega^i)^3,$ the polynomial $f(t_0, t_1, t_2)$ is independent of~$\rho$. Then $f$ is a polynomial in~$T$.
\end{Lemma}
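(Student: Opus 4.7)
The plan is to exploit two symmetries of the theta parametrization and then expand in the elementary symmetric functions $e_1=t_0+t_1+t_2$, $e_2=t_0t_1+t_0t_2+t_1t_2$, $e_3=t_0t_1t_2$. First, I would show that $f$ may be assumed $S_3$-symmetric. The substitution $\rho\mapsto\rho\omega$ cyclically permutes the $t_i$'s, while $\rho\mapsto 1/\rho$, after applying $\vartheta(1/x)=-\vartheta(x)/x$ and absorbing the overall factor $-1/\rho^3$ via the degree-$0$ homogeneity of $f$, induces the transposition $t_1\leftrightarrow t_2$. Since $f$ is $\rho$-independent, each of these transformations leaves the value of $f$ unchanged on the image of the parametrization, so $f$ and its $S_3$-symmetrization $\bar f=\frac{1}{6}\sum_{\sigma\in S_3}f\circ\sigma$ agree there; hence one may replace $f$ by $\bar f$ and treat the symmetric case.

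Next, write $f=P(e_1,e_2,e_3)/e_3^N$, where $P$ is weighted-homogeneous of degree $3N$ with $\deg e_\ell=\ell$, so each monomial $a_{ijk}e_1^ie_2^je_3^k$ in $P$ satisfies $i+2j+3k=3N$. Set $\theta:=\vartheta(\rho^3,p^3)$. The identity $\vartheta(\rho)\vartheta(\rho\omega)\vartheta(\rho\omega^2)=\theta$ gives $e_3=\theta^{-3}$, and Lemma~\ref{Lemma5.1Rosengren} gives $e_2=f(p)\theta^{-2}$. Substituting into each monomial and using the relation $-2j-3k+3N=i$, the expression collapses to
\begin{equation*}
f=\sum_{i,j,k}a_{ijk}\,(e_1\theta)^{i}\,f(p)^{j}.
\end{equation*}

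The crucial observation is that
\begin{equation*}
e_1\theta=\sum_{i=0}^{2}\frac{\vartheta(\rho\omega^{i+1})\vartheta(\rho\omega^{i+2})}{\vartheta(\rho\omega^i)^{2}}
\end{equation*}
is non-constant in $\rho$ for generic $p$: at a zero $\rho=p^k\omega^j$ of a single $\vartheta(\rho\omega^i)$, one summand has a double pole while the simple zero of $\theta$ cancels only one order of the corresponding pole of $e_1$, so a genuine pole remains. Consequently $1,\,e_1\theta,\,(e_1\theta)^2,\ldots$ are linearly independent as functions of $\rho$, and $\rho$-independence of $f$ forces the coefficient of each positive power $(e_1\theta)^i$ to vanish identically in $p$; since $f(p)$ takes infinitely many values, this yields $a_{ijk}=0$ for all $i\ge 1$. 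Only the $i=0$ terms survive, and $2j+3k=3N$ with $j,k\ge 0$ is solvable only for $j=3m$, $k=N-2m$, giving $f=\sum_m a_{0,3m,N-2m}\,e_2^{3m}/e_3^{2m}=\sum_m a_{0,3m,N-2m}\,T^m$, a polynomial in~$T$. The main obstacle is justifying the non-constancy of $e_1\theta$; if the pole-order sketch is felt to be too delicate, it can be verified by specializing to the rational limit $p\to 0$, where $\vartheta(x,0)=1-x$ and the sum becomes an explicit non-constant rational function of~$\rho$.
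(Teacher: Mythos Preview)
The paper does not prove this lemma; it is quoted from \cite{Rosengren2011} with the explicit remark ``which we state here without proof''. So there is no in-paper argument to compare against, and your task reduces to giving a self-contained proof.

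Your argument is sound, with one step that deserves an extra sentence. From ``$f$ and $\bar f$ agree on the image of the parametrization'' you pass to ``one may replace $f$ by $\bar f$''. For the stated conclusion---that $f$ itself, as a Laurent polynomial, lies in $\mathbb{C}[T]$---you need $f=\bar f$ identically, not merely on the image. This follows once you note that the image of $(\rho,p)\mapsto (t_0:t_1:t_2)$ is Zariski-dense in $\mathbb{P}^2$: for fixed $p$ the image is a curve on which, by Lemma~\ref{Lemma5.1Rosengren}, the rational function $T=f(p)^3$ is constant; since $f(p)$ (equivalently $\psi$) is non-constant in~$p$, different $p$ give curves lying in different level sets of~$T$, hence distinct, and their union is two-dimensional. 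The same observation (that $f(p)$ assumes infinitely many values) is precisely what you invoke at the end to kill the coefficients $a_{ijk}$ with $i\ge 1$, so making it explicit up front tightens both steps at once. With that addition the proof is complete; the pole-counting showing that $e_1\theta$ has a genuine double pole at each zero of a single $\vartheta(\rho\omega^i)$, and the weighted-degree bookkeeping reducing everything to powers of $e_2^3/e_3^2$, are correct as written.
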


The following theorem is our main result, equivalent to Theorem~\ref{maintheorem}.
\begin{Theorem}
\label{proppT}
It holds that
\begin{gather*}
Z_{n, m}^{3C}(t_0, t_1, t_2)\\ \qquad
{}=\begin{cases}
\displaystyle\binom{n}{m} \frac{t_2^{m-n}}{t_1^m}t_0(t_0t_1t_2)^{(2n^2+4n)/3} Q(T), &\textup{$n\equiv 0, 1 \mod 3$},\vspace{.5ex}
\\
\displaystyle\binom{n}{m} \frac{t_2^{m-n}}{t_1^m}(t_0t_1+t_0t_2+t_1t_2)(t_0t_1t_2)^{(2n^2+4n-1)/3}Q(T), &\textup{$n\equiv 2 \mod 3$},
\end{cases}
\end{gather*}
where $Q(T)$ is a polynomial in $T$ with
\begin{gather*}
Q(T)=
\begin{cases}
\dfrac{q_{n-1}(z)}{(z(z^2-1))^{(n^2-n)/3}} , &\text{for}\quad n\equiv 0, 1 \mod 3,\\
\dfrac{q_{n-1}(z)}{(3z^2+1)(z(z^2-1))^{(n^2-n-2)/3}}, &\text{for}\quad n\equiv 2 \mod 3,
\end{cases}
\end{gather*}
for
\begin{gather*}
T=\frac{(3z^2+1)^3}{(z(z^2-1))^2}.
\end{gather*}
\end{Theorem}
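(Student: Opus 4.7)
The plan is to verify the theorem first as a parametric identity under the substitution $t_i=1/\vartheta(\rho\omega^i)^3$, using the expression \eqref{neweqMeFilali} for $Z_{n,m}^{3C}$, and then to lift it to a Laurent polynomial identity via Lemma \ref{Lemma5.3Rosengren}. Since every monomial in $Z_{n,m}^{3C}$ has total degree $(2n+1)(n+1)=2n^2+3n+1$, a quick count shows that the ratio of $Z_{n,m}^{3C}$ to the claimed right-hand side (without the $Q(T)$ factor) is homogeneous of degree $0$ in $(t_0,t_1,t_2)$, which is a prerequisite for applying Lemma \ref{Lemma5.3Rosengren}.

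For the parametric matching, the monomial $t_2^{m-n}t_1^{-m}\,t_0(t_0t_1t_2)^{(2n^2+4n)/3}$ under $t_i=1/\vartheta(\rho\omega^i)^3$ reproduces exactly the denominator $\vartheta(\rho)^{2n^2+4n+3}\vartheta(\rho\omega)^{2n^2+4n-3m}\vartheta(\rho\omega^2)^{2n^2+n+3m}$ of \eqref{neweqMeFilali} in the case $n\equiv 0,1\pmod 3$. For $n\equiv 2\pmod 3$, Lemma \ref{Lemma5.1Rosengren} together with \eqref{threeproduct} gives $t_0t_1+t_0t_2+t_1t_2 = f(p)/\vartheta(\rho^3,p^3)^2$, so that one extra factor absorbs exactly $\hat B=\vartheta(\rho)^2/[\vartheta(\rho\omega)\vartheta(\rho\omega^2)]$ from \eqref{neweqMeFilali} and leaves one extra $f(p)$ on the $\psi$-side. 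After these cancellations, what remains is a pure identity in $\psi$: the theta quotient from \eqref{neweqMeFilali} raised to $n(n-1)$ times $((\psi+1)(2\psi+1)^2)^{n(n-1)}q_{n-1}(z)$ (divided by $f(p)$ in the second case) should equal $Q(T)q_{n-1}(z)$.

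For the $\psi$-simplification one uses \eqref{eq:n01}, \eqref{eq:n03}, and $f(p)^3=T$ from Lemma \ref{Lemma5.1Rosengren} to express everything as a rational function of $\psi$, $\psi+1$, $2\psi+1$ and $\psi^2+\psi+1$. With $z=1/(2\psi+1)$ one computes $z(z^2-1)=-4\psi(\psi+1)/(2\psi+1)^3$ and $3z^2+1=4(\psi^2+\psi+1)/(2\psi+1)^2$, which both reproduce $T=(3z^2+1)^3/(z(z^2-1))^2$ and let the $\psi$-rational prefactor be rewritten as $1/(z(z^2-1))^{(n^2-n)/3}$, respectively $1/[(3z^2+1)(z(z^2-1))^{(n^2-n-2)/3}]$; the signs work out because $(n^2-n)/3$ and $(n^2-n-2)/3$ are both even in their respective congruence classes (check by writing $n=3k,\,3k+1,\,3k+2$). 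The step I expect to be the main obstacle is precisely the case $n\equiv 2\pmod 3$: since $n(n-1)\not\equiv 0\pmod 3$ there, the sixth-root identity \eqref{eq:n01} cannot be applied on the nose, and one has to intertwine it with \eqref{eq:n03} and the cubic relation $f(p)^3=T$, and in addition argue (for instance by working with the cube of the ratio and invoking Lemma \ref{Lemma5.3Rosengren}) that the Laurent polynomial obtained from $Z_{n,m}^{3C}$ by dividing out the monomial portion of the prefactor is genuinely divisible by $t_0t_1+t_0t_2+t_1t_2$, not merely cube-root-wise.
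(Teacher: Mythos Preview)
Your approach is essentially that of the paper: match the $\rho$-dependent monomials under $t_i=1/\vartheta(\rho\omega^i)^3$ against the denominator of \eqref{neweqMeFilali}, recognize the leftover as $\rho$-independent, invoke Lemma~\ref{Lemma5.3Rosengren} to get a polynomial in $T$, and then translate the $\psi$-expression into $z$ via $z=1/(2\psi+1)$. One small slip: the quantity you want to identify with the leftover is $Q(T)$ itself, not $Q(T)\,q_{n-1}(z)$, since by the statement $Q(T)$ already contains the factor $q_{n-1}(z)$.

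The one place where your sketch is genuinely incomplete is the divisibility argument for $n\equiv 2\pmod 3$. Your ``cube of the ratio'' idea does not quite close the gap: cubing the rational function $Z_{n,m}^{3C}/[\text{monomial}\cdot(t_0t_1+t_0t_2+t_1t_2)]$ still leaves you with something that is not a priori a Laurent polynomial, so Lemma~\ref{Lemma5.3Rosengren} cannot be applied directly. The paper's device is cleaner: instead of dividing by $(t_0t_1+t_0t_2+t_1t_2)$, \emph{multiply} by its square, forming
\[
X=\frac{(t_0t_1+t_0t_2+t_1t_2)^2\,t_1^m t_2^{n-m}}{\binom{n}{m}(t_0t_1t_2)^{(2n^2+4n+5)/3}}\,Z_{n,m}^{3C}.
\]
This is manifestly a Laurent polynomial of degree $0$; under the parametrization it equals $A\,f(p)^2$ and is therefore $\rho$-independent, so Lemma~\ref{Lemma5.3Rosengren} gives that $X$ is a polynomial in $T$. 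Since $X$ visibly vanishes when $t_0t_1+t_0t_2+t_1t_2=0$, i.e.\ when $T=0$, one gets $T\mid X$ and hence $X=TQ(T)$, which is exactly the desired divisibility by $t_0t_1+t_0t_2+t_1t_2$. With this in place, your computation of $Q(T)$ in terms of $z$ (including the parity check on $(n^2-n)/3$ and $(n^2-n-2)/3$) matches the paper.
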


\begin{proof}
First consider $n\equiv 0, 1 \mod 3$. In \eqref{neweqMeFilali}, change to the variables $t_i$. The partition function becomes
\begin{gather}
Z_{n, m}^{3C}(t_0, t_1, t_2)
=A\binom{n}{m} \frac{t_2^{m-n}}{t_1^m}t_0(t_0t_1t_2)^{(2n^2+4n)/3},\label{n01proofmain}
\end{gather}
where
\begin{gather*}
\begin{split}
A=&\left(\frac{\vartheta(-\omega)^2\vartheta\big({-}p^{1/2}\big)\vartheta\big(p^{1/2}\omega\big)^3}{\vartheta(\omega)^2 \vartheta(-1)\vartheta\big(p^{1/2}\big)^2\vartheta\big({-}p^{1/2}\omega\big)^4}\right)^{n(n-1)} \big((\psi+1)(2\psi+1)^2\big)^{n(n-1)}q_{n-1}\left(\frac{1}{2\psi+1}\right)\!.
\end{split}
\end{gather*}
Observe that $A$ does not depend on $\rho$ or $m$. We~will show that $A$ can be written as a polynomial in $T$.

Since $2n^2+4n\equiv 0 \mod 3$, the exponents are integers. As a polynomial in $t_0$, $t_1$ and $t_2$, the left hand side of~\eqref{n01proofmain} is homogenous of degree $2n^2+3n+1$, and the degree of~$t_0$, $t_1$, $t_2$ on the right hand side adds up to $2n^2+3n+1$ as well, so
\begin{gather*}
A=\frac{t_1^mt_2^{n-m}}{\binom{n}{m}(t_0t_1t_2)^{(2n^2+4n)/3}t_0}Z_{n, m}^{3C}(t_0, t_1, t_2)
\end{gather*}
is a Laurent polynomial of degree $0$.
Using Lemma~\ref{Lemma5.3Rosengren}, we get that $A$ is a polynomial in $T$.

For $n \equiv 2 \mod 3$, the partition function \eqref{neweqMeFilali} is
\begin{gather*}
Z_{n, m}^{3C}(t_0, t_1, t_2)=A\binom{n}{m}
\frac{t_2^{m-n}}{t_1^m}\left(\frac{1}{\vartheta(\rho)\vartheta(\rho\omega)\vartheta(\rho\omega^2)}\right)^{2n^2+4n+1}.
\end{gather*}
Since $2n^2+4n+1\equiv 2 \mod 3$, we instead look at
\begin{gather*}
X=\frac{(t_0t_1+t_0t_2+t_1t_2)^2 t_1^m t_2^{n-m}}{\binom{n}{m}(t_0t_1t_2)^{(2n^2+4n+5)/3}} Z_{n, m}^{3C}(t_0, t_1, t_2),
\end{gather*}
which is a Laurent polynomial of degree $0$.
To see that $X$ is independent of~$\rho$, observe that $X=A(f(p))^2$ where $f(p)$ is the function from Lemma~\ref{Lemma5.1Rosengren}. Both $A$ and $f(p)$ are independent of~$\rho$, so $X$ is as well. Using Lemma~\ref{Lemma5.3Rosengren} we can conclude that $X$ is a polynomial in $T$. We~see that $X=0$ whenever $T=0$, so $X$ is divisible by $T$. Hence $X=T Q(T)$ for some polynomial~$Q(T)$. Thus
\begin{gather*}
Z_{n, m}^{3C}(t_0, t_1, t_2)
=\binom{n}{m}\frac{t_2^{m-n}}{t_1^m}(t_0t_1+t_0t_2+t_1t_2)(t_0t_1t_2)^{(2n^2+4n-1)/3} Q(T).
\end{gather*}
Hence for all $n$, the partition function can be written in terms of a polynomial $Q(T)$.

Now we will compute the polynomial $Q(T)$. For $n\equiv 0,1 \mod 3$, we have
$n(n-1)\equiv 0 \mod 6$, and we know that $Q(T)=A$, so by using \eqref{eq:n01} we get
\begin{gather*}
Q(T)=\frac{(2\psi+1)^{n(n-1)}}{(4\psi(\psi+1))^{\frac{n(n-1)}{3}}}q_{n-1}\left(\frac{1}{2\psi+1}\right)\!.
\end{gather*}
For $n\equiv 2 \mod 3$, we have $Q(T)=A/f(p)$. Since $n(n-1)\equiv 2 \mod 6$,
we need to be careful when changing to the variables $t_i$. Using \eqref{eq:n02}--\eqref{eq:n03}, and inserting \eqref{fp}, we get
\begin{gather*}
Q(T)=\frac{(2\psi+1)^{n(n-1)}}{4\left(\psi^2+\psi+1\right)(4\psi(\psi+1))^{(n^2-n-2)/3}}q_{n-1}\left(\frac{1}{2\psi+1}\right)\!.
\end{gather*}
Changing to the variable $z=\frac{1}{2\psi+1}$ yields the desired result.
\end{proof}

\section{Consequences of Theorem~\ref{maintheorem}}\label{secconsequences}

From Theorem \ref{maintheorem} we can derive several consequences. For instance, we get an explicit formula for the polynomials $q_n(z)$ and we can prove that $q_n(z+1)$ and $(z+1)^{n(n+1)}q_n(1/(z+1))$ have positive coefficients. Furthermore we can compute strict bounds for the number of faces of each color in the three-colorings.

\subsection[Consequences for the polynomials qn]{Consequences for the polynomials $\boldsymbol{q_n}$}
Rearranging Theorem~\ref{maintheorem} yields
\begin{gather*}
q_{n-1}(z)=
\begin{cases}
\displaystyle\frac{t_1^m t_2^{n-m}(z(z^2-1))^{\frac{n^2-n}{3}}}{\binom{n}{m}t_0(t_0t_1t_2)^{\frac{2n^2+4n}{3}}}Z_{n, m}^{3C}(t_0, t_1, t_2),
& n\equiv 0, 1 \mod 3,
\\
\displaystyle\frac{t_1^m t_2^{n-m}(3z^2+1)(z(z^2-1))^{\frac{n^2-n-2}{3}}}{\binom{n}{m}(t_0t_1+t_0t_2+t_1t_2)(t_0t_1t_2)^{\frac{2n^2+4n-1}{3}}}Z_{n, m}^{3C}(t_0, t_1, t_2),
&n\equiv 2 \mod 3.
\end{cases}
\end{gather*}
For instance, for $t_0=z(z+1)/(z-1)^2$, $t_1=t_2=1$, and $m=0$, we get
\begin{gather}\label{b}
q_{n-1}(z)=\sum_{k_0\in \Z}N^{(0)}(k_0)(z(z+1))^{k_0-(n^2+5n+a)/3}(z-1)^{(5n^2+7n+2a)/3-2k_0},
\end{gather}
with
\begin{gather*}
a=\begin{cases}
3, &\text{for}\quad n\equiv 0, 1 \mod 3,\\
1, &\text{for}\quad n\equiv 2 \mod 3,
\end{cases}
\end{gather*}
and where $N^{(m)}(k_0)$ denotes the number of states with exactly $m$ positive turns, and $k_0$ faces of color $0$.

The coefficients of~$q_n(z)$ are all integers. It~has been conjectured that $q_n(z)$ has only positive integer coefficients \cite{BazhanovMangazeev2010}. This is not clear from the expression \eqref{b}. It~is not enough to notice that $N^{(0)}(k_0)$ is always non-negative, one would need some further constraints on $N^{(0)}(k_0)$. However, we have the following weaker result.

\begin{Corollary}\label{cor:poscoeffs}
The polynomials $(z+1)^{n(n+1)} q_n\big(\frac{1}{z+1}\big)$ and $q_n(z+1)$ have positive integer coefficients.
\end{Corollary}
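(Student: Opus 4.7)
The plan is to work from the explicit formula~\eqref{b}, which expresses $q_{n-1}(z)$ as a sum over three-colorings with $m=0$, and then apply the substitutions $z\mapsto z+1$ and $z\mapsto 1/(z+1)$. Throughout, abbreviate $A=(n^2+5n+a)/3$ and $B=(5n^2+7n+2a)/3$, so that \eqref{b} reads
\[
q_{n-1}(z)=\sum_{k_0} N^{(0)}(k_0)\bigl(z(z+1)\bigr)^{k_0-A}(z-1)^{B-2k_0},
\]
with the key identity $B-2A=n(n-1)$.

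For $q_{n-1}(z+1)$, the substitution $z\mapsto z+1$ sends $z(z+1)$ to $(z+1)(z+2)=z^2+3z+2$ and $z-1$ to $z$, both of which lie in $\N[z]$, giving
\[
q_{n-1}(z+1)=\sum_{k_0} N^{(0)}(k_0)\bigl((z+1)(z+2)\bigr)^{k_0-A}z^{B-2k_0}.
\]
For $(z+1)^{n(n-1)}q_{n-1}(1/(z+1))$, the substitution $z\mapsto 1/(z+1)$ sends $z(z+1)$ to $(z+2)/(z+1)^2$ and $z-1$ to $-z/(z+1)$; the prefactor $(z+1)^{n(n-1)}=(z+1)^{B-2A}$ exactly clears all denominators, and because $B=2A+n(n-1)$ with $n(n-1)$ even, the sign $(-1)^{B-2k_0}$ is always $+1$, so
\[
(z+1)^{n(n-1)}q_{n-1}(1/(z+1))=\sum_{k_0} N^{(0)}(k_0)(z+2)^{k_0-A}z^{B-2k_0}.
\]

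In both formulas each summand is a non-negative integer times a product of polynomials in $\N[z]$, so the whole sum lies in $\N[z]$ provided $k_0-A\ge 0$ and $B-2k_0\ge 0$ whenever $N^{(0)}(k_0)>0$. This non-negativity of exponents is precisely the strict bound $A\le k_0\le B/2$ supplied by Corollary~\ref{maxminfaces} in the $m=0$ case, and deriving those sharp combinatorial bounds on the number of $0$-colored faces from the DWBC and reflecting-end constraints is the main obstacle. Everything else is routine: the parity of $n(n-1)$, the shape of the transformed bases, nonvanishing via $q_{n-1}(0)=1$, and the passage $q_{n-1}\mapsto q_n$ by re-indexing $n\mapsto n+1$.
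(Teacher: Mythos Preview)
Your approach is correct and matches the paper's: substitute into~\eqref{b} and check that each summand lands in $\N[z]$, using the parity of $B-2k_0$ for the $1/(z+1)$ case. One remark: you invoke Corollary~\ref{maxminfaces} (a forward reference) for the bounds $A\le k_0\le B/2$ and call deriving them ``the main obstacle,'' but they are immediate from~\eqref{b} alone---since $q_{n-1}(z)$ is a polynomial and the summands have distinct pole orders at $z=0$ and $z=1$, no term with a negative exponent can survive---which is why the paper's two-sentence proof simply takes the non-negativity of the exponents for granted (and records the bound observation only afterwards, just before Corollary~\ref{maxminfaces}).
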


\begin{proof}Put $z+1$ and $1/(z+1)$ respectively into \eqref{b}. For $z+1$ the statement is clear. For $1/(z+1)$ it is enough to notice that $(5n^2+7n+2a)/3-2k_0$ is an even number for all $n$.
\end{proof}

As is explained in Section~2.9 of~\cite{Rosengren2014-1}, $\psi$ is a Hauptmodul for $\Gamma_0(12)$. This means that $\psi$ generates the corresponding field of modular functions. The variable $z=1/(2\psi+1)$ is another Hauptmodul. The six cusps of~$\Gamma_0(12)$ are at
\begin{gather*}
z=0,\ -1,\ 1,\ -1/3,\ 1/3,\ \infty.
\end{gather*}
It is natural to consider Hauptmodulen of the form $(z-\alpha)/(z-\beta)$, where $\alpha$ and $\beta$ are one of the cusps ($\infty$ interpreted as a limit). Because of the symmetries \eqref{symmetries} of~$q_n$, the different variables generate only four essentially different polynomials, up to scaling the variable or replacing it~by~its inverse. The essentially different variables are $z$, the two variables in Corollary~\ref{cor:poscoeffs}, and one more, given by $\upsilon=3z-1$. For $q_n((\upsilon+1)/3)$ it is also not directly clear that the coefficients are positive.

\subsection{Consequences for the three-color model}

In each state of the three-color model, let $k_i$ be the number of faces with the $i$th color and let $N^{(m)}(k_0, k_1, k_2)$ denote the number of states with exactly $m$ positive turns and $k_i$ entries of color~$i$.
Now the partition function can be written
\begin{gather}\label{generatingfunction}
\sum_{(k_0, k_1, k_2)\in \Z^3}N^{(m)}(k_0, k_1, k_2)t_0^{k_0}t_1^{k_1}t_2^{k_2} \nonumber
\\ \qquad{}
=
\begin{cases}
\!\!\displaystyle\binom{n}{m} \frac{t_2^{m-n}}{t_1^m}\frac{t_0(t_0t_1t_2)^{(2n^2+4n)/3} q_{n-1}(z)}{(z(z^2-1))^{(n^2-n)/3}}, &n\equiv 0, 1\!\!\! \mod 3,\!\!\!\vspace{.5ex}
\\
\!\!\displaystyle\binom{n}{m} \frac{t_2^{m-n}}{t_1^m}\frac{(t_0t_1+t_0t_2+t_1t_2)(t_0t_1t_2)^{(2n^2+4n-1)/3}
q_{n-1}(z)}{(3z^2+1)(z(z^2-1))^{(n^2-n-2)/3}},
&n\equiv 2\!\!\! \mod 3.
\end{cases}
\end{gather}
Observe that $k_0+k_1+k_2=(n+1)(2n+1)=2n^2+3n+1$ in each state.

\begin{Corollary}\label{numberofstateswith3colors}Let $N^{(m)}(k_0, k_1, k_2)$ be the number of states with $m$ positive turns and $k_i$ faces of color $i$. Then
\begin{gather*}
N^{(m)}(k_0, k_1, k_2)=\binom{n}{m}N^{(0)}(k_0, k_1+m, k_2-m).
\end{gather*}
\end{Corollary}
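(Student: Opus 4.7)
The plan is to derive the identity directly from the generating function formulation \eqref{generatingfunction}, which is the reformulation of Theorem \ref{maintheorem}. The crucial observation is that the $m$-dependence on the right-hand side of \eqref{generatingfunction} is extremely simple: in both cases ($n\equiv 0,1\bmod 3$ and $n\equiv 2\bmod 3$), all factors involving $m$ collect into $\binom{n}{m}(t_2/t_1)^m$, while the remaining factor depends only on $t_0,t_1,t_2$ (and $n$), not on $m$. Explicitly, writing
\begin{gather*}
F(t_0,t_1,t_2)=\begin{cases}\dfrac{t_0(t_0t_1t_2)^{(2n^2+4n)/3}}{t_2^n}\cdot\dfrac{q_{n-1}(z)}{(z(z^2-1))^{(n^2-n)/3}},& n\equiv 0,1\bmod 3,\\[1ex]
\dfrac{(t_0t_1+t_0t_2+t_1t_2)(t_0t_1t_2)^{(2n^2+4n-1)/3}}{t_2^n}\cdot\dfrac{q_{n-1}(z)}{(3z^2+1)(z(z^2-1))^{(n^2-n-2)/3}},& n\equiv 2\bmod 3,\end{cases}
\end{gather*}
equation \eqref{generatingfunction} reads
\begin{gather*}
\sum_{(k_0,k_1,k_2)}N^{(m)}(k_0,k_1,k_2)\,t_0^{k_0}t_1^{k_1}t_2^{k_2}=\binom{n}{m}\left(\frac{t_2}{t_1}\right)^{\!m}F(t_0,t_1,t_2).
\end{gather*}

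First I would specialize this to $m=0$, which yields
\begin{gather*}
F(t_0,t_1,t_2)=\sum_{(k_0,k_1,k_2)}N^{(0)}(k_0,k_1,k_2)\,t_0^{k_0}t_1^{k_1}t_2^{k_2}.
\end{gather*}
Substituting this back into the displayed identity for general $m$ gives
\begin{gather*}
\sum_{(k_0,k_1,k_2)}N^{(m)}(k_0,k_1,k_2)\,t_0^{k_0}t_1^{k_1}t_2^{k_2}=\binom{n}{m}\sum_{(k_0',k_1',k_2')}N^{(0)}(k_0',k_1',k_2')\,t_0^{k_0'}t_1^{k_1'-m}t_2^{k_2'+m}.
\end{gather*}

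Finally I would extract the coefficient of $t_0^{k_0}t_1^{k_1}t_2^{k_2}$ on both sides. On the right, this picks out the term with $k_0'=k_0$, $k_1'-m=k_1$, $k_2'+m=k_2$, giving $\binom{n}{m}N^{(0)}(k_0,k_1+m,k_2-m)$, which matches the left side and completes the proof. There is no real obstacle here: once Theorem \ref{maintheorem} is available, the corollary is immediate from the fact that the entire $m$-dependence factors as $\binom{n}{m}(t_2/t_1)^m$. The only thing worth verifying carefully is that this factorization is indeed uniform in the two congruence cases, which is visible by inspection of \eqref{generatingfunction}.
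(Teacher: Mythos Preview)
Your proposal is correct and is essentially the same as the paper's own proof: both isolate the $m$-dependence of \eqref{generatingfunction} as the factor $\binom{n}{m}(t_2/t_1)^m$, specialize to $m=0$ to identify the remaining $m$-independent part with the generating function of $N^{(0)}$, and then compare coefficients. The only cosmetic difference is that the paper moves the factor $(t_1/t_2)^m$ to the left-hand side before comparing, whereas you keep it on the right; your $F$ coincides with the paper's $A$.
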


\begin{proof}
We can write \eqref{generatingfunction} as
\begin{gather}
\label{generalm}
\sum_{(k_0, k_1, k_2)\in \Z^3}N^{(m)}(k_0, k_1, k_2)t_0^{k_0}t_1^{k_1+m}t_2^{k_2-m}=\binom{n}{m}A,
\end{gather}
where $A$ does not depend on $m$.
For $m=0$, we get
\begin{gather}
\label{mequals0}
\sum_{(k_0, k_1, k_2)\in \Z^3}N^{(0)}(k_0, k_1, k_2)t_0^{k_0}t_1^{k_1}t_2^{k_2}
=A.
\end{gather}
Putting \eqref{mequals0} into \eqref{generalm} yields
\begin{gather*}
\sum_{(k_0, k_1, k_2)\in \Z^3}N^{(m)}(k_0, k_1, k_2)t_0^{k_0}t_1^{k_1+m}t_2^{k_2-m}=\binom{n}{m}\sum_{(k_0, k_1, k_2)\in \Z^3}N^{(0)}(k_0, k_1, k_2)t_0^{k_0}t_1^{k_1}t_2^{k_2}.
\end{gather*}
Since both sides are polynomials in $t_0$, $t_1$ and $t_2$, we can compare the coefficients of terms of the same multidegree pairwise and conclude that
\begin{gather*}
N^{(m)}(k_0, k_1, k_2)=\binom{n}{m}N^{(0)}(k_0, k_1+m, k_2-m). \tag*{\qed}
\end{gather*}
\renewcommand{\qed}{}
\end{proof}

Because of the above corollary, we only need to study the partition function for $m=0$.
Inspecting \eqref{generatingfunction}, one realizes that most factors are symmetric in $t_0$, $t_1$ and $t_2$. Because of these symmetries, a property for one color immediately implies a similar property for the other two colors.

\begin{Corollary}
\label{symmetriesofcolors}
Let $N^{(m)}(k_0, k_1, k_2)$ be the number of states with $m$ positive turns and $k_i$ faces of color $i$. Then
\begin{gather*}
N^{(m)}(k_0+d, k_1-m, k_2+m-n),
\end{gather*}
with
\begin{gather*}
d=
\begin{cases}
1, &\text{for}\quad n\equiv 0, 1 \mod 3,\\
0, & \text{for}\quad n\equiv 2 \mod 3,
\end{cases}
\end{gather*}
is a symmetric function of~$k_0$, $k_1$, $k_2$.
\end{Corollary}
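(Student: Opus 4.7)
The plan is to reinterpret $N^{(m)}(k_0+d,k_1-m,k_2+m-n)$ as the coefficient of $t_0^{k_0}t_1^{k_1}t_2^{k_2}$ in a suitable modification of the generating function $Z_{n,m}^{3C}(t_0,t_1,t_2)$, and then to read off from Theorem~\ref{maintheorem} that this modified generating function is manifestly symmetric in $t_0$, $t_1$, $t_2$.

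Concretely, I start from
\begin{gather*}
Z_{n,m}^{3C}(t_0,t_1,t_2)=\sum_{(k_0,k_1,k_2)\in\Z^3}N^{(m)}(k_0,k_1,k_2)\,t_0^{k_0}t_1^{k_1}t_2^{k_2}
\end{gather*}
and multiply both sides by the monomial $t_0^{-d}t_1^{m}t_2^{n-m}$. Replacing the summation index $(k_0,k_1,k_2)$ by $(k_0+d,k_1-m,k_2+m-n)$ yields
\begin{gather*}
t_0^{-d}t_1^{m}t_2^{n-m}Z_{n,m}^{3C}(t_0,t_1,t_2)=\sum_{(k_0,k_1,k_2)\in\Z^3}N^{(m)}(k_0+d,k_1-m,k_2+m-n)\,t_0^{k_0}t_1^{k_1}t_2^{k_2}.
\end{gather*}
Applying Theorem~\ref{maintheorem}, the left-hand side equals
\begin{gather*}
\binom{n}{m}(t_0t_1t_2)^{(2n^2+4n)/3}\,\frac{q_{n-1}(z)}{(z(z^2-1))^{(n^2-n)/3}}
\end{gather*}
when $n\equiv 0,1\mod 3$, in which case the shift $d=1$ precisely cancels the stray factor $t_0$; and it equals
\begin{gather*}
\binom{n}{m}(t_0t_1+t_0t_2+t_1t_2)(t_0t_1t_2)^{(2n^2+4n-1)/3}\,\frac{q_{n-1}(z)}{(3z^2+1)(z(z^2-1))^{(n^2-n-2)/3}}
\end{gather*}
when $n\equiv 2\mod 3$, where no shift is needed and $d=0$.

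In both cases every factor on the right is patently symmetric in $t_0,t_1,t_2$: the powers of $t_0t_1t_2$ and the elementary symmetric polynomial $t_0t_1+t_0t_2+t_1t_2$ are symmetric by inspection, while by Theorem~\ref{proppT} the remaining factor involving $q_{n-1}(z)$ is exactly the polynomial $Q(T)$ in the symmetric quantity $T=(t_0t_1+t_0t_2+t_1t_2)^3/(t_0t_1t_2)^2$. Matching coefficients of $t_0^{k_0}t_1^{k_1}t_2^{k_2}$ on both sides of the identity above then yields the stated symmetry of $N^{(m)}(k_0+d,k_1-m,k_2+m-n)$.

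I do not expect any substantive obstacle: the corollary is a bookkeeping consequence of the explicit product formula in Theorem~\ref{maintheorem}. The only mildly delicate point is to verify that the prescribed value of $d$ in each residue class of $n$ modulo $3$ is exactly right to absorb the asymmetric prefactor $t_0$ appearing in the first case of Theorem~\ref{maintheorem} and to leave nothing to absorb in the second.
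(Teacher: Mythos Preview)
Your proof is correct and follows essentially the same approach as the paper: both arguments multiply the generating function \eqref{generatingfunction} by the monomial $t_0^{-d}t_1^{m}t_2^{n-m}$, re-index the sum, and appeal to Theorem~\ref{maintheorem} (equivalently Theorem~\ref{proppT}) to conclude that the resulting generating function is symmetric in $t_0,t_1,t_2$. Your write-up is slightly more explicit in invoking $Q(T)$ to justify the symmetry of the $z$-dependent factor, but the logic is identical.
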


\begin{proof}
We write
\eqref{generatingfunction} as
\begin{gather*}
\sum_{(k_0, k_1, k_2)\in \Z^3}N^{(m)}(k_0, k_1, k_2)t_0^{k_0}t_1^{k_1}t_2^{k_2}
=
\frac{t_2^{m-n}t_0^d}{t_1^m}S(t_0, t_1, t_2),
\end{gather*}
where $S(t_0, t_1, t_2)$ is symmetric in $t_0$, $t_1$ and $t_2$, and where $d=1$ for $n\equiv 0,1 \mod 3$, and $d=0$ for $n\equiv 2 \mod 3$.
Rearranging yields
\begin{gather*}
S(t_0, t_1, t_2)=\sum_{(k_0, k_1, k_2)\in \Z^3}N^{(m)}(k_0+d, k_1-m, k_2+m-n)t_0^{k_0}t_1^{k_1}t_2^{k_2}.
\end{gather*}
Since $S$ is symmetric in $t_0$, $t_1$ and $t_2$, it follows that $N^{(m)}(k_0+d, k_1-m, k_2+m-n)$ is symmetric in $k_0$, $k_1$ and $k_2$.
\end{proof}

Having a general formula for the partition function of the three-color model, we can also compute the minimum and maximum possible number of faces of each color. Since the exponents must be positive, we can read off the bounds in the coefficients, e.g., for color $0$, the bounds can be read off in \eqref{b}. The following corollary shows that the bounds are strict, and we find the number of states that reach the bounds.

\begin{Corollary}\label{maxminfaces}
Let $N^{(m)}_i(k)$ be the number of states with $m$ positive turns and $k$ faces of color~$i$. For each $m$, the number of states with the minimum number of faces of each color respectively is
\begin{gather*}
N^{(m)}_0\left(\frac{n^2+5n+a}{3}\right)=N^{(m)}_1\left(\frac{n^2+5n+c}{3}-m\right)
=N^{(m)}_2\left(\frac{n^2+2n+c}{3}+m\right)
=\binom{n}{m},
\end{gather*}
and the number of states with the maximum number of faces of each color is
\begin{gather*}
\begin{split}
N^{(m)}_0\left(\frac{5n^2+7n+2a}{6}\right)&=N^{(m)}_1\left(\frac{5n^2+7n+2c}{6}-m\right)
=N^{(m)}_2\left(\frac{5n^2+n+2c}{6}+m\right)
\\
&=\binom{n}{m}2^{n(n-1)/2},
\end{split}
\end{gather*}
where
\begin{gather*}
a=\begin{cases}
3, &\text{for}\quad n\equiv 0, 1 \mod 3,\\
1, &\text{for}\quad n\equiv 2 \mod 3,
\end{cases} \qquad \text{and}\qquad
c=\begin{cases}
0, &\text{for}\quad n\equiv 0, 1 \mod 3,\\
1, &\text{for}\quad n\equiv 2 \mod 3.
\end{cases}
\end{gather*}
\end{Corollary}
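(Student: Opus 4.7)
My starting point is the explicit formula \eqref{b}, which expresses $q_{n-1}(z)$ as a sum indexed by $k_0$ of terms $N^{(0)}(k_0)(z(z+1))^{k_0-(n^2+5n+a)/3}(z-1)^{(5n^2+7n+2a)/3-2k_0}$. The first observation is that each summand is a rational function of $z$ whose only possible singularities are at $\{0,-1,1\}$, and the pole orders at these points depend linearly on $k_0$. Hence distinct $k_0$ yield linearly independent rational functions. Since the $N^{(0)}(k_0)$ are non-negative integers and the total sum is a polynomial, no cancellation is possible, forcing $N^{(0)}(k_0)=0$ outside the range $(n^2+5n+a)/3\le k_0\le (5n^2+7n+2a)/6$. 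This establishes one direction of both bounds in the color-$0$, $m=0$ case.

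To show that the bounds are attained and to compute the counts there, I would evaluate \eqref{b} at three points. At $z=0$ and $z=-1$ the factor $(z(z+1))^{k_0-(n^2+5n+a)/3}$ vanishes for every $k_0$ strictly above the minimum, so only the minimal-index term survives; this gives $q_{n-1}(0)=N^{(0)}_0((n^2+5n+a)/3)$ and $q_{n-1}(-1)=2^{n(n-1)}N^{(0)}_0((n^2+5n+a)/3)$. Using the normalization $q_{n-1}(0)=1$, I get $N^{(0)}_0((n^2+5n+a)/3)=1$ and $q_{n-1}(-1)=2^{n(n-1)}$. The parity symmetry $q_n(z)=q_n(-z)$ in \eqref{symmetries} then yields $q_{n-1}(1)=2^{n(n-1)}$. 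Evaluating \eqref{b} at $z=1$ kills every summand except the one with $k_0=(5n^2+7n+2a)/6$, producing $q_{n-1}(1)=2^{n(n-1)/2}N^{(0)}_0((5n^2+7n+2a)/6)$, whence $N^{(0)}_0((5n^2+7n+2a)/6)=2^{n(n-1)/2}$.

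To extend from $m=0$ to arbitrary $m$, I would sum the identity of Corollary~\ref{numberofstateswith3colors} over $k_1,k_2$ to obtain $N^{(m)}_0(k_0)=\binom{n}{m}N^{(0)}_0(k_0)$, which immediately delivers the color-$0$ parts of Corollary~\ref{maxminfaces} (both at the minimum and at the maximum). For colors $1$ and $2$, I would invoke Corollary~\ref{symmetriesofcolors}: the shifted count $\tilde N^{(m)}(l_0,l_1,l_2):=N^{(m)}(l_0+d,l_1-m,l_2+n-m)$ is a symmetric function of $l_0,l_1,l_2$, so its three univariate marginals agree as functions of a single variable. Translating back through the affine shifts gives identities of the form $N^{(m)}_0(k_0)=N^{(m)}_1(k_0-d-m)=N^{(m)}_2(k_0-d+m-n)$, and the case-by-case identity $a-3d=c$ recasts the extremal arguments into the expressions $(n^2+5n+c)/3-m$, $(n^2+2n+c)/3+m$, $(5n^2+7n+2c)/6-m$ and $(5n^2+n+2c)/6+m$ displayed in the statement.

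The main obstacles are bookkeeping rather than conceptual: one must check that $(5n^2+7n+2a)/6$ is an integer in every residue class of $n$ modulo $6$ (a routine verification), and must carefully track the shifts and the constants $a,c,d$ across the two residue classes of $n$ modulo $3$. The genuinely substantive step is the simultaneous use of the three evaluations $z\in\{0,\pm 1\}$, which works precisely because the factorization structure of the summands in \eqref{b} isolates exactly the minimal and maximal indices, and because the symmetry of $q_n$ lets $q_{n-1}(1)$ be computed without further input.
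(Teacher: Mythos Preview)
Your argument is correct and largely parallels the paper's: evaluate \eqref{b} at special points to pin down the extremal color-$0$ counts at $m=0$, then transfer to colors $1$ and $2$ via Corollary~\ref{symmetriesofcolors} and to general $m$ via Corollary~\ref{numberofstateswith3colors}. The one genuine divergence is in how you obtain $q_{n-1}(1)=2^{n(n-1)}$. The paper specializes \eqref{qnequalst2n} at $\psi=0$ to identify $q_{n-1}(1)$ with $T(1,\dots,1)$ and then imports the value $T(1,\dots,1)=2^{n(n-1)}$ from the external reference \cite{Rosengren2014-1}. You instead evaluate \eqref{b} at $z=-1$, where again only the minimal-$k_0$ term survives and the remaining factor is $(-2)^{n(n-1)}=2^{n(n-1)}$; combined with $N^{(0)}_0\bigl((n^2+5n+a)/3\bigr)=1$ this yields $q_{n-1}(-1)=2^{n(n-1)}$, and the parity symmetry in \eqref{symmetries} transports this to $z=1$. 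Your route is more self-contained, requiring no input beyond what the paper already establishes; the paper's route avoids invoking the symmetry of $q_n$ but at the cost of citing a non-trivial evaluation of $T$ from elsewhere. (One small slip: in your displayed definition of $\tilde N^{(m)}$ the third argument should be $l_2+m-n$, matching Corollary~\ref{symmetriesofcolors}, rather than $l_2+n-m$; the marginal identities you then write down are nonetheless the correct ones.)
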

\begin{proof}
First consider the minimum number of faces of color $0$, for $m=0$. From the definition, we have $q_{n-1}(0)=1$. Computing $q_{n-1}(0)$ using \eqref{b} yields
\begin{gather*}
\begin{split}
q_{n-1}(z)\big|_{z=0}
&=\sum_{k_0}N^{(0)}_0(k_0) (z(z+1))^{k_0-(n^2+5n+a)/3} (z-1)^{(5n^2+7n+2a)/3-2k_0}\big|_{z=0}
\\
&=N^{(0)}_0\left(\frac{n^2+5n+a}{3}\right)\!.
\end{split}
\end{gather*}
Hence, by symmetry (Corollary~\ref{symmetriesofcolors}),
\begin{gather*}
N^{(0)}_2\left(\frac{n^2+2n+c}{3}\right)=N^{(0)}_1\left(\frac{n^2+5n+c}{3}\right)=N^{(0)}_0\left(\frac{n^2+5n+a}{3}\right)=1.
\end{gather*}

Now consider the maximum number of faces of color $0$, for $m=0$. In the limit $\psi=0$, we have $z=1$. Then \eqref{qnequalst2n} yields that
\begin{gather*}
q_{n-1}(z)\big|_{z=1}=T(2\psi+1, \dots, 2\psi+1)\big|_{\psi=0}.
\end{gather*}
From Section~4 in~\cite{Rosengren2014-1} we get that
\begin{gather*}
T(2\psi+1, \dots, 2\psi+1)\big|_{\psi=0}=2^{n(n-1)}.
\end{gather*}
On the other hand, computing $q_{n-1}(1)$ using \eqref{b} yields
\begin{gather*}
\begin{split}
q_{n-1}(z)\big|_{z=1}
&=\sum_{k_0}N^{(0)}_0(k_0) (z(z+1))^{k_0-(n^2+5n+a)/3} (z-1)^{(5n^2+7n+2a)/3-2k_0}\big|_{z=1}\\
&=N^{(0)}_0\left(\frac{5n^2+7n+2a}{6}\right) 2^{n(n-1)/2}.
\end{split}
\end{gather*}
Hence symmetry yields
\begin{gather*}
N^{(0)}_2\left(\frac{5n^2+n+2c}{6}\right)=N^{(0)}_1\left(\frac{5n^2+7n+2c}{6}\right)=N^{(0)}_0\left(\frac{5n^2+7n+2a}{6}\right)
=2^{n(n-1)/2}.
\end{gather*}

Corollary~\ref{numberofstateswith3colors} yields the desired results.
\end{proof}

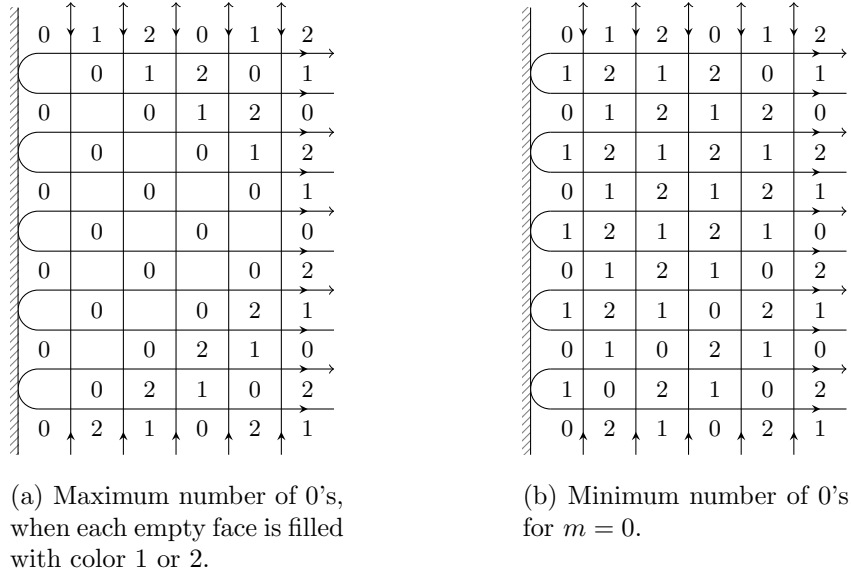
\begin{figure}[t]
\captionsetup[subfloat]{captionskip=10pt}
\centering
\subfloat[Maximum number of $0$'s, when each empty face is filled with color $1$ or $2$. \label{fig:maximum0}]{%	
\begin{tikzpicture}[scale=0.7, font=\footnotesize]
	% first the horizontal lines:
	\foreach \y in {1,...,5} {
		\draw (.38,1.5*\y-.25-.38) -- +(4.3+0.32,0);
		\draw[midarrow={stealth}] (5,1.5*\y-.25-.38) -- +(1,0);
		\draw (.38,1.5*\y-.25+.38) -- +(4.3+0.32,0);
		\draw[midarrow={stealth},->] (5,1.5*\y-.25+.38) -- +(1,0);
		\draw (0.38,1.5*\y-.25+.38) arc (90:270:0.38);	
	}
	
	% then the wall
	\fill[preaction={fill,white},pattern=north east lines, pattern color=gray] (0,0) rectangle (-.15,8.5) ; \draw (0,0) -- (0,8.5);
	
	% finally the vertical lines:
	\foreach \x in {1,...,5} {
		\draw[midarrow={stealth}] (\x,0) -- +(0,.87); % dotted lines for \ket{0}
		\draw (\x,.87) -- +(0,6.76); % vertical lines inside the lattice
		
		\draw[midarrow={stealth reversed}, ->] (\x,7.63) -- +(0,.97);	
	}

 %heights at the faces of the border
		\node at (0.5, 8) {$0$};
 \node at (1.5, 8) {$1$};
 \node at (2.5, 8) {$2$};
 \node at (3.5, 8) {$0$};
 \node at (4.5, 8) {$1$};
 \node at (5.5, 8) {$2$};

		\node at (5.5, 7.25) {$1$};
		\node at (5.5, 6.5) {$0$};
 \node at (5.5, 5.75) {$2$};
		\node at (5.5, 5) {$1$};
		\node at (5.5, 4.25) {$0$};
 \node at (5.5, 3.5) {$2$};
 \node at (5.5, 2.75) {$1$};
 \node at (5.5, 2) {$0$};
 \node at (5.5, 1.25) {$2$};

 \node at (5.5, 0.5) {$1$};
		\node at (4.5, 0.5) {$2$};
		\node at (3.5, 0.5) {$0$};
 \node at (2.5, 0.5) {$1$};
 \node at (1.5, 0.5) {$2$};
 \node at (0.5, 0.5) {$0$};
		
		\node at (0.5, 2) {$0$};
		\node at (0.5, 3.5) {$0$};
		\node at (0.5, 5) {$0$};
		\node at (0.5, 6.5) {$0$};

		%zeroes in bulk
		\node at (1.5, 7.25) {$0$};
		\node at (2.5, 6.5) {$0$};
 \node at (3.5, 5.75) {$0$};
		\node at (4.5, 5) {$0$};
		
		\node at (1.5, 5.75) {$0$};
		\node at (2.5, 5) {$0$};
 \node at (3.5, 4.25) {$0$};
		\node at (4.5, 3.5) {$0$};
		
		\node at (1.5, 1.25) {$0$};
		\node at (2.5, 2) {$0$};
 \node at (3.5, 2.75) {$0$};

		\node at (1.5, 2.75) {$0$};
		\node at (1.5, 4.25) {$0$};
 \node at (2.5, 3.5) {$0$};
		
%Corner triangles
		\node at (2.5, 7.25) {$1$};
 \node at (3.5, 6.5) {$1$};
		\node at (4.5, 5.75) {$1$};
		
 \node at (3.5, 7.25) {$2$};
		\node at (4.5, 6.5) {$2$};
		
		\node at (4.5, 7.25) {$0$};
	
		\node at (2.5, 1.25) {$2$};
 \node at (3.5, 2) {$2$};
		\node at (4.5, 2.75) {$2$};
		
 \node at (3.5, 1.25) {$1$};
		\node at (4.5, 2) {$1$};
		
		\node at (4.5, 1.25) {$0$};
\end{tikzpicture}
	}\hfil
	\subfloat[Minimum number of $0$'s, for $m=0$. \label{fig:minimum0}]{%	
\begin{tikzpicture}[scale=0.7, font=\footnotesize]
	% first the horizontal lines:
	\foreach \y in {1,...,5} {
		\draw (.38,1.5*\y-.25-.38) -- +(4.3+0.32,0);
		\draw[midarrow={stealth}] (5,1.5*\y-.25-.38) -- +(1,0);
		\draw (.38,1.5*\y-.25+.38) -- +(4.3+0.32,0);
		\draw[midarrow={stealth},->] (5,1.5*\y-.25+.38) -- +(1,0);
		\draw (0.38,1.5*\y-.25+.38) arc (90:270:0.38);
	}
	
	% then the wall
	\fill[preaction={fill,white},pattern=north east lines, pattern color=gray] (0,0) rectangle (-.15,8.5) ; \draw (0,0) -- (0,8.5);
	
	% finally the vertical lines:
	\foreach \x in {1,...,5} {
		\draw[midarrow={stealth}] (\x,0) -- +(0,.87); % dotted lines for \ket{0}
		\draw (\x,.87) -- +(0,6.76); % vertical lines inside the lattice
		
		\draw[midarrow={stealth reversed}, ->] (\x,7.63) -- +(0,.97);	
	}

 %heights at the faces of the border
		\node at (0.7, 8) {$0$};
 \node at (1.5, 8) {$1$};
 \node at (2.5, 8) {$2$};
 \node at (3.5, 8) {$0$};
 \node at (4.5, 8) {$1$};
 \node at (5.5, 8) {$2$};

		\node at (5.5, 7.25) {$1$};
		\node at (5.5, 6.5) {$0$};
 \node at (5.5, 5.75) {$2$};
		\node at (5.5, 5) {$1$};
		\node at (5.5, 4.25) {$0$};
 \node at (5.5, 3.5) {$2$};
 \node at (5.5, 2.75) {$1$};
 \node at (5.5, 2) {$0$};
 \node at (5.5, 1.25) {$2$};

 \node at (5.5, 0.5) {$1$};
		\node at (4.5, 0.5) {$2$};
		\node at (3.5, 0.5) {$0$};
 \node at (2.5, 0.5) {$1$};
 \node at (1.5, 0.5) {$2$};
 \node at (0.7, 0.5) {$0$};
		
		\node at (0.7, 2) {$0$};
		\node at (0.7, 3.5) {$0$};
		\node at (0.7, 5) {$0$};
		\node at (0.7, 6.5) {$0$};

		%In the triangle in the bulk
		\node at (1.5, 1.25) {$0$};
		\node at (2.5, 2) {$0$};
 \node at (3.5, 2.75) {$0$};
		\node at (4.5, 3.5) {$0$};
		
		\node at(0.7, 1.25) {$1$};
		\node at (1.5, 2) {$1$};
		\node at (2.5, 2.75) {$1$};
 \node at (3.5, 3.5) {$1$};
		\node at (4.5, 4.25) {$1$};			
		
		\node at (1.5, 2.75) {$2$};
		\node at (2.5, 3.5) {$2$};
 \node at (3.5, 4.25) {$2$};
		\node at (4.5, 5) {$2$};	
		
		\node at(0.7, 2.75) {$1$};
		\node at (1.5, 3.5) {$1$};
		\node at (2.5, 4.25) {$1$};
 \node at (3.5, 5) {$1$};
		
		\node at (1.5, 4.25) {$2$};
		\node at (2.5, 5) {$2$};
 \node at (3.5, 5.75) {$2$};
		
		\node at(0.7, 4.25) {$1$};
		\node at (1.5, 5) {$1$};
		\node at (2.5, 5.75) {$1$};
		
		\node at (1.5, 5.75) {$2$};
		\node at (2.5, 6.5) {$2$};
		
		\node at(0.7, 5.75) {$1$};
		\node at (1.5, 6.5) {$1$};
		
		\node at (1.5, 7.25) {$2$};
		\node at (0.7, 7.25) {$1$};
		
%Corner triangles
		\node at (2.5, 7.25) {$1$};
 \node at (3.5, 6.5) {$1$};
		\node at (4.5, 5.75) {$1$};
		
 \node at (3.5, 7.25) {$2$};
		\node at (4.5, 6.5) {$2$};
		
		\node at (4.5, 7.25) {$0$};
	
		\node at (2.5, 1.25) {$2$};
 \node at (3.5, 2) {$2$};
		\node at (4.5, 2.75) {$2$};
		
 \node at (3.5, 1.25) {$1$};
		\node at (4.5, 2) {$1$};
		
		\node at (4.5, 1.25) {$0$};
\end{tikzpicture}
}\\
\caption{States with the maximum and minimum number of faces with color $0$, for $n=5$.}
\label{fig:worstcases}
\end{figure}

Observe that the minimum and maximum number of faces of color 1 and 2 depend on the number of positive turns, $m$, whereas for color $0$, the minimum and maximum numbers respectively are the same for all $m$.

Some of the results in the theorem above, we can find combinatorially. The maximum of~$k_0$ corresponds to states looking as in Fig.~\ref{fig:maximum0}. There is a triangle in the middle where every second face has a $0$, and every other second face can be filled with either color $1$ or color $2$. For each given configuration of the turns, there is a total of~$\binom{n}{2}$ faces with a choice, which yields $2^{\binom{n}{2}}$ states obtaining the maximum of~$k_0$, which is in line with the result in Corollary~\ref{maxminfaces}.
Considering the states with $0$ positive turns, the minimum of~$k_0$ is depicted in Fig.~\ref{fig:minimum0}. Here only the lower diagonal border of the triangle has zeroes, whereas in the middle, we have a chess board pattern of color 1 and 2. There is only one such state for $m=0$.
For both the maximums and the minimums, the upper and lower right corner triangles can be filled up in such a way that each third diagonal consists of the same color.

For a fixed configuration of the turns, the empty faces on the left boundary in Fig.~\ref{fig:maximum0} are fixed. The minimum of~$k_1$ is the state where the remaining empty faces are filled up with a $2$. Similarly for the minimum of~$k_2$, the empty faces are to be filled up with a $1$.
Considering the states with $m=0$, we can find the maximum of~$k_1$ by putting a $1$ in all the empty faces in Fig.~\ref{fig:maximum0}. Another maximum is in Fig.~\ref{fig:minimum0}. All the states with maximum of~$k_1$ will have the $1$'s in the same place. The faces that can differ are the $\binom{n}{2}$ faces that have a $0$ in Fig.~\ref{fig:maximum0}, and a $2$ in Fig.~\ref{fig:minimum0}. All these faces can have either a $0$ or a $2$, which results in $2^{\binom{n}{2}}$ states with the maximum of~$k_1$ for $m=0$, which is in line with the result in Corollary~\ref{maxminfaces}. For $m=n$, we can find all the states with the maximum of~$k_2$ in a similar way.

For a general $m$, there are $\binom{n}{m}$ ways to choose the turns that should be positive, which then yields the total number of states with the minimum of~$k_1$ and $k_2$, and the maximum of~$k_0$ respectively. All these states are variations of Fig.~\ref{fig:maximum0}. It~is easy to find all these states combinatorially, since the $0$'s are fixed and are not affected by the number in the
turns. It~seems harder to explicitly find all the maximums of~$k_1$ and $k_2$ and all the minimums of~$k_0$ combinatorially for a general $m$, since a change on the face in a turn could force a change on the face beside it, which in turn could force more changes. Nevertheless, algebraically we can find the number of states with the maximum or minimum of~$k_i$, for all colors $i$, using the above corollary.

Kuperberg \cite{Kuperberg2002} stated a formula for counting the number of UASMs, which is equivalent to the number of states in the 8VSOS model with DWBC and reflecting end and in the corresponding three-color model. As a corollary of Theorem~\ref{maintheorem} we can find this number.

\begin{Corollary}[Kuperberg]
\label{numberofstates}
For a fixed $n$, the number of states with $m$ positive turns in the 8VSOS model with DWBC and reflecting end is
\begin{gather*}
A^m_n=\binom{n}{m}\frac{1}{2^n} \prod_{i=0}^{n-1}\frac{(2i+1)!(6i+4)!}{(4i+2)!(4i+3)!}.
\end{gather*}
\end{Corollary}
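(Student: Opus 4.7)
The plan has two stages: first I will use Theorem~\ref{maintheorem} to strip off the $m$-dependence as a binomial prefactor, reducing the problem to evaluating $A^0_n$; second I will appeal to the bijection with VSASMs together with Kuperberg's known product formula.

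For the first stage, I will observe that $A^m_n=Z^{3C}_{n,m}(1,1,1)$, since every state is counted with weight one. At the specialisation $t_0=t_1=t_2=1$, the defining relation for~$z$ in Theorem~\ref{maintheorem} becomes $27=(3z^2+1)^3/(z(z^2-1))^2$, which pins~$z$ down to a specific algebraic number (one readily checks $z=\pm 1/3$) independently of~$m$. Hence every factor on the right-hand side of Theorem~\ref{maintheorem} is $m$-independent except the binomial coefficient, and I get
\begin{equation*}
A^m_n=\binom{n}{m}\,A^0_n.
\end{equation*}
(The same identity is obtainable by summing the identity of Corollary~\ref{numberofstateswith3colors} over all triples $(k_0,k_1,k_2)\in\Z^3$; the substantive input is Theorem~\ref{maintheorem}.)

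For the second stage, the bijection recorded at the end of Section~\ref{3cmodel}, due to Kuperberg~\cite{Kuperberg2002}, identifies the three-colourings with $m=0$ with the VSASMs of size $(2n+1)\times(2n+1)$. Kuperberg's enumeration of these matrices then gives
\begin{equation*}
A^0_n=\frac{1}{2^n}\prod_{i=0}^{n-1}\frac{(2i+1)!\,(6i+4)!}{(4i+2)!\,(4i+3)!},
\end{equation*}
and combining with the previous display yields the claimed closed form for~$A^m_n$.

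The only serious ingredient — the product formula for $A^0_n$ itself — is imported from Kuperberg's work rather than proved afresh, so strictly speaking there is no hard step. The genuinely new observation is that Theorem~\ref{maintheorem} supplies the $\binom{n}{m}$ prefactor with no additional work, promoting Kuperberg's VSASM count to a refined count over~$m$. If instead one wished a purely self-contained derivation from Theorem~\ref{maintheorem}, the main obstacle would become the evaluation of $q_{n-1}(1/3)$ in closed form; this seems to require a genuinely different argument (e.g.\ an analysis of Filali's determinant at the corresponding values of the spectral parameters), and is not attempted here.
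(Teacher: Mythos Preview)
Your proof is correct. The first stage—specialising Theorem~\ref{maintheorem} at $t_0=t_1=t_2=1$ to peel off $\binom{n}{m}$—coincides with the paper's argument.

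The second stage differs. You invoke the bijection with VSASMs and import Kuperberg's product formula for their number directly. The paper instead stays within its own polynomial framework: from Theorem~\ref{maintheorem} at $t_i=1$ one reads $A^0_n=(3/2)^{n(n-1)}q_{n-1}(1/3)$, and then the symmetries~\eqref{symmetries} (combine $z\mapsto-z$ with the fractional-linear one) collapse $q_{n-1}(1/3)$ to $(2/3)^{n(n-1)}$ times the leading coefficient of~$q_{n-1}$, for which the paper cites the closed form from~\cite{BazhanovMangazeev2010}. So the evaluation of $q_{n-1}(1/3)$ that you flag as an obstacle in your final paragraph is in fact handled by this symmetry trick—though the paper still relies on an external source for the leading coefficient itself. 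Your route is shorter and closer in spirit to the attribution; the paper's route buys the extra identification of $A^0_n$ with the top coefficient of~$q_{n-1}$, tying the enumeration back to the polynomial family that is the paper's central object.
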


\begin{proof}
In Theorem~\ref{maintheorem}, put $t_0=t_1=t_2=1$. Then $z=1/3$. Consider $m=0$. Then the sum on the left hand side in Theorem~\ref{maintheorem} counts the number of states with $0$ positive turns.
For~all~$n$, the formula becomes
\begin{gather*}
A^0_n=(3/2)^{n^2-n}q_{n-1}(1/3).
\end{gather*}
Because of the symmetries \eqref{symmetries},
\begin{gather*}
q_{n-1}(1/3)=(2/3)^{n(n-1)} x^{n(n-1)}q_{n-1}(1/x)\big|_{x=0}.
\end{gather*}
Hence for all $n$,
\begin{gather*}
A^0_n=x^{n(n-1)}q_{n-1}(1/x)\big|_{x=0}.
\end{gather*}
{\sloppy
Since $\deg q_{n-1}(z)=n(n-1)$, the number $x^{n(n-1)}q_{n-1}(1/x)\big|_{x=0}$ is the leading coefficient of~$q_{n-1}(z)$. In \cite{BazhanovMangazeev2010} a formula for these numbers is given:
\begin{gather*}
x^{n(n-1)}q_{n-1}\left(\frac{1}{x}\right)\bigg|_{x=0}=\frac{1}{2^n}\prod_{i=0}^{n-1}\frac{(2i+1)!(6i+4)!}{(4i+2)!(4i+3)!}.
\end{gather*}

}

Varying $m$ in the formula in Theorem~\ref{maintheorem}, the only thing that is affected is the binomial coefficient.
Hence
$A^m_n=\binom{n}{m} A^0_n.$
This yields the desired result.
\end{proof}

In the above corollary, $m=0$ corresponds to the number of VSASMs.

\subsection*{Acknowledgements} I would like to thank my supervisor Hjalmar Rosengren and my co-supervisor Jules Lamers for~the numerous hours of support you have given me throughout the whole research process and~while writing this article. I also would like to thank the anonymous referees for many useful comments and suggestions.

\pdfbookmark[1]{References}{ref}
\LastPageEnding

\end{document}